\newtheorem{theorem}{Theorem}[section]
\newtheorem{algorithm}[theorem]{Algorithm}
\newtheorem{assumption}{Assumption}
\newtheorem{definition}[theorem]{Definition}
\newtheorem{lemma}[theorem]{Lemma}
\newtheorem{proposition}{Proposition}[section]
\newtheorem{remark}{Remark}[section]
\def \@seccntformat#1{\csname the#1\endcsname.\quad}
\numberwithin{equation}{section}
\newcommand{\indep}{\raisebox{0.05em}{\rotatebox[origin=c]{90}{$\models$}}}
\DeclareMathOperator*{\argmin}{\arg\!\min}
\DeclareMathOperator*{\argmax}{\arg\!\max}
\DeclareMathOperator{\1}{\mathbbm 1}
\let\footnotesize\normalsize
\begin{document}
\title[Kernel Sample Selection]{Generalized Kernel Ridge Regression for Causal Inference with Missing-at-Random Sample Selection}
\thanks{{\footnotesize{I thank Anna Mikusheva and Whitney Newey for helpful comments.}}}
\author{Rahul Singh}
\thanks{rahul.singh@mit.edu, MIT Department of Economics.} 

\date{November 9, 2021}

\begin{abstract}
I propose kernel ridge regression estimators for nonparametric dose response curves and semiparametric treatment effects in the setting where an analyst has access to a selected sample rather than a random sample; only for select observations, the outcome is observed. I assume selection is as good as random conditional on treatment and a sufficiently rich set of observed covariates, where the covariates are allowed to cause treatment or be caused by treatment---an extension of missingness-at-random (MAR). I propose estimators of means, increments, and distributions of counterfactual outcomes with closed form solutions in terms of kernel matrix operations, allowing treatment and covariates to be discrete or continuous, and low, high, or infinite dimensional. For the continuous treatment case, I prove uniform consistency with finite sample rates. For the discrete treatment case, I prove $\sqrt{n}$ consistency, Gaussian approximation, and semiparametric efficiency.

\textbf{Keywords}: RKHS, dose response, doubly robust, missing-at-random

\end{abstract}

\maketitle

\section{Introduction}\label{section:intro}

In causal inference with observational data, an analyst may have access to a selected sample rather than a random sample, from which the analyst may wish to estimate the dose response curve of a continuous treatment or the scalar treatment effect of a discrete treatment. For selected observations, the outcome is observed; otherwise it is missing. This issue is also called \textit{outcome attrition}, and it may afflict even experimental data collection. I study the setting where the dataset includes a sufficiently rich set of covariates to ameliorate the problem: conditional on treatment and these covariates, selection is as good as random. I allow the covariates of the selection mechanism to either cause treatment or be caused by treatment, an extension of the canonical setting called \textit{missing-at-random} (MAR). I refer to the \textit{static} sample selection case when covariates cause treatment. I refer to the \textit{dynamic} sample selection case when some covariates cause treatment and other covariates are caused by treatment.

For this setting of data limitations (observational study, sample selection) and compensatory data access (sufficiently many and well-placed covariates in the causal graph), I propose a family of estimators for causal inference based on kernel ridge regression. The estimators are easy to compute from closed form expressions, though treatment and covariates are allowed to be discrete or continous and low, high, or infinite dimensional. For the continuous treatment case, I propose dose response curve estimators and prove uniform consistency with finite sample rates under standard RKHS assumptions. For the discrete treatment case, I propose treatment effect estimators with confidence intervals and prove $\sqrt{n}$-consistency, Gaussian approximation, and semiparametric efficiency under standard RKHS assumptions. 

I extend the framework in three ways: (i) from causal parameters of the full population to those of subpopulations and alternative populations; (ii) from means to increments and distributions of counterfactual outcomes; (iii) from static sample selection to dynamic sample selection. Tables~\ref{tab:overview_static} and~\ref{tab:overview_dynamic} summarize my contributions. All estimators and guarantees are new.

\begin{table}[ht]
  \caption{New estimators and guarantees for static sample selection}
  \begin{tabular}{c|cccc}
       \toprule
    Causal parameter & Symbol & Guarantee & Rate & Section \\
    \midrule
    Dose response & ATE & uniform consistency & $n^{-\frac{1}{6}}$ & \ref{section:static}\\
    Dose response with dist. shift & DS & uniform consistency & $n^{-\frac{1}{6}}$ & \ref{section:static}\\
    Conditional dose response & ATT & uniform consistency & $n^{-\frac{1}{6}}$ & \ref{section:static}\\
        Heterogeneous dose response & CATE & uniform consistency & $n^{-\frac{1}{6}}$ & \ref{section:static}\\
      \midrule
     Incremental response & $\nabla$:ATE & uniform consistency & $n^{-\frac{1}{6}}$ & \ref{section:static}\\
    Incremental response with dist. shift & $\nabla$:DS & uniform consistency & $n^{-\frac{1}{6}}$ & \ref{section:static}\\
    Conditional incremental response & $\nabla$:ATT & uniform consistency & $n^{-\frac{1}{6}}$ & \ref{section:static}\\
        Heterogeneous incremental response & $\nabla$:CATE & uniform consistency & $n^{-\frac{1}{6}}$ & \ref{section:static}\\
      \midrule
    Counterfactual dist. & D:ATE & convergence in dist. & $n^{-\frac{1}{6}}$ & \ref{section:dist}\\
    Counterfactual dist. with dist. shift & D:DS & convergence in dist. & $n^{-\frac{1}{6}}$ & \ref{section:dist}\\
        Conditional counterfactual dist. & D:ATT & convergence in dist. & $n^{-\frac{1}{6}}$ & \ref{section:dist}\\
        Heterogeneous counterfactual dist. & D:CATE & convergence in dist. & $n^{-\frac{1}{6}}$ & \ref{section:dist}\\
       \midrule
     Average treatment effect & ATE & Gaussian approx. & $n^{-\frac{1}{2}}$ & \ref{section:static}\\
    Average treatment effect with dist. shift & DS & Gaussian approx. & $n^{-\frac{1}{2}}$ & \ref{section:static}\\
    Average treatment on the treated & ATT & Gaussian approx. & $n^{-\frac{1}{2}}$ & \ref{section:static}\\
        Heterogeneous treatment effect & CATE & Gaussian approx. & $n^{-\frac{1}{2}}$ & \ref{section:static}\\
    \bottomrule
    \end{tabular}
  \label{tab:overview_static}%
\end{table}

\begin{table}[ht]
  \caption{New estimators and guarantees for dynamic sample selection}
  \begin{tabular}{c|cccc}
       \toprule
    Causal parameter & Symbol & Guarantee & Rate & Section \\
    \midrule
   Dose response & ATE & uniform consistency & $n^{-\frac{1}{6}}$ & \ref{section:dynamic}\\
    Dose response with dist. shift & DS & uniform consistency & $n^{-\frac{1}{6}}$ & \ref{section:dynamic}\\
      \midrule
    Counterfactual dist. & D:ATE & convergence in dist. & $n^{-\frac{1}{6}}$ & \ref{section:dist}\\
    Counterfactual dist. with dist. shift & D:DS & convergence in dist. & $n^{-\frac{1}{6}}$ & \ref{section:dist}\\
       \midrule
     Average treatment effect & ATE & Gaussian approx. & $n^{-\frac{1}{2}}$ & \ref{section:dynamic}\\
    Average treatment effect with dist. shift & DS & Gaussian approx. & $n^{-\frac{1}{2}}$ & \ref{section:dynamic}\\
    \bottomrule
    \end{tabular}
  \label{tab:overview_dynamic}%
\end{table}

\subsection{Preview}

I preview the framework with a vignette about the static case. Suppose an analyst observes a random sample in which some observations are selected ($S=1$) and others are not ($S=0$). Selected observations are complete in the sense that they include outcome $Y$, treatment $D$, and covariates $X$. Unselected observations are incomplete in the sense that they include only treatment $D$ and covariates $X$. The challenge for causal inference is that the selection mechanism may be confounded, so naive treatment effect estimation with complete cases may be misleading.

The baseline goal is to infer causal parameters of the full population, though only complete observations from the selected population are accessible. Consider the causal parameter 
$
\theta^{ATE}_0(d):=\mathbb{E}\{Y^{(d)}\}
$, 
where $Y^{(d)}$ is the potential outcome given the intervention $D=d$. With machine learning, one can estimate the conditional expectation function
$
\gamma_0(1,d,x)=\mathbb{E}(Y|S=1,D=d,X=x)
$ because selected observations include $(Y,D,X)$. It turns out that an appropriate formula for the causal parameter is of the form
$
\theta^{ATE}_0(d)=\int \gamma_0(1,d,x)\mathrm{d}\mathbb{Q}
$
where $\mathbb{Q}$ is a distribution that reweights $\gamma_0$ in order to adjust for confounding of the treatment assignment and sample selection mechanisms. This distribution $\mathbb{Q}$ may be estimated from both complete and incomplete observations.

As my first contribution, I prove that nonparametric estimation of a dose response curve from a selected sample can be reduced to the inner product of generalized kernel ridge regressions. Suppose $\gamma_0$ is correctly specified as a function in an RKHS $\mathcal{H}$ over selection $S$, treatment $D$, and covariates $X$. I estimate $\hat{\gamma}$ by classic kernel ridge regression. Next, I prove that the reweighting distribution $\mathbb{Q}$ can be represented by another function $\mu$ in the RKHS. The injective mapping $\mathbb{Q}\mapsto \mu$ is called the kernel mean embedding \cite{smola2007hilbert}. I estimate $\hat{\mu}$ by a procedure that involves generalized kernel ridge regression. Finally, I take the inner product of $\hat{\gamma}$ and $\hat{\mu}$, which corresponds to taking the expectation with respect to the counterfactual distribution $\mathbb{Q}$: $\hat{\theta}^{ATE}(d)=\int \hat{\gamma}(1,d,x)\mathrm{d}\hat{\mathbb{Q}}=\langle \hat{\gamma},\hat{\mu} \rangle_{\mathcal{H}}$.

As my second contribution, I prove that semiparametric inference for a treatment effect from a selected sample can be reduced to combinations of generalized kernel ridge regressions as well. In particular, confidence intervals involve Riesz representers, which can be estimated by generalized kernel ridge regressions and sequential mean embeddings. I construct confidence intervals from the Riesz representers by bias correction and sample splitting. Across settings, I appeal to RKHS geometry to articulate smoothness and spectral decay approximation assumptions for the various generalized kernel ridge regressions. These approximation assumptions generalize the standard smoothness and spectral decay assumptions in RKHS learning theory. 

The structure of the paper is as follows. Section~\ref{sec:related_main} describes related work. Section~\ref{sec:rkhs_main} presents the assumptions, which are standard in RKHS learning theory. Section~\ref{section:static} presents nonparametric and semiparametric theory for the case where covariates cause treatment.  Section~\ref{section:dynamic} presents nonparametric and semiparametric theory for the case where covariates may cause or be caused by treatment. Section~\ref{section:conclusion} concludes. I extend nonparametric results to counterfactual distributions in Appendix~\ref{section:dist}.

\section{Related work}\label{sec:related_main}

\textbf{Sample selection with continuous treatment.} I express dose response curves with static and dynamic sample selection as reweightings of an underlying conditional expectation function, extending the g-formula framework in biostatistics \cite{robins1986new} and the partial means framework in econometrics \cite{newey1994kernel}. In the static case, the reweighting is a simple marginal or conditional distribution. In the dynamic case, the reweighting is a product of intertemporal distributions. Importantly, I reweight for both treatment and selection mechanisms using covariates, a setting sometimes called double selection with missing-at-random (MAR) data \cite{rubin1976inference}. To formulate dose response curves in this way, I generalize nonparametric identification theorems of \cite{huber2012identification,huber2014treatment}. Existing work on continuous treatment studies uses parametric assumptions and proposes parametric estimators \cite{heckman1979sample,hausman1979attrition} or uses conditional moment and rank restrictions and proposes series estimators \cite{das2003nonparametric}. A machine learning dose response estimator for double selection does not appear to exist. I pose as a question for future work how to adapt the dose response estimators to other sample selection problems such as missing-not-at-random data (MNAR) \cite{scharfstein1999adjusting,rotnitzky1998semiparametric}, also called non-ignorable attrition, where auxiliary information such as instrumental variables \cite{das2003nonparametric,huber2012identification} and shadow variables \cite{d2010new,miao2015identification} are used instead.

\textbf{Sample selection with discrete treatment.} I express treatment effects with static and dynamic sample selection as functionals of an underlying conditional expectation function. In the static case, the functionals are linear; in the dynamic case, they are nonlinear.  Importantly, I adjust for both treatment and selection mechanisms \cite{negi2020doubly,bia2020double} using covariates in a multiply robust moment function \cite{robins1994estimation,robins1995semiparametric}. A rich literature provides results adjusting only for the selection mechanism in a doubly robust moment function \cite{robins1994estimation,bang2005doubly}. In the static case, I derive the doubly robust moment function for a broad class of treatment effects, extending the ATE characterization of \cite{bia2020double}. For the dynamic case, I quote the multiply robust moment function \cite{bia2020double}. I combine the multiply robust moments with sample splitting \cite{bickel1982adaptive}, following the principles of targeted maximum likelihood estimation (TMLE) \cite{van2006targeted} and debiased machine learning (DML) \cite{chernozhukov2016locally,chernozhukov2018original,chernozhukov2018global,chernozhukov2018learning}. I combine generalized kernel ridge regressions to estimate nonparametric quantities without explicit density estimation. I prove new, sufficiently fast nonparametric rates to verify abstract rate conditions from \cite{chernozhukov2018original,bia2020double,chernozhukov2021simple}. I pose as a question for future work how to adapt the treatment effect estimators to other sample selection problems where data are MNAR rather than MAR.

\textbf{RKHS in causal inference.} Existing work incorporates the RKHS into static and dynamic causal inference without sample selection. \cite{nie2017quasi} propose the R learner (named for \cite{robinson1988root}) to nonparametrically estimate the heterogenous treatment effect of binary treatment. The authors prove mean square error rates, which \cite{foster2019orthogonal} situate within a broader theoretical framework. \cite{singh2019kernel} study the nonparametric instrumental variable problem and elucidate the role of conditional expectation operators and conditional mean embeddings in causal inference. \cite{kallus2020generalized} proposes kernel optimal matching, a semiparametric estimator of average treatment on the treated with binary treatment, and proves $\sqrt{n}$ consistency. \cite{hirshberg2019minimax} propose a semiparametric balancing weight estimator for average treatment on the treated and its variants, and provide bias aware confidence intervals. \cite{singh2020kernel,singh2020negative,singh2021workshop} propose nonparametric and semiparametric estimators with closed form solutions based on kernel ridge regression for a broad variety of static and dynamic treatment effects including the dose response curve and heterogeneous treatment effect. It appears that this project is the first to propose RKHS algorithms for causal inference with sample selection.

\textbf{RKHS and sample selection.} In the machine learning literature, the term \textit{sample selection} is used to refer to prediction problems in which the training set and testing set are drawn from different distributions and the analyst possibly has access to unlabeled (i.e. excluding outcome) observations from the testing set. A rich variety of nonparametric reweighting techniques have been developed using RKHS methods \cite{huang2006correcting,cortes2008sample,gretton2009covariate}. By contrast, the version of the sample selection problem I study is causal. The goal is to estimate nonparametric dose response curves, semiparametric treatment effects, and counterfactual distributions. To forge a connection between the present work and the problem studied in the machine learning literature, I extend my results to the setting with distribution shift: using labeled and unlabeled observations drawn from a population $\mathbb{P}$ and unlabeled observations drawn from an alternative population $\tilde{\mathbb{P}}$, the analyst seeks to estimate causal parameters for the alternative population.

\textbf{Counterfactual distributions.} Previous work on counterfactual distributions considers static and dynamic settings without sample selection. Many papers study distributional generalizations of average treatment effect (ATE) or average treatment on the treated (ATT) for binary treatment \cite{dinardo1996labor,firpo2007efficient,imbens2009identification,cattaneo2010efficient,chernozhukov2013inference,muandet2020counterfactual}. \cite{singh2020kernel,singh2021workshop} present RKHS estimators for a richer class of static and dynamic treatment effects where treatment may be discrete or continuous. It appears that this project is the first to propose nonparametric algorithms for counterfactual distributions with sample selection.

Unlike previous work, I (i) provide a framework that encompasses nonparametric and semiparametric treatment effects in static and dynamic sample selection problems; (ii) propose new estimators with closed form solutions based on a new RKHS construction; and (iii) prove uniform consistency for the continuous treatment case and $\sqrt{n}$ consistency, Gaussian approximation, and semiparametric efficiency for the discrete treatment case via original arguments that handle low, high, or infinite dimensional data.
\section{RKHS assumptions}\label{sec:rkhs_main}

I summarize RKHS notation, interpretation, and assumptions from \cite[Section 3]{cucker2002mathematical}, \cite[Chapter 4]{steinwart2008support}, \cite[Section 2]{singh2020kernel} and \cite[Section 3]{singh2021workshop}.
A reproducing kernel Hilbert space (RKHS) $\mathcal{H}$ has elements that are functions $f:\mathcal{W}\rightarrow\mathbb{R}$ where $\mathcal{W}$ is a Polish space, i.e a separable and completely metrizable topological space. Let $k:\mathcal{W}\times\mathcal{W}\rightarrow \mathbb{R}$ be a function that is continuous, symmetric, and positive definite. I call $k$ the \textit{kernel}, and I call $\phi:w\mapsto k(w,\cdot)$ the \textit{feature map}. The kernel is the inner product of features in the sense that $k(w,w')=\langle \phi(w),\phi(w')\rangle_{\mathcal{H}}$. The RKHS is the closure of the span of the kernel \cite[Theorem 4.21]{steinwart2008support}.

The key approximation assumptions for the RKHS are smoothness and spectral decay. I articulate the definition of the RKHS as well as these assumptions in terms of the convolution operator in which $k$ serves as the kernel, i.e. $
L:L_{\nu}^2(\mathcal{W})\rightarrow L_{\nu}^2(\mathcal{W}),\; f\mapsto \int k(\cdot,w)f(w)\mathrm{d}\nu(w)
$, where $L^2_{\nu}(\mathcal{W})$ the space of square integrable functions with respect to measure $\nu$. By the spectral theorem, I express the eigendecomposition of the convolution operator $L$ as
$
Lf=\sum_{j=1}^{\infty} \lambda_j\langle \varphi_j,f \rangle_{L^2_{\nu}(\mathcal{W})}\cdot  \varphi_j
$ where $(\lambda_j)$ are weakly decreasing eigenvalues and $(\varphi_j)$ are orthonormal eigenfunctions that form a basis of $L_{\nu}^2(\mathcal{W})$.

First, I define the RKHS in terms of the orthonormal basis $(\varphi_j)$. For comparison, I also define the more familiar space $L^2_{\nu}(\mathcal{W})$ in terms of this orthonormal basis. For any $f,g\in L_{\nu}^2(\mathcal{W})$, write $
f=\sum_{j=1}^{\infty}a_j\varphi_j
$ and $
g=\sum_{j=1}^{\infty}b_j\varphi_j
$. By \cite[Theorem 4]{cucker2002mathematical},
\begin{align*}
 L^2_{\nu}(\mathcal{W})&=\left(f=\sum_{j=1}^{\infty}a_j\varphi_j:\; \sum_{j=1}^{\infty}a_j^2<\infty\right),\quad \langle f,g \rangle_{L^2_{\nu}(\mathcal{W})}=\sum_{j=1}^{\infty} a_jb_j; \\
 \mathcal{H}&=\left(f=\sum_{j=1}^{\infty}a_j\varphi_j:\;\sum_{j=1}^{\infty} \frac{a_j^2}{\lambda_j}<\infty\right),\quad \langle f,g \rangle_{\mathcal{H}}=\sum_{j=1}^{\infty} \frac{a_jb_j}{\lambda_j}.
\end{align*}
In summary, the RKHS $\mathcal{H}$ can be defined as the subset of $L^2_{\nu}(\mathcal{W})$ for which higher order terms in the series $(\varphi_j)$ have a smaller contribution, subject to $\nu$ satisfying the conditions of Mercer's theorem \cite{steinwart2012mercer}.

Next, I define the smoothness and spectral decay assumptions in terms of the eigenvalues $(\lambda_j)$. Denote the statistical target $f_0$ that is estimated in the RKHS $\mathcal{H}$. A smoothness assumption on $f_0$ can be formalized as \begin{equation}\label{eq:prior}
    f_0\in \mathcal{H}^c:=\left(f=\sum_{j=1}^{\infty}a_j\varphi_j:\;\sum_{j=1}^{\infty} \frac{a_j^2}{\lambda^c_j}<\infty\right)\subset \mathcal{H},\quad c\in(1,2].
\end{equation}
In words, $f_0$ is well approximated by the leading terms in the series $(\varphi_j)$. A larger value of $c$ corresponds to a smoother target $f_0$. This condition is also called the source condition. A spectral decay assumption on $f_0$ can be formalized as $\lambda_j\asymp j^{-b}$. A larger value of $b$ corresponds to a faster rate of spectral decay and therefore a lower effective dimension. Both $(b,c)$ are joint assumptions on the kernel and data distribution \cite{smale2007learning,caponnetto2007optimal,carrasco2007linear}. 

To interpret smoothness and spectral decay, I recap the classic example of Sobolev spaces. Let $\mathcal{W}\subset \mathbb{R}^p$. Denote by $\mathbb{H}_2^{\nu}$ the Sobolev space with $\nu>\frac{p}{2}$ derivatives that are square integrable. This space is an RKHS generated by the Mat\`ern kernel. Suppose $\mathcal{H}=\mathbb{H}_2^{\nu}$ with $\nu>\frac{p}{2}$ is chosen as the RKHS for estimation. If $f_0\in \mathbb{H}_2^{\nu_0}$, then $c=\frac{\nu_0}{\nu}$ \cite{fischer2017sobolev}; $c$ quantifies the smoothness of $f_0$ relative to $\mathcal{H}$. In this Sobolev space, $b=\frac{2\nu}{p}>1$ \cite{edmunds2008function}. The effective dimension is increasing in the original dimension $p$ and decreasing in the degree of smoothness $\nu$.

Finally, I state the five assumptions I place in this paper, generalizing the standard RKHS learning theory assumptions for sample selection models.

\textbf{Identification.} I assume selection on observables and missingness-at-random (MAR) in order to express dose response curves as reweightings of the conditional expectation function of outcome given selection, treatment, and covariates. For distribution shift settings, I assume the shift is only in the distribution of selection, treatment, and covariates.

\textbf{RKHS regularity.} I construct an tensor product RKHS for the conditional expectation function that facilitates the analysis of partial means. The RKHS construction requires kernels that are bounded and characteristic. For incremental responses, I require kernels that are differentiable.

\textbf{Original space regularity.} I allow treatment $D$ and covariates $X$ to be discrete or continuous and low, high, or infinite dimensional. I require that each of these variables is supported on a Polish space. For simplicity, I assume outcome $Y$ is bounded.

\textbf{Smoothness.} I estimate the conditional expectation function $\gamma_0$ by kernel ridge regression and I estimate the kernel mean embedding $\mu$ by a kernel mean or generalized kernel ridge regression. To analyze the bias from ridge regularization, I assume smoothness of each object estimated by a (generalized) kernel ridge regression.

\textbf{Effective dimension.} For semiparametric inference, I assume an effective dimension condition in the RKHS construction. I unconver a double spectral robustness: some kernels in the RKHS construction may not satisfy the effective dimension condition, as long as other kernels do.

The initial four assumptions suffice for uniformly consistent nonparametric estimation. I instantiate these assumptions for static sample selection (Section~\ref{section:static}), dynamic sample selection (Section~\ref{section:dynamic}), and counterfactual distributions (Appendix~\ref{section:dist}). The fifth assumption is not necessary for uniform consistency, but is necessary for semiparametric inference. I instantiate this assumption for static sample selection (Section~\ref{section:static}) and dynamic sample selection (Section~\ref{section:dynamic}).

\section{Static sample selection}\label{section:static}

\subsection{Causal parameters}

I denote an observation by $(SY,S,D,X)$, which encodes both the selected case $(S=1)$ and unselected case $(S=0)$. In this section, I study a basic sample selection model in which the selection mechanism is determined by baseline covariates only. In other words, covariates may cause treatment but may not be caused by treatment. I denote the counterfactual outcome $Y^{(d)}$ given a hypothetical intervention on treatment $D=d$. I consider the enitre class of static causal parameters studied by \cite{singh2020kernel}.

\begin{definition}[Causal parameters]\label{def:target}
I define the following dose response curves, incremental response curves, counterfactual distributions, and treatment effects.
\begin{enumerate}
    \item $\theta_0^{ATE}(d):=\mathbb{E}[Y^{(d)}]$ is the counterfactual mean outcome given intervention $D=d$ for the entire population.
     \item $ \theta_0^{DS}(d,\tilde{\mathbb{P}}):=\mathbb{E}_{\tilde{\mathbb{P}}}[Y^{(d)}]$ is the counterfactual mean outcome given intervention $D=d$ for an alternative population with data distribution $\tilde{\mathbb{P}}$ (elaborated in Assumptions~\ref{assumption:covariate} and~\ref{assumption:covariate_planning}).
    \item $ \theta_0^{ATT}(d,d'):=\mathbb{E}[Y^{(d')}|D=d]$ is the counterfactual mean outcome given intervention $D=d'$ for the subpopulation who actually received treatment $D=d$.
     \item $\theta_0^{CATE}(d,v):=\mathbb{E}[Y^{(d)}|V=v]$ is the counterfactual mean outcome given intervention $D=d$ for the subpopulation with subcovariate value $V=v$.
\end{enumerate}
Likewise I define incremental responses, e.g. $\theta_0^{\nabla:ATE}(d):=\mathbb{E}[\nabla_d Y^{(d)}]$ and $\theta_0^{\nabla:ATT}(d,d'):=\mathbb{E}[\nabla_{d'} Y^{(d')}|D=d]$. See Appendix~\ref{section:dist} for counterfactual distributions.
\end{definition}

When treatment $D$ (and subcovariate $V$) are discrete, these causal parameters are treatment effects. When treatment $D$ (or subcovariate $V$) is continuous, these causal parameters are dose response curves. For the continuous treatment case, I additionally consider incremental response curves. For both cases, I consider counterfactual distributions. My results for means, increments, and distributions of potential outcomes immediately imply results for differences thereof. See \cite{singh2020kernel} for further discussion of these causal parameters.

\subsection{Identification}

\cite{huber2012identification,huber2014treatment} articulates distribution-free sufficient conditions under which average treatment effect with discrete treatment can be measured from observations $(SY,S,D,X)$ in a selected sample. I refer to this collection of sufficient conditions as \textit{static sample selection on observables} and demonstrate that it identifies the entire class of causal parameters in Definition~\ref{def:target}.

\begin{assumption}[Static sample selection on observables]\label{assumption:selection_static}
Assume
\begin{enumerate}
    \item No interference: if $D=d$ then $Y=Y^{(d)}$.
    \item Conditional exchangeability: $\{Y^{(d)}\}\indep D |X$ and $\{Y^{(d)}\}\indep S|D,X$.
    \item Overlap: if $f(x)>0$ then $f(d|x)>0$ and if $f(d,x)>0$ then $\mathbb{P}(S=1|d,x)>0$;  where $f(x)$, $f(d|x)$, and $f(d,x)$, are densities.
\end{enumerate}
\end{assumption}

No interference (also called the stable unit treatment value assumption) rules out network effects (also called spillovers). Conditional exchangeability states that conditional on covariates, treatment assignment is as good as random. Moreover, conditional on treatment and covariates, selection is as good as random; outcomes are missing-at-random (MAR) \cite{rubin1976inference}. Overlap ensures that there is no covariate stratum $X=x$ such that treatment has a restricted support, and there is no treatment-covariate stratum $(D,X)=(d,x)$ such that selection is impossible.

For $\theta_0^{DS}$, I also make a standard assumption in transfer learning.
\begin{assumption}[Static distribution shift]\label{assumption:covariate}
Assume
\begin{enumerate}
    \item $\tilde{\mathbb{P}}(Y,S,D,X)=\mathbb{P}(Y|S,D,X)\tilde{\mathbb{P}}(S,D,X)$;
    \item $\tilde{\mathbb{P}}(S,D,X)$ is absolutely continuous with respect to $\mathbb{P}(S,D,X)$.
\end{enumerate}
\end{assumption}
The difference in population distributions $\mathbb{P}$ and $\tilde{\mathbb{P}}$ is only in the distribution of selection, treatment, and covariates. Moreover, the support of $\mathbb{P}$ contains the support of $\tilde{\mathbb{P}}$.  An immediate consequence is that the regression function for complete cases $\mathbb{E}[Y|S=1,D=d,X=x]$ remains the same across the different populations $\mathbb{P}$ and $\tilde{\mathbb{P}}$. 

I present a general identification result below. Whereas \cite{huber2012identification,huber2014treatment} identifies $\theta_0^{ATE}(d)$, I identify the entire class of causal parameters in Definition~\ref{def:target}, which includes many new results. I prove that each causal parameter is a reweighting of the regression using complete cases, i.e.
$
\mathbb{E}(Y|S=1,D=d,X=x)
$. This nuance is critical, since the data limitations of the problem setting imply that $\mathbb{E}(Y|S=1,D=d,X=x)$ is estimable while $\mathbb{E}(Y|S=0,D=d,X=x)$ is not.

\begin{remark}[Regression notation]\label{remark:notation}
To maintain the semantics of the observation model $(SY,S,D,X)$, I formally define $\gamma_0(s,d,x):=\mathbb{E}[SY|S=s,D=d,X=x]$. Because $\gamma_0(1,d,x)=\mathbb{E}[SY|S=1,D=d,X=x]=\mathbb{E}(Y|S=1,D=d,X=x)$, I use $\mathbb{E}[SY|S=1,D=d,X=x]$ and $\mathbb{E}(Y|S=1,D=d,X=x)$ interchangeably.
\end{remark}

\begin{theorem}[Identification of causal parameters under static sample selection]\label{theorem:id_static}
If Assumption~\ref{assumption:selection_static} holds then
\begin{enumerate}
    \item $\theta_0^{ATE}(d)=\int \gamma_0(1,d,x)\mathrm{d}\mathbb{P}(x)$ \cite{huber2012identification,huber2014treatment}.
    \item If in addition Assumption~\ref{assumption:covariate} holds, then $\theta_0^{DS}(d,\tilde{\mathbb{P}})=\int \gamma_0(1,d,x)\mathrm{d}\tilde{\mathbb{P}}(x)$.
    \item $\theta_0^{ATT}(d,d')=\int \gamma_0(1,d',x)\mathrm{d}\mathbb{P}(x|d)$.
    \item $\theta_0^{CATE}(d,v)=\int \gamma_0(1,d,v,x)\mathrm{d}\mathbb{P}(x|v)$.
\end{enumerate}
For $\theta_0^{CATE}$, I use $\gamma_0(1,d,v,x)=\mathbb{E}[Y|S=1,D=d,V=v,X=x]
$. Likewise I identify incremental effects, e.g. $\theta_0^{\nabla:ATE}(d)=\int \nabla_d \gamma_0(1,d,x)\mathrm{d}\mathbb{P}(x)$. See Appendix~\ref{section:dist} for counterfactual distributions. 
\end{theorem}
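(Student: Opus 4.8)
The plan is to prove each identity in turn by a standard chain of three reductions: (i) rewrite the counterfactual mean via no interference and conditional exchangeability of $D$ given the relevant covariates; (ii) restrict to complete cases $S=1$ using the second exchangeability statement $\{Y^{(d)}\}\indep S\mid D,X$; and (iii) reassemble the result as an integral of $\gamma_0(1,d,\cdot)$ against the appropriate mixing distribution, invoking overlap to ensure all conditional expectations are well defined. Concretely, for $\theta_0^{ATE}(d)=\mathbb{E}[Y^{(d)}]$, I would iterate expectations over $X$ to get $\mathbb{E}[Y^{(d)}]=\int \mathbb{E}[Y^{(d)}\mid X=x]\,\mathrm{d}\mathbb{P}(x)$, then use $\{Y^{(d)}\}\indep D\mid X$ to write $\mathbb{E}[Y^{(d)}\mid X=x]=\mathbb{E}[Y^{(d)}\mid D=d,X=x]$ (valid since $f(d\mid x)>0$ by overlap), then use $\{Y^{(d)}\}\indep S\mid D,X$ to further condition on $S=1$ (valid since $\mathbb{P}(S=1\mid d,x)>0$), and finally apply no interference to replace $Y^{(d)}$ by $Y$, obtaining $\mathbb{E}[Y^{(d)}\mid X=x]=\mathbb{E}[Y\mid S=1,D=d,X=x]=\gamma_0(1,d,x)$ by Remark~\ref{remark:notation}.

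The remaining three cases follow the same template with only the outer averaging distribution changed. For $\theta_0^{ATT}(d,d')=\mathbb{E}[Y^{(d')}\mid D=d]$, I would first note that $\{Y^{(d')}\}\indep D\mid X$ lets me write $\mathbb{E}[Y^{(d')}\mid D=d,X=x]=\mathbb{E}[Y^{(d')}\mid X=x]$, so iterating over $X$ within the stratum $D=d$ gives $\mathbb{E}[Y^{(d')}\mid D=d]=\int\mathbb{E}[Y^{(d')}\mid X=x]\,\mathrm{d}\mathbb{P}(x\mid d)$, and then the same two-step reduction to complete cases at treatment level $d'$ yields the integrand $\gamma_0(1,d',x)$. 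For $\theta_0^{CATE}(d,v)=\mathbb{E}[Y^{(d)}\mid V=v]$, I would treat $(V,X)$ jointly as the conditioning covariate set, iterate over $X$ within the stratum $V=v$ to get $\int\mathbb{E}[Y^{(d)}\mid V=v,X=x]\,\mathrm{d}\mathbb{P}(x\mid v)$, and reduce to $\gamma_0(1,d,v,x)$. The incremental-response identities, e.g. $\theta_0^{\nabla:ATE}(d)=\int\nabla_d\gamma_0(1,d,x)\,\mathrm{d}\mathbb{P}(x)$, follow by differentiating the ATE identity in $d$ and exchanging the derivative with the integral, which is justified under the smoothness/differentiability conditions flagged in the RKHS assumptions (to be made precise when the estimator is analyzed).

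For $\theta_0^{DS}(d,\tilde{\mathbb{P}})=\mathbb{E}_{\tilde{\mathbb{P}}}[Y^{(d)}]$, the argument is identical except that the outer average is taken against $\tilde{\mathbb{P}}(x)$; the key point is that Assumption~\ref{assumption:covariate}(1) forces $\mathbb{P}(Y\mid S,D,X)=\tilde{\mathbb{P}}(Y\mid S,D,X)$, so the regression function $\gamma_0(1,d,x)$ computed under $\mathbb{P}$ coincides with the one under $\tilde{\mathbb{P}}$, while Assumption~\ref{assumption:covariate}(2) (absolute continuity) ensures the reweighting integral is well defined and that the conditional expectations appearing in the reduction are not evaluated outside the support of $\mathbb{P}$. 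I anticipate the only subtle point—beyond bookkeeping of which variables play the role of the adjustment set in each case—is being careful that each conditioning event has positive probability (or density), which is exactly what the overlap condition in Assumption~\ref{assumption:selection_static}(3) and the absolute-continuity condition in Assumption~\ref{assumption:covariate}(2) are there to guarantee; everything else is a routine application of the tower property and the stated conditional-independence relations.
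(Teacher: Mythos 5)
Your proof of the four mean identities is essentially the paper's argument read in the opposite direction: the paper starts from $\gamma_0(1,d,x)=\mathbb{E}[Y\mid S=1,D=d,X=x]$ and strips the conditioning on $S$ and then $D$ via $\{Y^{(d)}\}\indep S\mid D,X$ and $\{Y^{(d)}\}\indep D\mid X$ to reach $\mathbb{E}[Y^{(d)}\mid X=x]$, then applies the tower property with the appropriate outer measure ($\mathbb{P}(x)$, $\tilde{\mathbb{P}}(x)$, $\mathbb{P}(x\mid d)$, $\mathbb{P}(x\mid v)$), whereas you build the conditioning up from the counterfactual mean; the ATT, CATE, and DS cases (including the invariance of the complete-case regression under $\tilde{\mathbb{P}}$ and the role of overlap and absolute continuity) match the paper step for step. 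The one place you genuinely diverge is the incremental responses: you propose to differentiate the already-established identity in the treatment argument and exchange $\nabla_d$ with the integral, and implicitly also with the population expectation, since $\theta_0^{\nabla:ATE}(d)$ is defined as $\mathbb{E}[\nabla_d Y^{(d)}]$ rather than $\nabla_d\,\mathbb{E}[Y^{(d)}]$. The paper instead reruns the identification in nonseparable notation $Y=Y(D,\eta)$, showing $\gamma_0(1,d,x)=\int Y(d,\eta)\,\mathrm{d}\mathbb{P}(\eta\mid x)$ with a $d$-free mixing measure before differentiating, precisely so that the derivative never hits a $d$-dependent conditioning and the needed interchange is transparent. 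Your route is shorter and delivers the same formula, but be aware that the interchange you invoke is a regularity condition on the map $d\mapsto Y^{(d)}$ (a dominated-convergence type hypothesis on the data-generating process), not something supplied by the RKHS differentiability assumptions you cite---those concern $k_{\mathcal{D}}$ and $\gamma_0\in\mathcal{H}$ and are not in force in this identification theorem; the paper's own proof makes an analogous unstated interchange over $\eta$, so the two arguments sit at a comparable level of rigor.
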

See Appendix~\ref{sec:id} for the proof. 

\subsection{Algorithm: Continuous treatment}

Next, I propose nonparametric algorithm for the continuous treatment case and semiparametric algorithms for the discrete treatment case. In preparation, I propose the following RKHS construction. Define scalar valued RKHSs for selection $S$, treatment $D$, and covariates $X$. For example, the RKHS for selection is $\mathcal{H}_{\mathcal{S}}$ with feature map $\phi(s)=\phi_{\mathcal{S}}(s)$. I construct the tensor product RKHS $\mathcal{H}=\mathcal{H}_{\mathcal{S}}\otimes \mathcal{H}_{\mathcal{D}} \otimes \mathcal{H}_{\mathcal{X}}$ with feature map $\phi(s)\otimes \phi(d) \otimes \phi(x)$. In this construction, the kernel for $\mathcal{H}$ is $k(s,d,x;s',d',x')=k_\mathcal{S}(s,s')\cdot k_\mathcal{D}(d,d')\cdot k_{\mathcal{X}}(x,x')$.

For the continuous treatment case, I assume $\gamma_0\in\mathcal{H}$. Therefore by the reproducing property
$
\gamma_0(s,d,x)=\langle \gamma_0, \phi(s)\otimes \phi(d)\otimes \phi(x)\rangle_{\mathcal{H}} 
$. Likewise for $\theta_0^{CATE}$, with the extended defintion $\mathcal{H}=\mathcal{H}_{\mathcal{S}}\otimes \mathcal{H}_{\mathcal{D}} \otimes \mathcal{H}_{\mathcal{V}}\otimes \mathcal{H}_{\mathcal{X}}$. I place regularity conditions on this RKHS construction in order to represent the causal parameters as inner products in $\mathcal{H}$. In anticipation of Section~\ref{section:dynamic} and Appendix~\ref{section:dist}, I include conditions for a follow-up covariate RKHS $\mathcal{H}_{\mathcal{M}}$ and an outcome RKHS $\mathcal{H}_{\mathcal{Y}}$ in parentheses.

\begin{assumption}[RKHS regularity conditions]\label{assumption:RKHS}
Assume 
\begin{enumerate}
    \item $k_{\mathcal{S}}$ is bounded. $k_{\mathcal{D}}$, $k_{\mathcal{V}}$, $k_{\mathcal{X}}$ (and $k_{\mathcal{M}}$, $k_{\mathcal{Y}}$) are continuous and bounded. Formally,
    $
    \sup_{s\in\mathcal{S}}\|\phi(s)\|_{\mathcal{H}_{\mathcal{S}}}\leq \kappa_s$, 
    $ \sup_{d\in\mathcal{D}}\|\phi(d)\|_{\mathcal{H}_{\mathcal{D}}}\leq \kappa_d$, 
    $ \sup_{v\in\mathcal{V}}\|\phi(v)\|_{\mathcal{H}_{\mathcal{V}}}\leq \kappa_v$,
    $ \sup_{x\in\mathcal{X}}\|\phi(x)\|_{\mathcal{H}_{\mathcal{X}}}\leq \kappa_x
    $ \{and $ \sup_{m\in\mathcal{M}}\|\phi(m)\|_{\mathcal{H}_{\mathcal{M}}}\leq \kappa_m$, $ \sup_{y\in\mathcal{Y}}\|\phi(y)\|_{\mathcal{H}_{\mathcal{Y}}}\leq \kappa_y$\}.
    \item $\phi(s)$, $\phi(d)$, $\phi(v)$, $\phi(x)$ \{and $\phi(m)$, $\phi(y)$\} are measurable.
    \item $k_{\mathcal{X}}$ (and $k_{\mathcal{M}}$, $k_{\mathcal{Y}}$) are characteristic.
\end{enumerate}
For incremental responses, further assume $\mathcal{D}\subset \mathbb{R}$ is an open set and $\nabla_d  \nabla_{d'} k_{\mathcal{D}}(d,d')$ exists and is continuous, hence $\sup_{d\in\mathcal{D}}\|\nabla_d\phi(d)\|_{\mathcal{H}}\leq \kappa_d'$. For the semiparametric case, take $k_{\mathcal{D}}(d,d')=\1_{d=d'}$ instead.
\end{assumption}
Popular kernels are continuous and bounded, and measurability is a weak condition. The characteristic property ensures injectivity of the mean embeddings \cite{sriperumbudur2010relation}, so that the encoding of the reweighting distributions will be without loss. For example, the Gaussian kernel is characteristic over a continuous domain.

\begin{theorem}[Representation of dose and incremental response curves via kernel mean embeddings]\label{theorem:representation_static}
Suppose the conditions of Theorem~\ref{theorem:id_static} hold. Further suppose Assumption~\ref{assumption:RKHS} holds and $\gamma_0\in\mathcal{H}$. Then
\begin{enumerate}
    \item $\theta_0^{ATE}(d)=\langle \gamma_0, \phi(1)\otimes \phi(d)\otimes \mu_x\rangle_{\mathcal{H}} $ where $\mu_x:=\int\phi(x) \mathrm{d}\mathbb{P}(x) $.
    \item $\theta_0^{DS}(d,\tilde{\mathbb{P}})=\langle \gamma_0, \phi(1)\otimes\phi(d)\otimes \nu_x\rangle_{\mathcal{H}} $ where $\nu_x:=\int\phi(x) \mathrm{d}\tilde{\mathbb{P}}(x) $.
    \item $\theta_0^{ATT}(d,d')=\langle \gamma_0, \phi(1)\otimes\phi(d')\otimes \mu_x(d)\rangle_{\mathcal{H}} $ where $\mu_x(d):=\int\phi(x) \mathrm{d}\mathbb{P}(x|d)$.
    \item $\theta_0^{CATE}(d,v)=\langle \gamma_0, \phi(1)\otimes \phi(d)\otimes \phi(v)\otimes \mu_{x}(v)\rangle_{\mathcal{H}} $ where $\mu_{x}(v):= \int \phi(x) \mathrm{d}\mathbb{P}(x|v)$.
\end{enumerate}
Likewise for incremental responses, e.g. $\theta_0^{\nabla:ATE}(d)=\langle \gamma_0, \phi(1)\otimes\nabla_d\phi(d)\otimes \mu_x\rangle_{\mathcal{H}} $. See Appendix~\ref{section:dist} for counterfactual distributions.
\end{theorem}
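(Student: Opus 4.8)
The plan is to combine the identification formulas from Theorem~\ref{theorem:id_static} with the reproducing property in the tensor product RKHS $\mathcal{H}=\mathcal{H}_{\mathcal{S}}\otimes\mathcal{H}_{\mathcal{D}}\otimes\mathcal{H}_{\mathcal{X}}$, and then to exchange the order of integration (respectively differentiation) with the Hilbert space inner product. Since $\gamma_0\in\mathcal{H}$, the reproducing property gives pointwise $\gamma_0(1,d,x)=\langle\gamma_0,\phi(1)\otimes\phi(d)\otimes\phi(x)\rangle_{\mathcal{H}}$. For $\theta_0^{ATE}(d)=\int\gamma_0(1,d,x)\mathrm{d}\mathbb{P}(x)$, I would first check that $x\mapsto\phi(x)$ is Bochner integrable in $\mathcal{H}_{\mathcal{X}}$: it is measurable by Assumption~\ref{assumption:RKHS}(2) and $\|\phi(x)\|_{\mathcal{H}_{\mathcal{X}}}\leq\kappa_x$ by Assumption~\ref{assumption:RKHS}(1), so $\int\|\phi(x)\|_{\mathcal{H}_{\mathcal{X}}}\mathrm{d}\mathbb{P}(x)\leq\kappa_x<\infty$. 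Hence the kernel mean embedding $\mu_x:=\int\phi(x)\mathrm{d}\mathbb{P}(x)$ exists as an element of $\mathcal{H}_{\mathcal{X}}$, and by the defining property of the Bochner integral it commutes with the bounded linear functional $u\mapsto\langle\gamma_0,\phi(1)\otimes\phi(d)\otimes u\rangle_{\mathcal{H}}$. Therefore $\theta_0^{ATE}(d)=\int\langle\gamma_0,\phi(1)\otimes\phi(d)\otimes\phi(x)\rangle_{\mathcal{H}}\mathrm{d}\mathbb{P}(x)=\langle\gamma_0,\phi(1)\otimes\phi(d)\otimes\mu_x\rangle_{\mathcal{H}}$.

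The remaining means $\theta_0^{DS},\theta_0^{ATT},\theta_0^{CATE}$ follow by the identical argument with $\mathbb{P}(x)$ replaced by $\tilde{\mathbb{P}}(x)$, $\mathbb{P}(x\mid d)$, $\mathbb{P}(x\mid v)$ respectively; in each case boundedness of $\phi$ guarantees Bochner integrability against the relevant (conditional) distribution, so the embeddings $\nu_x$, $\mu_x(d)$, $\mu_x(v)$ are well defined and the exchange is licensed. For $\theta_0^{CATE}$ one additionally carries along the extra fixed tensor slot $\phi(v)$, which is harmless. The characteristic assumption on $k_{\mathcal{X}}$ is not needed for these identities per se, but it ensures each embedding uniquely encodes its distribution, so the representation loses no information.

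For the incremental responses the additional step is to move $\nabla_d$ through the inner product, i.e. to show $\nabla_d\gamma_0(1,d,x)=\langle\gamma_0,\phi(1)\otimes(\nabla_d\phi(d))\otimes\phi(x)\rangle_{\mathcal{H}}$. Under the assumption that $\nabla_d\nabla_{d'}k_{\mathcal{D}}(d,d')$ exists and is continuous, $\nabla_d\phi(d)$ is a well-defined element of $\mathcal{H}_{\mathcal{D}}$ with $\|\nabla_d\phi(d)\|_{\mathcal{H}_{\mathcal{D}}}\leq\kappa_d'$ by the standard derivative-reproducing results, and the difference quotients $\{\phi(d+h)-\phi(d)\}/h$ converge to $\nabla_d\phi(d)$ in $\mathcal{H}_{\mathcal{D}}$-norm; continuity of the inner product then yields the claimed identity, after which the integral-exchange argument above applies verbatim to give $\theta_0^{\nabla:ATE}(d)=\langle\gamma_0,\phi(1)\otimes\nabla_d\phi(d)\otimes\mu_x\rangle_{\mathcal{H}}$, and similarly for the other incremental curves.

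The main obstacle is the interchange of limiting operations with the inner product: verifying Bochner integrability of the feature maps against the various marginal and conditional distributions, so that the mean embeddings genuinely exist in the component RKHSs, and, for the incremental case, $\mathcal{H}_{\mathcal{D}}$-norm convergence of the difference quotients. Both reduce to the boundedness and measurability conditions in Assumption~\ref{assumption:RKHS} together with dominated convergence, so no delicate estimates are required; the only care needed is in stating measurability of the conditional embeddings $\mu_x(d),\mu_x(v)$.
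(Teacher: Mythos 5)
Your proposal is correct and takes essentially the same route as the paper's proof: identification (Theorem~\ref{theorem:id_static}) plus the reproducing property for $\gamma_0\in\mathcal{H}$, with boundedness and measurability of the feature maps (Assumption~\ref{assumption:RKHS}) guaranteeing Bochner integrability so that expectation and inner product can be exchanged, applied verbatim to $\mathbb{P}(x)$, $\tilde{\mathbb{P}}(x)$, $\mathbb{P}(x|d)$, and $\mathbb{P}(x|v)$. The incremental case is also handled identically; where you spell out the difference-quotient argument for $\nabla_d\phi(d)$, the paper simply cites \cite[Lemma 4.34]{steinwart2008support} for existence, continuity, and Bochner integrability of the derivative feature map.
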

See Appendix~\ref{section:alg_deriv} for the proof. The kernel mean embedding $\mu_x:=\int\phi(x) \mathrm{d}\mathbb{P}(x)$ encodes the distribution $\mathbb{P}(x)$ as an element $\mu_x\in\mathcal{H}_{\mathcal{X}}$ \cite{smola2007hilbert}. The inner product representation suggests estimators that are inner products, e.g. $\hat{\theta}^{ATE}(d)=\langle \hat{\gamma}, \phi(1)\otimes \phi(d)\otimes \hat{\mu}_x\rangle_{\mathcal{H}}$ where $\hat{\gamma}$ is a standard kernel ridge regression using the selected sample and $\hat{\mu}_x$ is an empirical mean. When a conditional distribution is encoded by a conditional mean embedding such as $\mu_x(d):=\int\phi(x) \mathrm{d}\mathbb{P}(x|d)$, I estimate $\hat{\mu}_x(d)$ by a generalized kernel ridge regression.

\begin{algorithm}[Nonparametric estimation of dose and incremental response curves]\label{algorithm:static}
Denote the empirical kernel matrices
$
K_{SS},$ $K_{DD},$ $K_{VV},$ $K_{XX}\in\mathbb{R}^{n\times n}
$
calculated from observations drawn from population $\mathbb{P}$. Let $\{\tilde{X}_i\}_{i\in[\tilde{n}]}$ be observations drawn from population $\tilde{\mathbb{P}}$. Denote by $\odot$ the elementwise product. Dose response estimators have the closed form solutions
\begin{enumerate}
    \item $\hat{\theta}^{ATE}(d)=\frac{1}{n}\sum_{i=1}^n (S\odot Y)^{\top}(K_{SS}\odot K_{DD}\odot K_{XX}+n\lambda I)^{-1}(K_{S1}\odot K_{Dd}\odot K_{Xx_i})  $;
     \item $\hat{\theta}^{DS}(d,\tilde{\mathbb{P}})=\frac{1}{\tilde{n}}\sum_{i=1}^{\tilde{n}} (S\odot Y)^{\top}(K_{SS}\odot K_{DD}\odot K_{XX}+n\lambda I)^{-1}(K_{S1}\odot K_{Dd}\odot K_{X\tilde{x}_i}) $;
    \item $\hat{\theta}^{ATT}(d,d')=(S\odot Y)^{\top}(K_{SS}\odot K_{DD}\odot K_{XX}+n\lambda I)^{-1}(K_{S1}\odot K_{Dd'}\odot [K_{XX}(K_{DD}+n\lambda_1 I)^{-1}K_{Dd}])$;
    \item $\hat{\theta}^{CATE}(d,v)=(S\odot Y)^{\top}(K_{SS}\odot K_{DD}\odot K_{VV}\odot K_{XX} +n\lambda I)^{-1}(K_{S1}\odot K_{Dd}\odot K_{Vv}\odot K_{XX}(K_{VV}+n\lambda_2 I)^{-1}K_{Vv} ) $;
\end{enumerate}
where $(\lambda,\lambda_1,\lambda_2)$ are ridge regression penalty hyperparameters.  Likewise for incremental responses, e.g. $\hat{\theta}^{\nabla:ATE}(d)=\frac{1}{n}\sum_{i=1}^n (S\odot Y)^{\top}(K_{S1}\odot K_{DD}\odot K_{XX}+n\lambda I)^{-1}(\nabla_d K_{D{d}}\odot K_{Xx_i})  $ where $[\nabla_d  K_{D{d}}]_i=\nabla_d k(D_i,d)$. See Appendix~\ref{section:dist} for counterfactual distributions.
\end{algorithm}
See Appendix~\ref{section:alg_deriv} for the derivation. See Theorem~\ref{theorem:consistency_static} for theoretical values of $(\lambda,\lambda_1,\lambda_2)$ that balance bias and variance. See Appendix~\ref{section:tuning} for a practical tuning procedure based on the closed form solution of leave-one-out cross validation to empirically balance bias and variance.

\subsection{Algorithm: Discrete treatment}

So far, I have presented nonparametric algorithms for the continuous treatment cases. Now I present semiparametric algorithms for the discrete treatment case. Towards this end, I characterize the doubly robust moment function for the static sample selection model, generalizing the result of \cite{bia2020double} to treatment effects defined for subpopulations and alternative populations. 

\begin{theorem}[Doubly robust moment for static sample selection]\label{theorem:dr_static}
Suppose treatment $D$ and subcovariate $V$ are discrete. If Assumption~\ref{assumption:selection_static} holds then each treatment effect can be expressed as
\begin{align*}
\theta^{(m)}_0
    &=\mathbb{E}\bigg[
    m(1,D,X;\gamma_0) 
   + 
   \alpha^{(m)}_0(S,D,X)\{SY-\gamma_0(1,D,X)\}
\bigg],
\end{align*}
where $\gamma_0(1,D,X)=\mathbb{E}[Y|S=1,D,X]$ is the regression for complete cases, $\gamma\mapsto \mathbb{E}[m(1,D,X;\gamma)]$ is a linear functional for complete cases, and $\alpha^{(m)}_0(S,D,X)$ is the Riesz representer to the functional. In particular,
\begin{enumerate}
   \item For $\theta^{(m)}_0=\theta_0^{ATE}(d)$, $m(1,D,X;\gamma)=\gamma(1,d,X)$ and $\alpha_0^{ATE}(S,D,X)=\frac{S\cdot \mathbbm{1}_{D=d}}{\mathbb{P}(S=1|d,X)\mathbb{P}(d|X)}$ \cite{bia2020double}.
    \item If in addition Assumption~\ref{assumption:covariate} holds, then for $\theta^{(m)}_0=\theta_0^{DS}(d,\tilde{\mathbb{P}})$, $m(1,D,X;\gamma)=\gamma(1,d,X)$ and $\alpha_0^{DS}(S,D,X)=\frac{S\cdot \mathbbm{1}_{D=d}}{\tilde{\mathbb{P}}(S=1|d,X)\tilde{\mathbb{P}}(d|X)}$.
    \item For $\theta_0^{ATT}(d,d')=\frac{\theta^{(m)}_0}{\mathbb{P}(d)}$, $m(1,D,X;\gamma)=\gamma(1,d',X)\mathbbm{1}_{D=d}$ and $\alpha_0^{ATT}(S,D,X)=\frac{S\cdot \mathbbm{1}_{D=d'}\mathbb{P}(d|X)}{\mathbb{P}(S=1|d',X)\mathbb{P}(d'|X)}$.
    \item For $\theta_0^{CATE}(d,v)=\frac{\theta^{(m)}_0}{\mathbb{P}(v)}$,  $m(1,D,V,X;\gamma)=\gamma(1,d,v,X)\mathbbm{1}_{V=v}$ and
    $\alpha_0^{CATE}(S,D,V,X)=\frac{S\cdot \mathbbm{1}_{D=d}\mathbbm{1}_{V=v}}{\mathbb{P}(S=1|d,v,X)\mathbb{P}(d|v,X)}$.
\end{enumerate}
\end{theorem}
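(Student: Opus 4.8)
The plan is to build the moment function in three stages. First, I use Theorem~\ref{theorem:id_static} to rewrite each causal parameter as the expectation of a \emph{linear} functional $\gamma\mapsto\mathbb{E}[m(1,D,X;\gamma)]$ of the complete-case regression $\gamma_0$, reading off $m$ directly: from $\theta_0^{ATE}(d)=\int\gamma_0(1,d,x)\,\mathrm{d}\mathbb{P}(x)$ I get $m(1,D,X;\gamma)=\gamma(1,d,X)$; from $\theta_0^{ATT}(d,d')=\int\gamma_0(1,d',x)\,\mathrm{d}\mathbb{P}(x\mid d)$, multiplying through by $\mathbb{P}(d)$ to get $\mathbb{P}(d)\,\theta_0^{ATT}(d,d')=\mathbb{E}[\gamma_0(1,d',X)\mathbbm{1}_{D=d}]=\theta_0^{(m)}$, I get $m(1,D,X;\gamma)=\gamma(1,d',X)\mathbbm{1}_{D=d}$, and similarly for $\theta_0^{CATE}$ with $V$ adjoined and normalization $\mathbb{P}(v)$; for $\theta_0^{DS}$, Assumption~\ref{assumption:covariate}(i) gives $\mathbb{E}_{\tilde{\mathbb{P}}}[Y\mid S=1,D,X]=\gamma_0(1,D,X)$, so the same $m$ as for $\theta_0^{ATE}$ works but with the outer expectation under $\tilde{\mathbb{P}}$. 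Linearity of $\gamma\mapsto\mathbb{E}[m(1,D,X;\gamma)]$ is immediate, and its $L^2$-boundedness follows from overlap (Assumption~\ref{assumption:selection_static}(3)), strengthened if needed to a second-moment bound on the propensity weight, so the Riesz representation theorem applies and $\alpha_0^{(m)}$ is well defined.

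Second, I exhibit the Riesz representer, i.e.\ verify $\mathbb{E}[m(1,D,X;\gamma)]=\mathbb{E}[\alpha_0^{(m)}(S,D,X)\,\gamma(1,D,X)]$ for all $\gamma\in L^2$ and the stated $\alpha_0^{(m)}$. For $\theta_0^{ATE}$ this is a one-line tower-property computation:
\[
\mathbb{E}\!\left[\frac{S\,\mathbbm{1}_{D=d}}{\mathbb{P}(S=1\mid d,X)\,\mathbb{P}(d\mid X)}\,\gamma(1,D,X)\right]
=\mathbb{E}\!\left[\frac{\mathbb{E}[S\,\mathbbm{1}_{D=d}\mid X]}{\mathbb{P}(S=1\mid d,X)\,\mathbb{P}(d\mid X)}\,\gamma(1,d,X)\right]
=\mathbb{E}[\gamma(1,d,X)],
\]
since $\mathbb{E}[S\,\mathbbm{1}_{D=d}\mid X]=\mathbb{P}(S=1\mid d,X)\,\mathbb{P}(d\mid X)$ and overlap keeps the denominator positive on the relevant support. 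The ATT, CATE, and DS representers follow from the same conditioning step, now carrying the extra indicator $\mathbbm{1}_{D=d}$ (resp.\ $\mathbbm{1}_{V=v}$), the factor $\mathbb{P}(d\mid X)=\mathbb{E}[\mathbbm{1}_{D=d}\mid X]$, or the switch of outer measure to $\tilde{\mathbb{P}}$ justified by Assumption~\ref{assumption:covariate}(ii).

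Third, I check the doubly robust identity by evaluating the right-hand side at $(\gamma_0,\alpha_0^{(m)})$. Two facts do it: (a) by the definition of $\gamma_0$ and $\gamma_0(0,\cdot,\cdot)=0$, $\mathbb{E}[SY\mid S,D,X]=S\,\gamma_0(1,D,X)$; and (b) each $\alpha_0^{(m)}$ carries the factor $S$, so $\alpha_0^{(m)}S=\alpha_0^{(m)}$. Hence
\[
\mathbb{E}\!\left[\alpha_0^{(m)}\{SY-\gamma_0(1,D,X)\}\right]
=\mathbb{E}\!\left[\alpha_0^{(m)}S\,\gamma_0(1,D,X)\right]-\mathbb{E}\!\left[\alpha_0^{(m)}\gamma_0(1,D,X)\right]=0,
\]
and the right-hand side collapses to $\mathbb{E}[m(1,D,X;\gamma_0)]=\theta_0^{(m)}$. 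Running the same two facts with only one of $\gamma_0$, $\alpha_0^{(m)}$ correct — using the Step~2 identity when $\alpha=\alpha_0^{(m)}$ — shows the moment stays mean-zero, which is the double robustness (and, after differentiation, the Neyman orthogonality) that the later DML construction relies on.

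The main obstacle is Step~2: correctly guessing the inverse-propensity-type representers and keeping the selection indicator $S$, the treatment indicator $\mathbbm{1}_{D=d}$, and the overlap conditions in sync through the iterated expectations; for $\theta_0^{ATT}$ and $\theta_0^{CATE}$ one must also track the normalizations $\mathbb{P}(d)$, $\mathbb{P}(v)$ and note that the plug-in term $\mathbb{E}[m(1,D,X;\gamma_0)]$ is itself a subpopulation average, and for $\theta_0^{DS}$ one must invoke Assumption~\ref{assumption:covariate} to transport $\gamma_0$ across the two populations. Everything else is routine iterated expectations.
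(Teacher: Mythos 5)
Your proposal is correct and follows essentially the same route as the paper: read off the linear functional $m$ from Theorem~\ref{theorem:id_static}, then verify the inverse-propensity Riesz representation by iterated expectations using $\mathbb{E}[S\,\mathbbm{1}_{D=d}\mid X]=\mathbb{P}(S=1\mid d,X)\,\mathbb{P}(d\mid X)$ (with the extra indicator, the $\mathbb{P}(d\mid X)$ factor, or the $\tilde{\mathbb{P}}$ measure for ATT, CATE, DS). Your Step 3, which checks that the correction term is mean zero at $(\gamma_0,\alpha_0^{(m)})$, is left implicit in the paper's proof but is the same elementary tower-property argument, so there is no substantive difference.
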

See Appendix~\ref{sec:id} for the proof. Note that inference for the abstract treatment effect $\theta^{(m)}_0$ implies inference for $\theta_0^{ATT}(d,d')$ and $\theta_0^{CATE}(d,v)$ by delta method.

The equation in Theorem~\ref{theorem:dr_static} has three important properties. First, it consists of estimable quantities given the data limitations of the static sample selection problem. Second, it is doubly robust with respect to the nonparametric objects $(\gamma_0,\alpha^{(m)}_0)$ in the sense that it continues to hold if one of the nonparametric objects is misspecified. For example, for any $\gamma$, 
\begin{align*}
    \theta_0^{ATE}(d)
    &=\mathbb{E}\bigg[
    \gamma(1,d,X) + \alpha^{ATE}_0(S,D,X)\{SY-\gamma(1,d,X)\}
\bigg].
\end{align*}
Double robustness implies that the semiparametric guarantees below hold even if one of the nonparametric objects $(\gamma_0,\alpha^{(m)}_0)$ is not actually an element of an RKHS. Third, double robustness is with respect to $\alpha^{(m)}_0$ rather than the treatment and selection propensity scores. The technique of the kernel ridge Riesz representer permits estimation of $\alpha^{(m)}_0$ without estimation of the treatment and selection propensity scores \cite{singh2020kernel}. I adapt the kernel ridge Riesz representer next.

\begin{algorithm}[Kernel ridge Riesz representer]\label{algorithm:Riesz}
Given observations $\{W_i\}$, formula $m$, and evaluation location $w$,
\begin{enumerate}
    \item Calculate $\{K_h\}\in\mathbb{R}^{n\times n}$ where $[K_h]_{ij}=k_h(W_i,W_j)$ is specified below.
    \item Calculate $K,\Omega \in  \mathbb{R}^{2n\times 2n}$ and $u,z\in\mathbb{R}^{2n}$ by
    $$
   K=\begin{bmatrix} K_1 & K_2\\ K_3 & K_4 \end{bmatrix},\quad \Omega=\begin{bmatrix} K_1K_1 & K_1K_2\\ K_3K_1 & K_3K_2 \end{bmatrix},\quad z=\begin{bmatrix} K_2 \\  K_4 \end{bmatrix} \mathbbm{1}_{n},\quad u_i= k_1(W_i,w),\quad u_{i+n}= k_3(W_i,w)
    $$
    for $i\in\{1,...,n\}$ where $\mathbbm{1}_{n}\in\mathbb{R}^{n}$ is a vector of ones.
    \item Set
    $
    \hat{\alpha}^{(m)}(w)=z^{\top}(\Omega+n\lambda_3 K)^{-1} u.
    $
\end{enumerate}
All that remains is to specify $\{k_h\}$ for the leading examples.
\begin{enumerate}
    \item[(1,2)] For $\theta_0^{ATE}(d)$ and $\theta_0^{DS}(d,\tilde{\mathbb{P}})$,
   \begin{align*}
       k_1(W_i,W_j)&=k(S_i,S_j)k(D_i,D_j)k(X_i,X_j),\quad 
       k_2(W_i,W_j)=k(S_i,1)k(D_i,d)k(X_i,X_j) \\
       k_3(W_i,W_j)&=k(1,S_j)k(d,D_j)k(X_i,X_j),\quad 
       k_4(W_i,W_j)=k(1,1)k(d,d)k(X_i,X_j).
   \end{align*}

    \item[(3)] For the numerator of $\theta_0^{ATT}(d,d')$,
       \begin{align*}
       k_1(W_i,W_j)&=k(S_i,S_j)k(D_i,D_j)k(X_i,X_j),\quad 
       k_2(W_i,W_j)=\mathbbm{1}_{D_j=d}k(S_i,1)k(D_i,d')k(X_i,X_j) \\
       k_3(W_i,W_j)&=\mathbbm{1}_{D_i=d}k(1,S_j)k(d',D_j)k(X_i,X_j),\quad 
       k_4(W_i,W_j)=\mathbbm{1}_{D_i,D_j=d}k(1,1)k(d',d')k(X_i,X_j).
   \end{align*}
      \item[(4)] For the numerator of $\theta_0^{CATE}(d,v)$,
        \begin{align*}
       k_1(W_i,W_j)&=k(S_i,S_j)k(D_i,D_j)k(V_i,V_j)k(X_i,X_j) \quad
       k_2(W_i,W_j)=\mathbbm{1}_{V_j=v}k(S_i,1)k(D_i,d)k(V_i,v)k(X_i,X_j) \\
       k_3(W_i,W_j)&=\mathbbm{1}_{V_i=v}k(1,S_j)k(d,D_j)k(v,V_i)k(X_i,X_j) \quad 
       k_4(W_i,W_j)=\mathbbm{1}_{V_i,V_j=v}k(1,1)k(d,d)k(v,v)k(X_i,X_j).
   \end{align*}
   \end{enumerate}
\end{algorithm}

The meta algorithm that combines the doubly robust moment function from Theorem~\ref{theorem:dr_static} with sample splitting is called DML for static sample selection \cite{chernozhukov2018original,bia2020double,chernozhukov2021simple}. I instantiate DML for static sample selection using the new nonparametric estimator from Algorithm~\ref{algorithm:Riesz}.

\begin{algorithm}[Semiparametric inference of static sample selection]\label{algorithm:static_dml}
Partition the sample into folds $(I_{\ell})_{\ell=1:L}$. Denote by $I_{\ell}^c$ observations not in fold $I_{\ell}$.
\begin{enumerate}
    \item For each fold $\ell$, estimate $(\hat{\gamma}_{\ell},\hat{\alpha}_{\ell}^{(m)})$ from observations in $I_{\ell}^c$.
    \item Estimate $\hat{\theta}^{(m)}$ as
    \begin{align*}
        \hat{\theta}^{(m)}
        &=\frac{1}{n}\sum_{\ell=1}^L\sum_{i\in I_{\ell}} \bigg[
        m(1,D_i,X_i;\hat{\gamma}_{\ell}) +\hat{\alpha}^{(m)}_{\ell}(S_i,D_i,X_i)\{S_iY_i-\hat{\gamma}_{\ell}(1,D,X_i)\} ]\bigg].
   \end{align*}
    \item Estimate its $(1-a)\cdot 100$\% confidence interval as $\hat{\theta}^{(m)}\pm c_{a}\frac{\hat{\sigma}}{\sqrt{n}}$ where
    \begin{align*}
       \hat{\sigma}^2
        &=\frac{1}{n}\sum_{\ell=1}^L\sum_{i\in I_{\ell}} \bigg[
        m(1,D_i,X_i;\hat{\gamma}_{\ell}) +\hat{\alpha}^{(m)}_{\ell}(S_i,D_i,X_i)\{S_iY_i-\hat{\gamma}_{\ell}(1,D,X_i)\} -\hat{\theta}^{(m)}\bigg]^2,
    \end{align*}
    and $c_{a}$ is the $1-\frac{a}{2}$ quantile of the standard Gaussian. 
\end{enumerate}
\end{algorithm}

Unlike \cite[Algorithm 1]{bia2020double}, the proposed procedure does not require explicit propensity score estimation. Note that $\hat{\alpha}^{(m)}$ is computed according to Algorithm~\ref{algorithm:Riesz} with regularization parameter $\lambda_3$. $\hat{\gamma}$ is a standard kernel ridge regressions with regularization parameter $\lambda$. See Appendix~\ref{section:alg_deriv} for explicit computations. See Theorem~\ref{theorem:inference_static} for theoretical values of regularization parameters that balance bias and variance. See Appendix~\ref{section:tuning} for a practical tuning procedure based on the closed form solution for leave-one-out cross validation that empirically balances bias and variance. 

\subsection{Guarantees: Continuous treatment}

Towards formal guarantees, I place weak regularity conditions on the original spaces. In anticipation of Section~\ref{section:dynamic} and Appendix~\ref{section:dist}, I also include conditions for the follow-up covariate space and outcome space in parentheses.
\begin{assumption}[Original space regularity conditions]\label{assumption:original}
Assume
\begin{enumerate}
    \item $\mathcal{D}$, $\mathcal{X}$ (and $\mathcal{M}$, $\mathcal{Y}$) are Polish spaces, i.e, separable and completely metrizable topological spaces.
    \item $Y\in\mathbb{R}$ and it is bounded, i.e. there exists $C<\infty$ such that $|Y|\leq C$ almost surely.
\end{enumerate}
\end{assumption}
This weak regularity condition allows treatment and covariates to be discrete or continuous and low, high, or infinite dimensional.  The assumption that $Y\in\mathbb{R}$ is bounded simplifies the analysis, but may be relaxed.

Next, I assume that the various objects estimated by kernel ridge regressions and generalized kernel ridge regressions are smooth in the sense of~\eqref{eq:prior} in Section~\ref{sec:rkhs_main}. For dose response and incremental response curves, these are the regression $\gamma_0$ and conditional mean embeddings $(\mu_{x}(d),\mu_x(v))$. The smoothness assumption for the regression is as in Section~\ref{sec:rkhs_main}.

\begin{assumption}[Smoothness of regression]\label{assumption:smooth_gamma}
Assume $\gamma_0\in\mathcal{H}^c$.
\end{assumption}

For conditional mean embeddings, I follow the abstract statement of smoothness from \cite{singh2020kernel}. Define the abstract conditional mean embedding $\mu_{a}(b):=\int \phi(a)\mathrm{d}\mathbb{P}(a|b)$ where $a\in\mathcal{A}_j$ and $b\in\mathcal{B}_j$. I parametrize the smoothness of $\mu_{a}(b)$, with $a\in\mathcal{A}_j$ and $b\in\mathcal{B}_j$, by $c_j$. Define the abstract conditional expectation operator $E_j:\mathcal{H}_{\mathcal{A}_j}\rightarrow\mathcal{H}_{\mathcal{B}_j}$, $f(\cdot)\mapsto \mathbb{E}\{f(A_j)|B_j=\cdot\}$. The relationship between $E_j$ and $\mu_{a}(b)$ is given by
$$
\mu_{a}(b)=\int \phi(a)\mathrm{d}\mathbb{P}(a|b) =\mathbb{E}_{A_j|B_j=b} \{\phi(A_j)\} =E_j^* \{\phi(b)\},\quad a\in\mathcal{A}_j, b\in\mathcal{B}_j
$$
where $E_j^*$ is the adjoint of $E_j$. Let $\mathcal{L}_2(\mathcal{H}_{\mathcal{A}_j},\mathcal{H}_{\mathcal{B}_j})$ be the space of Hilbert-Schmidt operators between $\mathcal{H}_{\mathcal{A}_j}$ and $\mathcal{H}_{\mathcal{B}_j}$ by $\mathcal{L}_2(\mathcal{H}_{\mathcal{A}_j},\mathcal{H}_{\mathcal{B}_j})$, which is also an RKHS \cite{grunewalder2013smooth,singh2019kernel}. With this notation, I state the abstract statement of smoothness.

\begin{assumption}[Smoothness of mean embedding]\label{assumption:smooth_op}
Assume $E_j\in \{\mathcal{L}_2(\mathcal{H}_{\mathcal{A}_j},\mathcal{H}_{\mathcal{B}_j})\}^{^{c_j}}$.
\end{assumption}

With these assumptions, I arrive at the first main result.
\begin{theorem}[Uniform consistency of dose and incremental responses curves]\label{theorem:consistency_static}
Suppose Assumptions~\ref{assumption:selection_static},~\ref{assumption:RKHS},~\ref{assumption:original}, and~\ref{assumption:smooth_gamma} hold. Set $(\lambda,\lambda_1,\lambda_2)=(n^{-\frac{1}{c+1}},n^{-\frac{1}{c_1+1}},n^{-\frac{1}{c_2+1}})$.
\begin{enumerate}
    \item Then
    $
    \|\hat{\theta}^{ATE}-\theta_0^{ATE}\|_{\infty}=O_p\left(n^{-\frac{1}{2}\frac{c-1}{c+1}}\right).
    $
    \item If in addition Assumption~\ref{assumption:covariate} holds, then 
      $
    \|\hat{\theta}^{DS}(\cdot,\tilde{\mathbb{P}})-\theta_0^{DS}(\cdot,\tilde{\mathbb{P}})\|_{\infty}=O_p\left( n^{-\frac{1}{2}\frac{c-1}{c+1}}+\tilde{n}^{-\frac{1}{2}}\right).
    $
    \item If in addition Assumption~\ref{assumption:smooth_op} holds with $\mathcal{A}_1=\mathcal{X}$ and $\mathcal{B}_1=\mathcal{D}$, then
      $
    \|\hat{\theta}^{ATT}-\theta_0^{ATT}\|_{\infty}=O_p\left(n^{-\frac{1}{2}\frac{c-1}{c+1}}+n^{-\frac{1}{2}\frac{c_1-1}{c_1+1}}\right).
    $
    \item If in addition Assumption~\ref{assumption:smooth_op} holds with $\mathcal{A}_2=\mathcal{X}$ and $\mathcal{B}_2=\mathcal{V}$, then
      $
    \|\hat{\theta}^{CATE}-\theta_0^{CATE}\|_{\infty}=O_p\left(n^{-\frac{1}{2}\frac{c-1}{c+1}}+n^{-\frac{1}{2}\frac{c_2-1}{c_2+1}}\right).
    $
\end{enumerate}
Likewise for incremental responses, e.g.  $
    \|\hat{\theta}^{\nabla:ATE}-\theta_0^{\nabla:ATE}\|_{\infty}=O_p\left(n^{-\frac{1}{2}\frac{c-1}{c+1}}\right)
    $. See Appendix~\ref{section:dist} for counterfactual distributions.
\end{theorem}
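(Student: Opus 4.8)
The plan is to reduce every dose or incremental response curve, via the inner‑product representations of Theorem~\ref{theorem:representation_static}, to a pair of objects each estimated by a (generalized) kernel ridge regression, and then control the two resulting error sources separately. Take $\theta_0^{ATE}$ as the template. I would telescope
\begin{align*}
\hat{\theta}^{ATE}(d)-\theta_0^{ATE}(d)
&=\big\langle \hat\gamma-\gamma_0,\;\phi(1)\otimes\phi(d)\otimes\hat\mu_x\big\rangle_{\mathcal H}
+\big\langle \gamma_0,\;\phi(1)\otimes\phi(d)\otimes(\hat\mu_x-\mu_x)\big\rangle_{\mathcal H},
\end{align*}
apply Cauchy--Schwarz to each summand, and use the kernel boundedness of Assumption~\ref{assumption:RKHS} ($\|\phi(1)\|_{\mathcal H_{\mathcal S}}\le\kappa_s$, $\sup_d\|\phi(d)\|_{\mathcal H_{\mathcal D}}\le\kappa_d$, and $\|\hat\mu_x\|_{\mathcal H_{\mathcal X}}\le\kappa_x$, which follows from the bound on $\phi(x)$) to obtain, uniformly in $d$,
\begin{align*}
\|\hat{\theta}^{ATE}-\theta_0^{ATE}\|_{\infty}
\;\le\;\kappa_s\kappa_d\kappa_x\,\|\hat\gamma-\gamma_0\|_{\mathcal H}
\;+\;\kappa_s\kappa_d\,\|\gamma_0\|_{\mathcal H}\,\|\hat\mu_x-\mu_x\|_{\mathcal H_{\mathcal X}}.
\end{align*}
Thus the uniform error of the curve is reduced, with no loss from the supremum over $d$, to the $\mathcal H$-norm error of $\hat\gamma$ and the $\mathcal H_{\mathcal X}$-norm error of $\hat\mu_x$.

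Next I would invoke the two rate lemmas. Since $\hat\gamma$ is the classic kernel ridge regression of $SY$ on $(S,D,X)$ using the selected sample, and the residual $SY-\gamma_0(S,D,X)$ is bounded by Assumption~\ref{assumption:original}, the source condition $\gamma_0\in\mathcal H^c$ (Assumption~\ref{assumption:smooth_gamma}) with $c\in(1,2]$ gives the standard bias--variance decomposition $\|\hat\gamma-\gamma_0\|_{\mathcal H}=O_p\big(\lambda^{(c-1)/2}+n^{-1/2}\lambda^{-1}\big)$; the prescribed $\lambda=n^{-1/(c+1)}$ equates the two terms and yields $O_p\big(n^{-\frac12\frac{c-1}{c+1}}\big)$. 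For $\hat\mu_x=\tfrac1n\sum_i\phi(X_i)$, a Hilbert-space concentration inequality gives $\|\hat\mu_x-\mu_x\|_{\mathcal H_{\mathcal X}}=O_p(n^{-1/2})$, which is dominated since $\tfrac{c-1}{c+1}<1$. This proves part (1). Part (2) is identical, with $\hat\mu_x$ now an empirical mean over the size-$\tilde n$ sample from $\tilde{\mathbb P}$ (legitimate because Assumption~\ref{assumption:covariate} leaves $\gamma_0$ invariant across $\mathbb P$ and $\tilde{\mathbb P}$), contributing the additional $\tilde n^{-1/2}$ term.

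For $\theta_0^{ATT}$ and $\theta_0^{CATE}$ the reweighting object is a conditional mean embedding $\mu_x(d)=E_1^*\phi(d)$ (resp.\ $\mu_x(v)=E_2^*\phi(v)$), estimated by the generalized kernel ridge regression of Algorithm~\ref{algorithm:static}. Viewing $\hat E_1$ as a vector‑valued (operator) ridge regression, Assumption~\ref{assumption:smooth_op} with source exponent $c_1$ gives, by the same argument now in Hilbert--Schmidt norm, $\|\hat E_1-E_1\|_{\mathcal L_2(\mathcal H_{\mathcal X},\mathcal H_{\mathcal D})}=O_p\big(\lambda_1^{(c_1-1)/2}+n^{-1/2}\lambda_1^{-1}\big)$, optimized by $\lambda_1=n^{-1/(c_1+1)}$ to $O_p\big(n^{-\frac12\frac{c_1-1}{c_1+1}}\big)$. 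Since $\hat\mu_x(d)=\hat E_1^*\phi(d)$ and $\sup_d\|\phi(d)\|_{\mathcal H_{\mathcal D}}\le\kappa_d$, the operator‑norm bound transfers to $\sup_d\|\hat\mu_x(d)-\mu_x(d)\|_{\mathcal H_{\mathcal X}}$ at the same rate (and likewise $\sup_d\|\hat\mu_x(d)\|_{\mathcal H_{\mathcal X}}=O_p(1)$), so the telescoping decomposition above, with $\hat\mu_x$ replaced by $\hat\mu_x(d)$ and the supremum taken over $(d,d')$, gives part (3); part (4) is the same with $V$ adjoined to the conditioning set and the supremum over $(d,v)$. The incremental‑response statements follow verbatim once $\phi(d)$ is replaced by $\nabla_d\phi(d)$ and $\kappa_d$ by $\kappa_d'$ (finite by Assumption~\ref{assumption:RKHS}, and $\nabla_d\phi(d)\in\mathcal H_{\mathcal D}$, so the inner products remain well defined).

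The main obstacle is establishing the two underlying ridge‑regression rates in $\mathcal H$-norm and in Hilbert--Schmidt norm under the source conditions---in particular the variance term, which requires a concentration bound for the regularized empirical covariance operator, and the bias term $\|(L+\lambda)^{-1}L\gamma_0-\gamma_0\|_{\mathcal H}$, which is controlled via the spectral characterization of $\mathcal H^c$ in~\eqref{eq:prior}---together with the argument that the conditional‑mean‑embedding bound holds uniformly over the conditioning value. Everything else (telescoping, Cauchy--Schwarz, kernel boundedness, and the reduction of the uniform norm to RKHS norms) is routine.
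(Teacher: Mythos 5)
Your proposal is correct and follows essentially the same route as the paper's proof: an inner-product telescoping decomposition (yours is the equivalent two-term version of the paper's three-term expansion), Cauchy--Schwarz with the kernel bounds of Assumption~\ref{assumption:RKHS}, the RKHS-norm kernel ridge regression rate under the source condition (sampling error plus approximation error balanced at $\lambda=n^{-1/(c+1)}$), the $n^{-1/2}$ (resp.\ $\tilde n^{-1/2}$) concentration of the empirical mean embedding, the Hilbert--Schmidt rate for the conditional-expectation-operator estimates transferred uniformly over the conditioning point via $\|\phi(d)\|\le\kappa_d$ (resp.\ $\|\phi(v)\|\le\kappa_v$), and the substitution $\phi(d)\mapsto\nabla_d\phi(d)$, $\kappa_d\mapsto\kappa_d'$ for incremental responses. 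The only difference is that the paper imports these ingredient rates as quoted lemmas (Smale--Zhou, Caponnetto--De Vito, Singh et al.) rather than rederiving them, which is exactly the "main obstacle" you flag.
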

See Appendix~\ref{section:consistency_proof} for the proof and exact finite sample rates. The rates are at best $n^{-\frac{1}{6}}$ when $(c,c_1,c_2)=2$. The slow rates reflect the challenge of a $\sup$ norm guarantee, which encodes caution about worst case scenarios when informing policy decisions. 

\subsection{Guarantees: Discrete treatment}

Consider the semiparametric case where $D$ is binary. Then $\theta_0^{ATE}(d)$ is a vector in $\mathbb{R}^{2}$ and Theorem~\ref{theorem:consistency_static} simplifies to a guarantee on the maximum element of the vector of differences $|\hat{\theta}^{ATE}(d)-\theta_0^{ATE}(d)|$. For the semiparametric case, I improve the rate from $n^{-\frac{1}{6}}$ to $n^{-\frac{1}{2}}$ by using Algorithm~\ref{algorithm:static_dml} and imposing additional assumptions. In particular, I place additional assumptions on the regression propensity scores, Riesz representer, and corresponding RKHSs. 

\begin{assumption}[Bounded propensities and Riesz representer]\label{assumption:bounded_Riesz}
Assume there exist $\epsilon>0$ and $\bar{\alpha}'<\infty$ such that
\begin{enumerate}
    \item $\epsilon \leq \mathbb{P}(D=1|X)\leq 1-\epsilon $ and $\epsilon \leq \mathbb{P}(S=1|D,X)\leq 1-\epsilon $ with probability one;
    \item $\|\hat{\alpha}^{(m)}\|_{\infty}\leq \bar{\alpha}'$ with probability approaching one.
\end{enumerate}
Likewise for the propensity scores in the cases of $\theta_0^{DS}$ and $\theta_0^{CATE}$.
\end{assumption}
Part one of Assumption~\ref{assumption:bounded_propensity_planning} is a mild strengthening of the overlap condition in Assumption~\ref{assumption:selection_planning}. It implies boundedness of the Riesz representer $\alpha_0^{(m)}$ and mean square continuity of the corresponding functional \cite[Assumption 3.1]{chernozhukov2021simple}. Part two can be imposed by censoring extreme evaluations of the estimator $\hat{\alpha}^{(m)}$. Note that censoring can only improve prediction quality because $\alpha^{(m)}_0$ is bounded as an implication of part one. 

Finally, I assume smoothness and spectral decay assumptions for the Riesz representer as described in Section~\ref{sec:rkhs_main}.

\begin{assumption}[Smoothness of Riesz representer]\label{assumption:smooth_Riesz}
Assume $\alpha^{(m)}_0\in \mathcal{H}^{c_3}$.
\end{assumption}

To quantify the effective dimension of the Riesz representer kernel, recall the convolution operator notation from Section~\ref{sec:rkhs_main}: $\lambda_j(k)$ is the $j$-th eigenvalue of the convolution operator $L:f\mapsto \int k(\cdot, w)f(w)\mathrm{d}\mathbb{P}(w)$ of the kernel $k$.

\begin{assumption}[Effective dimension of Riesz representer]\label{assumption:spectral_Riesz}
Assume $\lambda_j(k_{\mathcal{S}}\cdot k_{\mathcal{D}} \cdot  k_{\mathcal{X}})\asymp j^{-b_3}$.
\end{assumption}

The latter statement quantifies the effective dimension of $\alpha_0^{(m)} \in \mathcal{H}$. Recall from Section~\ref{sec:rkhs_main} that Sobolev kernels possess this property. See \cite[Table 1]{scetbon2021spectral} for additional examples. In Appendix~\ref{sec:inference_proof}, I argue that it is sufficient for \textit{either} the kernel of the Riesz representer $\alpha_0^{(m)}$ to have polynomial spectral decay \textit{or} for the kernel of the regression $\gamma_0$ to have polynomial spectral decay---a kind of double spectral robustness \cite{singh2020kernel,singh2021workshop}.

\begin{theorem}[Semiparametic consistency, Gaussian approximation, and efficiency for static sample selection]\label{theorem:inference_static}
Suppose the conditions of Theorem~\ref{theorem:consistency_static} hold, as well as Assumptions~\ref{assumption:bounded_Riesz},~\ref{assumption:smooth_Riesz}, and~\ref{assumption:spectral_Riesz}. For a given treatment effect, denote the moments of 
$$
\psi^{(m)}_0(W):=m(1,D,X;\gamma_0)+\alpha^{(m)}_0(S,D,X)[SY-\gamma_0(1,D,X)]-\theta^{(m)}_0$$ by
$\sigma^2=\mathbb{E}[\psi_0(W)^2]$, $\eta^3=\mathbb{E}[|\psi_0(W)|^3]$, and $\chi^4=\mathbb{E}[\psi_0(W)^4]$. Assume the regularity conditions that $\sigma^2$ is bounded away from zero and $\left\{\left(\frac{\eta}{\sigma}\right)^3+\chi^2\right\}n^{-\frac{1}{2}}\rightarrow0$. Set $\lambda=n^{-\frac{1}{2}\frac{c-1}{c+1}}$. Set $\lambda_3=n^{-\frac{1}{2}}$ if $b_3=\infty$ and $\lambda_3=n^{-\frac{b_3}{b_3c_3+1}}$ otherwise. Then for any $c,c_3\in(1,2]$ and $b_3\in(1,\infty]$ satisfying
$$
\frac{b_3c_3}{b_3c_3+1}+\frac{c-1}{c+1}>1,
$$
the abstract estimator $\hat{\theta}^{(m)}$ is consistent, i.e. $\hat{\theta}^{(m)}\overset{p}{\rightarrow}\theta_0^{(m)}$, and  the confidence interval includes $\theta_0^{(m)}$ with probability approaching the nominal level, i.e. $$\lim_{n\rightarrow\infty} \mathbb{P}\left[\theta_0^{(m)}\in \left\{\hat{\theta}^{(m)}\pm c_a\frac{\hat{\sigma}}{\sqrt{n}}\right\}\right]=1-a.$$
\end{theorem}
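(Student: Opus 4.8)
The plan is to run the debiased machine learning (DML) argument for the Neyman-orthogonal moment function of Theorem~\ref{theorem:dr_static}, combined with the cross-fitting of Algorithm~\ref{algorithm:static_dml} and the new kernel-ridge nuisance estimators. Write $\hat{g}_{\ell}(W):=m(1,D,X;\hat{\gamma}_{\ell})+\hat{\alpha}^{(m)}_{\ell}(S,D,X)\{SY-\hat{\gamma}_{\ell}(1,D,X)\}$ and $g_0(W):=m(1,D,X;\gamma_0)+\alpha^{(m)}_0(S,D,X)\{SY-\gamma_0(1,D,X)\}$, so that $\psi^{(m)}_0=g_0-\theta^{(m)}_0$ and $\hat{\theta}^{(m)}=\frac1n\sum_{\ell}\sum_{i\in I_{\ell}}\hat{g}_{\ell}(W_i)$. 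Then $\sqrt{n}(\hat{\theta}^{(m)}-\theta^{(m)}_0)=\frac{1}{\sqrt n}\sum_{i}\psi^{(m)}_0(W_i)+\Delta_1+\Delta_2$, where $\Delta_1:=\frac{1}{\sqrt n}\sum_{\ell}\sum_{i\in I_{\ell}}\{(\hat{g}_{\ell}-g_0)(W_i)-\mathbb{E}[(\hat{g}_{\ell}-g_0)(W)\mid I_{\ell}^c]\}$ is an empirical-process term and $\Delta_2:=\sqrt n\sum_{\ell}\frac{|I_{\ell}|}{n}\mathbb{E}[(\hat{g}_{\ell}-g_0)(W)\mid I_{\ell}^c]$ is a bias term. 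I must show $\Delta_1,\Delta_2=o_p(1)$, prove a central limit theorem for the leading term, and show $\hat{\sigma}^2\overset{p}{\rightarrow}\sigma^2$.

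For $\Delta_2$, I would use the double robustness of Theorem~\ref{theorem:dr_static}: taking iterated expectations against $(S,D,X)$ and using that $\alpha^{(m)}_0$ is the Riesz representer and that $\mathbb{E}[SY\mid S,D,X]$ agrees with $\gamma_0(1,D,X)$ on the support where $\hat{\alpha}^{(m)}_{\ell}\neq 0$, the conditional mean $\mathbb{E}[(\hat{g}_{\ell}-g_0)(W)\mid I_{\ell}^c]$ collapses to the mixed-bias form $-\mathbb{E}[(\hat{\alpha}^{(m)}_{\ell}-\alpha^{(m)}_0)(S,D,X)\{\hat{\gamma}_{\ell}(1,D,X)-\gamma_0(1,D,X)\}\mid I_{\ell}^c]$, which Cauchy--Schwarz bounds by $\|\hat{\alpha}^{(m)}_{\ell}-\alpha^{(m)}_0\|_{L_2}\cdot\|\hat{\gamma}_{\ell}(1,\cdot,\cdot)-\gamma_0(1,\cdot,\cdot)\|_{L_2}$. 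For $\Delta_1$, conditionally on $I_{\ell}^c$ each summand is i.i.d.\ mean zero with variance $\lesssim\|\hat{g}_{\ell}-g_0\|_{L_2}^2$; using boundedness of $\hat{\alpha}^{(m)}_{\ell}$ (Assumption~\ref{assumption:bounded_Riesz}) and of $Y$ (Assumption~\ref{assumption:original}), this variance is controlled by $\|\hat{\alpha}^{(m)}_{\ell}-\alpha^{(m)}_0\|_{L_2}^2+\|\hat{\gamma}_{\ell}-\gamma_0\|_{L_2}^2\to 0$, so $\Delta_1=o_p(1)$ by conditional Chebyshev.

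The crux is the nuisance rates. For $\hat{\gamma}$ I would invoke the kernel ridge regression rate under the source condition of Assumption~\ref{assumption:smooth_gamma} with the stated $\lambda$, giving an error rate governed by the exponent $\tfrac12\tfrac{c-1}{c+1}$; for $\hat{\alpha}^{(m)}$ I would derive the analogous rate for the kernel-ridge Riesz representer of Algorithm~\ref{algorithm:Riesz} under the smoothness and effective-dimension Assumptions~\ref{assumption:smooth_Riesz} and~\ref{assumption:spectral_Riesz} with $\lambda_3=n^{-b_3/(b_3c_3+1)}$, giving an error rate governed by $\tfrac12\tfrac{b_3c_3}{b_3c_3+1}$. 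The product of the two rates is $o(n^{-1/2})$ precisely when $\tfrac{b_3c_3}{b_3c_3+1}+\tfrac{c-1}{c+1}>1$, the displayed condition. Here I would also use the double spectral robustness: interpolating between the $\mathcal{H}$-norm and $L_2$-norm bounds on the nuisance errors, the mixed bias $\Delta_2$ can be controlled using the spectral decay of \emph{either} the Riesz-representer kernel \emph{or} the regression kernel, so no spectral-decay assumption on the regression side is required. I expect this step---sharp $L_2$ (or interpolation-norm) rates for the kernel-ridge Riesz representer $\hat{\alpha}^{(m)}$, which solves a regularized minimum-norm problem rather than an ordinary regression, together with the two-sided spectral bookkeeping---to be the main obstacle; the rest is routine DML plumbing.

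With $\Delta_1,\Delta_2=o_p(1)$, we have $\sqrt{n}(\hat{\theta}^{(m)}-\theta^{(m)}_0)=\frac{1}{\sqrt n}\sum_i\psi^{(m)}_0(W_i)+o_p(1)$; a Berry--Esseen bound, together with $\sigma^2$ bounded away from zero and $\{(\eta/\sigma)^3+\chi^2\}n^{-1/2}\to 0$, gives $\sqrt{n}(\hat{\theta}^{(m)}-\theta^{(m)}_0)/\sigma\Rightarrow\mathcal{N}(0,1)$ and hence consistency. For coverage I would decompose $\hat{\sigma}^2-\sigma^2$ into the oracle empirical second moment of $\psi^{(m)}_0$ (which converges by the law of large numbers under the moment hypotheses) plus remainders bounded by the same nuisance rates and boundedness used for $\Delta_1$, yielding $\hat{\sigma}^2\overset{p}{\rightarrow}\sigma^2$, after which Slutsky gives the stated limiting coverage $1-a$. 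Finally, efficiency follows because $\psi^{(m)}_0$ is the efficient influence function of $\theta^{(m)}_0$ in the nonparametric model---this is exactly the content of the doubly robust moment characterization---so $\sigma^2$ equals the semiparametric efficiency bound.
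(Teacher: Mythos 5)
Your proposal is correct and follows essentially the same route as the paper: the paper simply packages your $\Delta_1$/$\Delta_2$ decomposition, Berry--Esseen step, and variance argument as quoted lemmas (Theorems 5.1 and 5.3 of \cite{chernozhukov2021simple}) and imports the kernel-ridge Riesz representer mean-square rate from \cite{singh2020kernel}, then verifies the same product-rate condition by pairing the uniform rate for $\hat{\gamma}$ (which needs no spectral decay on the regression kernel) with the $L_2$ rate for $\hat{\alpha}^{(m)}$ under Assumptions~\ref{assumption:smooth_Riesz} and~\ref{assumption:spectral_Riesz}, exactly as you do. The only step you leave implicit that the paper checks explicitly is mean-square continuity of $\gamma\mapsto m(1,D,X;\gamma)$ (Proposition~\ref{prop:cont}), which follows from the bounded-propensity part of Assumption~\ref{assumption:bounded_Riesz} that you already invoke.
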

See Appendix~\ref{sec:inference_proof} for the proof and a finite sample Gaussian approximation. Observe the tradeoff among the smoothness and effective dimension assumptions across nonparametric objects. The spectral decay of the Riesz representer $\alpha_0^{(m)}$ parametrized by $b_3$ must be sufficiently fast relative to the smoothness of the nonparametric objects $(\gamma_0,\alpha^{(m)}_0)$ parametrized by $(c,c_3)$. To see that this parameter region is nonempty, observe that if $(c,c_3)=2$ then the condition simplifies to $b_3>1$. In other words, when the various nonparametric objects are smooth, it is sufficient for the Riesz representer to belong to an RKHS with any polynomial rate of spectral decay. 
\section{Dynamic sample selection}\label{section:dynamic}

\subsection{Causal parameters}

So far I have studied the static sample selection model where selection is as good as random conditional on treatment $D$ and baseline covariates $X$. A richer class of sample selection models allows for selection to be as good as random conditional on treatment $D$, baseline covariates $X$, and follow-up covariates $M$ measured after treatment. In other words I open up the possibility that, among the covariates in the selection mechanism, some are baseline covariates $X$ that may cause treatment while others are follow-up covariates $M$ that may be caused by treatment. I refer to this setting as dynamic sample selection. The dynamic sample selection case is more challenging from a statistical perspective. I study a subset of the causal parameters from Section~\ref{section:static} due to this additional complexity, though in principle the same techniques should apply to the entire class of causal parameters.

\subsection{Identification}

\cite{bia2020double} articulate distribution-free sufficient conditions under which average treatment effect can be measured from observations $(SY,S,D,X,M)$. I refer to this collection of sufficient conditions as \textit{dynamic sample selection on observables} and argue that it identifies additional causal parameters from Definition~\ref{def:target}.

\begin{assumption}[Dynamic sample selection on observables]\label{assumption:selection_planning}
Assume
\begin{enumerate}
    \item No interference: if $D=d$ then $Y=Y^{(d)}$.
    \item Conditional exchangeability: $\{Y^{(d)}\}\indep D |X$ and $\{Y^{(d)}\}\indep S|D,X,M$.
    \item Overlap: if $f(x)>0$ then $f(d|x)>0$ and if $f(d,x,m)>0$ then $\mathbb{P}(S=1|d,x,m)>0$;  where $f(x)$, $f(d|x)$, and $f(d,x,m)$, are densities.
\end{enumerate}
\end{assumption}

Observe that Assumption~\ref{assumption:selection_planning} is a generalization of Assumption~\ref{assumption:selection_static}. In particular, $\{Y^{(d)}\}\indep S|D,X,M$ generalizes $\{Y^{(d)}\}\indep S|D,X$. In this sense, the dynamic sample selection model generalizes missing-at-random (MAR) \cite{rubin1976inference}. The updated overlap condition is more stringent: there is no treatment-baseline-follow up stratum $(D,X,M)=(d,x,m)$ such that selection is impossible. Similarly, I adapt Assumption~\ref{assumption:covariate} to handle $\theta_0^{DS}$.

\begin{assumption}[Dynamic distribution shift]\label{assumption:covariate_planning}
Assume
\begin{enumerate}
    \item $\tilde{\mathbb{P}}(Y,S,D,X,M)=\mathbb{P}(Y|S,D,X,M)\tilde{\mathbb{P}}(S,D,X,M)$;
    \item $\tilde{\mathbb{P}}(S,D,X,M)$ is absolutely continuous with respect to $\mathbb{P}(S,D,X,M)$.
\end{enumerate}
\end{assumption}
As before, difference in population distributions $\mathbb{P}$ and $\tilde{\mathbb{P}}$ is only in the distribution of selection, treatments, and covariates.  An immediate consequence is that the regression function for complete cases $\gamma_0(1,d,x,m)=\mathbb{E}(Y|S=1,D=d,X=x,M=m)$ remains the same across the different populations $\mathbb{P}$ and $\tilde{\mathbb{P}}$. 

Similar to Theorem~\ref{theorem:id_static}, I argue that different causal parameters are reweightings of the regression using complete cases, i.e. $\gamma_0(1,d,x,m)$. This nuance is critical, since the data limitations of the problem setting imply that $\mathbb{E}(Y|S=1,D=d,X=x,M=m)$ is estimable while $\mathbb{E}(Y|S=0,D=d,X=x,M=m)$ is not. The reweightings are more complex due to dynamic sample selection. I modestly extend the argument of \cite{bia2020double} to cover the additional cases of distribution shift and counterfactual distributions.

\begin{theorem}[Identification of causal parameters under dynamic sample selection]\label{theorem:id_planning}
If Assumption~\ref{assumption:selection_planning} holds then
\begin{enumerate}
    \item $
\theta_0^{ATE}(d)=\int \gamma_0(1,d,x,m) \mathrm{d}\mathbb{P}(m|d,x) \mathrm{d}\mathbb{P}(x)
$ \cite{bia2020double}.
    \item If in addition Assumption~\ref{assumption:covariate_planning} holds then  $
\theta_0^{DS}(d,\tilde{\mathbb{P}})=\int \gamma_0(1,d,x,m) \mathrm{d}\tilde{\mathbb{P}}(m|d,x) \mathrm{d}\tilde{\mathbb{P}}(x)
$.
\end{enumerate}
See Appendix~\ref{section:dist} for counterfactual distributions.
\end{theorem}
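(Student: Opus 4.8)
The plan is to establish part~(1) by a sequential application of the three components of Assumption~\ref{assumption:selection_planning}, iterating expectations in a carefully chosen order, and then to deduce part~(2) by running the same argument under $\tilde{\mathbb{P}}$ and invoking the invariance of the complete-case regression guaranteed by Assumption~\ref{assumption:covariate_planning}. The argument is in the same spirit as Theorem~\ref{theorem:id_static}, with the single marginal $\mathbb{P}(x)$ replaced by the product $\mathbb{P}(m\mid d,x)\,\mathbb{P}(x)$.

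For part~(1), I would start from $\theta_0^{ATE}(d)=\mathbb{E}[Y^{(d)}]$ and condition on the baseline covariates only: $\theta_0^{ATE}(d)=\int \mathbb{E}[Y^{(d)}\mid X=x]\,\mathrm{d}\mathbb{P}(x)$. The crucial move is to introduce the event $\{D=d\}$ at this stage, before touching $M$: by conditional exchangeability $\{Y^{(d)}\}\indep D\mid X$ together with the first overlap condition (which makes the conditioning event nonnull), $\mathbb{E}[Y^{(d)}\mid X=x]=\mathbb{E}[Y^{(d)}\mid D=d,X=x]$. Only now do I expand over the follow-up covariate by iterated expectations within the stratum $\{D=d\}$, so that $M$ is distributed according to its observed law $\mathbb{P}(m\mid d,x)$, giving $\mathbb{E}[Y^{(d)}\mid D=d,X=x]=\int \mathbb{E}[Y^{(d)}\mid D=d,X=x,M=m]\,\mathrm{d}\mathbb{P}(m\mid d,x)$. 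Next I apply the second exchangeability statement $\{Y^{(d)}\}\indep S\mid D,X,M$ and the second overlap condition to insert $\{S=1\}$: $\mathbb{E}[Y^{(d)}\mid D=d,X=x,M=m]=\mathbb{E}[Y^{(d)}\mid S=1,D=d,X=x,M=m]$. Finally, no interference gives $Y^{(d)}=Y$ on $\{D=d\}$, so the last expression equals $\mathbb{E}[Y\mid S=1,D=d,X=x,M=m]=\gamma_0(1,d,x,m)$. Reassembling the two integrals yields the claimed formula.

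For part~(2), I would repeat verbatim the chain of equalities above with every expectation and conditional distribution taken under $\tilde{\mathbb{P}}$: the structural relations in Assumption~\ref{assumption:selection_planning} are statements about the causal model and carry over to the shifted population, while the positivity needed at each insertion is supplied by the absolute continuity in Assumption~\ref{assumption:covariate_planning}. This produces $\theta_0^{DS}(d,\tilde{\mathbb{P}})=\int \mathbb{E}_{\tilde{\mathbb{P}}}[Y\mid S=1,D=d,X=x,M=m]\,\mathrm{d}\tilde{\mathbb{P}}(m\mid d,x)\,\mathrm{d}\tilde{\mathbb{P}}(x)$. The first part of Assumption~\ref{assumption:covariate_planning}, namely $\tilde{\mathbb{P}}(Y\mid S,D,X,M)=\mathbb{P}(Y\mid S,D,X,M)$, then identifies the inner conditional expectation with $\gamma_0(1,d,x,m)$, giving the stated expression.

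The step I expect to require the most care is the ordering of the iterated expectations in part~(1): because conditional exchangeability for $D$ is assumed only given $X$ and not given $(X,M)$---the latter being false in general, since $M$ may be a descendant of $D$---one must introduce $\{D=d\}$ while conditioning on $X$ alone and only afterwards marginalize over $M$ within that stratum. Getting this sequence right, and verifying at each insertion that the relevant overlap condition makes the conditioning event well-defined, is the crux; the remaining manipulations are routine. I would note that the extension to counterfactual distributions is deferred to Appendix~\ref{section:dist}.
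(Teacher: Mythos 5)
Your proposal is correct and takes essentially the same route as the paper: insert $\{D=d\}$ using $\{Y^{(d)}\}\indep D\mid X$ while conditioning on $X$ alone, iterate expectations over $M$ within that stratum so $M\sim\mathbb{P}(m\mid d,x)$, then use $\{Y^{(d)}\}\indep S\mid D,X,M$ and no interference to identify the inner expectation with $\gamma_0(1,d,x,m)$, and run the same chain under $\tilde{\mathbb{P}}$ for part (2) with the invariance of $\mathbb{P}(Y\mid S,D,X,M)$ pinning down the regression. The only difference is presentational (you argue top-down while the paper first rewrites $\gamma_0(1,d,x,m)=\mathbb{E}[Y^{(d)}\mid D=d,X=x,M=m]$ and then assembles the partial mean), and your explicit remark about ordering the iterated expectations before marginalizing over $M$ matches the paper's logic exactly.
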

See Appendix~\ref{sec:id} for the proof.

\subsection{Algorithm: Continuous treatment}

I extend the RKHS construction from Section~\ref{section:static} to accommodate the follow-up covariates $M$. In particular, I define an additional scalar valued RKHS $\mathcal{H}_{\mathcal{M}}$ and construct the tensor product RKHS $\mathcal{H}=\mathcal{H}_{\mathcal{S}}\otimes \mathcal{H}_{\mathcal{D}} \otimes \mathcal{H}_{\mathcal{X}}\otimes \mathcal{H}_{\mathcal{M}}$ with feature map $\phi(s)\otimes \phi(d) \otimes \phi(x) \times \phi(m)$. In this construction, the kernel for $\mathcal{H}$ is $k(s,d,x,m;s',d',x',m')=k_\mathcal{S}(s,s')\cdot k_\mathcal{D}(d,d')\cdot  k_{\mathcal{X}}(x,x')\cdot k_{\mathcal{M}}(m,m')$.

As before, by the reproducing property, $\gamma_0\in \mathcal{H}$ implies $
\gamma_0(s,d,x,m)=\langle \gamma_0, \phi(s)\otimes \phi(d)\otimes \phi(x) \otimes \phi(m) \rangle_{\mathcal{H}} 
$. The regularity conditions in Assumption~\ref{assumption:RKHS} help to represent the causal parameters as inner products in $\mathcal{H}$.

\begin{theorem}[Representation of dose response curves via kernel mean embeddings]\label{theorem:representation_planning}
Suppose the conditions of Theorem~\ref{theorem:id_planning} hold. Further suppose Assumption~\ref{assumption:RKHS} holds and $\gamma_0\in\mathcal{H}$. Then
\begin{enumerate}
    \item $\theta_0^{ATE}(d)=\langle \gamma_0,  \phi(1)\otimes\phi(d) \otimes \int \{\phi(x)\otimes \mu_{m}(d,x)\}\mathrm{d} \mathbb{P}(x) \rangle_{\mathcal{H}}$ where $\mu_{m}(d,x):=\int \phi(m)\mathrm{d}\mathbb{P}(m|d,x)$,
    \item $\theta_0^{DS}(d,\tilde{\mathbb{P}})=\langle \gamma_0,  \phi(1)\otimes\phi(d) \otimes \int \{\phi(x)\otimes \nu_{m}(d,x)\}\mathrm{d} \tilde{\mathbb{P}}(x) \rangle_{\mathcal{H}}$ where $\nu_{m}(d,x):=\int \phi(m)\mathrm{d}\tilde{\mathbb{P}}(m|d,x)$.
\end{enumerate}
See Appendix~\ref{section:dist} for counterfactual distributions.
\end{theorem}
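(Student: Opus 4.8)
The plan is to start from the identification formula in Theorem~\ref{theorem:id_planning} and convert the nested integral against $\gamma_0$ into a single Hilbert-space inner product, exactly paralleling the proof of Theorem~\ref{theorem:representation_static}. First, since $\gamma_0\in\mathcal{H}$ and $\mathcal{H}$ is the tensor product RKHS with feature map $\phi(s)\otimes\phi(d)\otimes\phi(x)\otimes\phi(m)$, the reproducing property gives the pointwise identity $\gamma_0(1,d,x,m)=\langle\gamma_0,\phi(1)\otimes\phi(d)\otimes\phi(x)\otimes\phi(m)\rangle_{\mathcal{H}}$ for every $(d,x,m)$. Substituting this into $\theta_0^{ATE}(d)=\int\gamma_0(1,d,x,m)\,\mathrm{d}\mathbb{P}(m|d,x)\,\mathrm{d}\mathbb{P}(x)$ reduces the claim to justifying an interchange of integration and inner product.

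Second, I would verify the interchange via Bochner integration. Assumption~\ref{assumption:RKHS} guarantees that each of $k_{\mathcal{S}},k_{\mathcal{D}},k_{\mathcal{X}},k_{\mathcal{M}}$ is bounded and that $\phi(s),\phi(d),\phi(x),\phi(m)$ are measurable, so $\|\phi(1)\otimes\phi(d)\otimes\phi(x)\otimes\phi(m)\|_{\mathcal{H}}\le\kappa_s\kappa_d\kappa_x\kappa_m<\infty$ uniformly; hence the $\mathcal{H}$-valued map $(x,m)\mapsto\phi(1)\otimes\phi(d)\otimes\phi(x)\otimes\phi(m)$ is strongly measurable (separability of the RKHS together with weak measurability via the reproducing property) and Bochner integrable against $\mathbb{P}(m|d,x)\mathbb{P}(x)$. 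Therefore the bounded linear functional $\langle\gamma_0,\cdot\rangle_{\mathcal{H}}$ commutes with the (iterated) Bochner integral, and a Fubini-type argument in the Bochner setting lets me peel the integral apart in stages.

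Third, I would collapse the iterated integral into the stated form. The inner integral over $m$ given $(d,x)$ only touches $\phi(m)$; since $\phi(x)$ is constant in $m$ it factors out, and $\int\phi(m)\,\mathrm{d}\mathbb{P}(m|d,x)=\mu_m(d,x)$ by definition of the conditional mean embedding, which is a well-defined element of $\mathcal{H}_{\mathcal{M}}$ by boundedness of $k_{\mathcal{M}}$. This yields $\int\{\phi(x)\otimes\mu_m(d,x)\}\,\mathrm{d}\mathbb{P}(x)$, again a Bochner integral of a uniformly bounded $\mathcal{H}_{\mathcal{X}}\otimes\mathcal{H}_{\mathcal{M}}$-valued map, which gives part (1). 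The characteristic property of $k_{\mathcal{X}}$ and $k_{\mathcal{M}}$ (Assumption~\ref{assumption:RKHS}) ensures the embeddings encode the relevant distributions injectively, so the representation is without loss. Part (2) follows by the identical chain of steps applied with $\tilde{\mathbb{P}}$ in place of $\mathbb{P}$, after invoking Assumption~\ref{assumption:covariate_planning} to note that $\gamma_0(1,d,x,m)=\mathbb{E}_{\tilde{\mathbb{P}}}(Y|S=1,D=d,X=x,M=m)$ is unchanged across populations, so that Theorem~\ref{theorem:id_planning}(2) can be fed into the same argument.

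The main obstacle I anticipate is the measure-theoretic bookkeeping for the iterated Bochner integral: establishing strong measurability of the tensor-valued integrand and confirming that the iterated and joint Bochner integrals agree, so that the integral may be taken inside $\langle\gamma_0,\cdot\rangle_{\mathcal{H}}$ one layer at a time. Once boundedness and measurability are in hand the remaining computation is essentially mechanical and mirrors Theorem~\ref{theorem:representation_static}.
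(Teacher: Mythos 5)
Your proposal matches the paper's own argument: starting from the identification formula of Theorem~\ref{theorem:id_planning}, applying the reproducing property to $\gamma_0\in\mathcal{H}$, and using Bochner integrability (from the bounded, measurable kernels of Assumption~\ref{assumption:RKHS}) to exchange the iterated integral with the inner product, peeling off first the $m$-integral into $\mu_m(d,x)$ and then the $x$-integral, with the identical chain for $\tilde{\mathbb{P}}$. The extra remarks on strong measurability and the characteristic property are harmless but not needed; the argument is correct and essentially the same as the paper's.
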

See Appendix~\ref{section:alg_deriv} for the proof. The conditional mean embedding $\mu_{m}(d,x):=\int \phi(m)\mathrm{d}\mathbb{P}(m|d,x)$ encodes the conditional distribution $\mathbb{P}(m|d,x)$. The sequential mean embedding $\int \{\phi(x)\otimes \mu_{m}(d,x)\}\mathrm{d}\mathbb{P}(x)$ encodes the counterfactual distribution of the sequence of covariates $(X,M)$ when treatment is $D=d$, adapting a technique introduced by \cite{singh2021workshop}. The inner product representation suggests estimators that are inner products, e.g. $\hat{\theta}^{ATE}(d)=\langle \hat{\gamma},  \phi(1)\otimes\phi(d) \otimes \frac{1}{n}\sum_{i=1}^n\phi(X_i)\otimes \hat{\mu}_{m}(d,X_i)\} \rangle_{\mathcal{H}}$ where $\hat{\gamma}$ is a standard kernel ridge regression using the selected sample and $\hat{\mu}_{m}(d,x)$ is a generalized kernel ridge regression.

\begin{algorithm}[Nonparametric estimation of dose response curves]\label{algorithm:planning}
Denote the empirical kernel matrices
$
K_{SS}$, $K_{DD}$, $K_{XX}$, $K_{MM}\in\mathbb{R}^{n\times n}
$
calculated from observations drawn from population $\mathbb{P}$.  Denote the empirical kernel matrices
$
K_{\tilde{D}\tilde{D}}$, $K_{\tilde{X}\tilde{X}}$, $K_{\tilde{M}\tilde{M}}\in\mathbb{R}^{\tilde{n}\times \tilde{n}}
$
calculated from observations drawn from population $\tilde{\mathbb{P}}$. Dose response estimators have the closed form solutions
\begin{enumerate}
\item $\hat{\omega}(1,d;x)=(S \odot Y)^{\top}(K_{SS}\odot K_{DD}\odot K_{XX}\odot K_{MM}+n\lambda I)^{-1}$ \\
        $[K_{S1}\odot K_{Dd} \odot  K_{Xx}\odot 
        \{K_{MM}(K_{DD}\odot K_{XX}+n\lambda_4 I)^{-1}(K_{Dd}\odot K_{Xx})\}]  $,
    \item $\hat{\theta}^{ATE}(d)=\frac{1}{n}\sum_{i=1}^n \hat{\omega}(1,d;X_i)$,
    \item $\hat{\theta}^{DS}(d,\tilde{\mathbb{P}})=\frac{1}{n}\sum_{i=1}^{\tilde{n}} (S \odot Y)^{\top}(K_{SS}\odot K_{DD}\odot K_{XX}\odot K_{MM}+n\lambda I)^{-1}$ \\
        $[K_{S1}\odot K_{Dd} \odot  K_{X\tilde{x}_{i}}\odot \{K_{M \tilde{M}}(K_{\tilde{D}\tilde{D}}\odot K_{\tilde{X}\tilde{X}}+\tilde{n}\lambda_5 I)^{-1}(K_{\tilde{D}d}\odot K_{\tilde{X}\tilde{x}_{i}})\}]  $,
\end{enumerate}
where $(\lambda,\lambda_4,\lambda_5)$ are ridge regression penalty parameters.\footnote{Appendix~\ref{section:alg_deriv} presents alternative closed form expressions that run faster in numerical packages.} See Appendix~\ref{section:dist} for counterfactual distributions.
\end{algorithm}

See Appendix~\ref{section:alg_deriv} for the derivation. See Theorem~\ref{theorem:consistency_planning} for theoretical values of $(\lambda,\lambda_4,\lambda_5)$ that balance bias and variance. See Appendix~\ref{section:tuning} for a practical tuning procedure based on the closed form solution of leave-one-out cross validation to empirically balance bias and variance.

\subsection{Algorithm: Discrete treatment}

So far, I have presented nonparametric algorithms for the continuous treatment cases. Now I present semiparametric algorithms for the discrete treatment case. Towards this end, I quote the multiply robust moment function for the dynamic sample selection model parametrized to avoid density estimation.
\begin{lemma}[Multiply robust moment for dynamic sample selection \cite{bia2020double}]\label{theorem:dr_planning}
\; \\ Suppose treatment $D$ is binary, so that $d\in\{0,1\}$. If Assumption~\ref{assumption:selection_planning} holds then
\begin{align*}
\theta^{ATE}_0(d)
    &=\mathbb{E}\bigg[
    \omega_0(1,d;X) 
   + \frac{\1_{D=d}S}{\pi_0(d;X)\rho_0(1;d,X,M)}\{SY-\gamma_0(1,d,X,M)\}\\
    &\quad + \frac{\1_{D=d}}{\pi_0(d;X)}\left\{\gamma_0(1,d,X,M)-\omega_0(1,d;X)\right\}
\bigg],
\end{align*}
where $\omega_0(1,d;X):=\int \gamma_0(1,d,X,m) \mathrm{d}\mathbb{P}(m|d,X)$ is the sequential mean embedding, $ \pi_0(d;X):=\mathbb{P}(d|X)$ is the treatment propensity score, and $\rho_0(1;d,X,M):=\mathbb{P}(S=1|d,X,M)$ is the selection propensity score.
\end{lemma}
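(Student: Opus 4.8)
The plan is to expand the bracketed expression on the right-hand side as a sum of three terms,
\begin{align*}
T_1 &:= \omega_0(1,d;X),\\
T_2 &:= \frac{\1_{D=d}S}{\pi_0(d;X)\,\rho_0(1;d,X,M)}\{SY-\gamma_0(1,d,X,M)\},\\
T_3 &:= \frac{\1_{D=d}}{\pi_0(d;X)}\{\gamma_0(1,d,X,M)-\omega_0(1,d;X)\},
\end{align*}
and to show separately that $\mathbb{E}[T_1]=\theta_0^{ATE}(d)$ and $\mathbb{E}[T_2]=\mathbb{E}[T_3]=0$. The first identity is immediate from Theorem~\ref{theorem:id_planning}: by the tower property and the definition of the sequential mean embedding, $\mathbb{E}[\omega_0(1,d;X)] = \int\!\int \gamma_0(1,d,x,m)\,\mathrm{d}\mathbb{P}(m\mid d,x)\,\mathrm{d}\mathbb{P}(x) = \theta_0^{ATE}(d)$. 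So the work is to show that the two correction terms integrate to zero.

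For $T_2$, I would condition on the $\sigma$-field generated by $(S,D,X,M)$. The weight $\1_{D=d}S\{\pi_0(d;X)\rho_0(1;d,X,M)\}^{-1}$ is measurable with respect to this $\sigma$-field---and almost surely finite on the event $\{D=d,\,S=1\}$ by the overlap part of Assumption~\ref{assumption:selection_planning}---so it pulls out of the inner conditional expectation. On $\{D=d,\,S=1\}$ we have $SY=Y$, and the definition $\gamma_0(1,d,x,m)=\mathbb{E}(Y\mid S=1,D=d,X=x,M=m)$ gives $\mathbb{E}[SY-\gamma_0(1,d,X,M)\mid S=1,D=d,X,M]=0$; on the complement the indicator $\1_{D=d}S$ annihilates the term. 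Iterated expectations then yield $\mathbb{E}[T_2]=0$.

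For $T_3$ I would instead condition on $(D,X)$. The weight $\1_{D=d}/\pi_0(d;X)$ is measurable with respect to that coarser $\sigma$-field (and a.s.\ finite on $\{D=d\}$ by overlap), hence pulls out. On $\{D=d\}$, the definition of the sequential mean embedding gives $\mathbb{E}[\gamma_0(1,d,X,M)\mid D=d,X]=\int\gamma_0(1,d,X,m)\,\mathrm{d}\mathbb{P}(m\mid d,X)=\omega_0(1,d;X)$, so $\mathbb{E}[\gamma_0(1,d,X,M)-\omega_0(1,d;X)\mid D=d,X]=0$; on the complement the indicator kills the term. Thus $\mathbb{E}[T_3]=0$, and adding the three contributions proves the identity.

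There is no genuinely hard step here: the argument is pure bookkeeping with the law of iterated expectations. The only points requiring attention are (i) choosing, for each correction term, the conditioning $\sigma$-field on which the relevant inverse-probability weight is measurable, so that it factors through; and (ii) invoking the overlap part of Assumption~\ref{assumption:selection_planning} to ensure each weight is well defined a.s.\ wherever its accompanying indicator is nonzero. It is worth emphasizing that the vanishing of $T_2$ and $T_3$ relies only on the \emph{definitions} of $\gamma_0$ and $\omega_0$ together with iterated expectations, not on conditional exchangeability or no interference; those causal assumptions enter only through Theorem~\ref{theorem:id_planning} in the treatment of $T_1$. This structure is exactly what underlies the multiply robust label, and the same conditioning computations---run with a generic misspecified nuisance in place of one of $(\gamma_0,\omega_0,\pi_0,\rho_0)$---verify that robustness, although it is not needed for the identity as stated.
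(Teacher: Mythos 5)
Your proposal is correct. Note, though, that the paper does not prove this lemma at all: it is quoted verbatim from \cite{bia2020double}, and the only related arguments in the appendix are the identification proof of Theorem~\ref{theorem:id_planning} and the static doubly robust moment (Theorem~\ref{theorem:dr_static}). Your argument is exactly the standard verification that underlies the cited result, and it is in the same style as the paper's own proof of the static analogue: split off $\omega_0(1,d;X)$, whose mean equals $\theta_0^{ATE}(d)$ by Theorem~\ref{theorem:id_planning}, and kill the two correction terms by iterating expectations on $\sigma(S,D,X,M)$ and $\sigma(D,X)$ respectively, using only the definitions of $\gamma_0$ (per Remark~\ref{remark:notation}) and $\omega_0$, with overlap guaranteeing the weights are well defined where their indicators are nonzero. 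Your observation that conditional exchangeability and no interference enter only through the $T_1$ term is accurate and is a nice way to see why the display is an algebraic identity in the nuisances rather than a causal statement. The only point you leave implicit is integrability when invoking the tower property: with $Y$ bounded (Assumption~\ref{assumption:original}) and the weights having unit expectation by Tonelli, each $T_j$ is integrable, which is the level of rigor at which both the paper and \cite{bia2020double} operate; mentioning it would make the argument airtight.
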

The equation in Lemma~\ref{theorem:dr_planning} has three important properties. First, it consists of estimable quantities given the data limitations of the dynamic sample selection problem. Second, it is multiply robust with respect to the nonparametric objects $(\gamma_0,\omega_0,\pi_0,\rho_0)$ in the sense that it continues to hold if one of the nonparametric objects is misspecified. For example, for any $\gamma$, 
\begin{align*}
    \theta_0^{ATE}(d)
    &=\mathbb{E}\bigg[
    \omega_0(1,d;X)  + \frac{\1_{D=d}S}{\pi_0(d;X)\rho_0(1;d,X,M)}\{SY-\gamma(1,d,X,M)\} \\
    &\quad + \frac{\1_{D=d}}{\pi_0(d;X)}\left\{\gamma(1,d,X,M)-\omega_0(1,d;X)\right\}
\bigg].
\end{align*}
Multiple robustness implies that the semiparametric guarantees below hold even if one of the nonparametric objects $(\gamma_0,\omega_0,\pi_0,\rho_0)$ is not actually an element of an RKHS. Third, multiple robustness is with respect to $\omega_0$ rather than the conditional density $f(m|d,x)$. The technique of sequential mean embeddings permits estimation of $\omega_0$ without estimation of $f(m|d,x)$ \cite{singh2021workshop}.

The meta algorithm that combines the multiply robust moment function from Lemma~\ref{theorem:dr_planning} with sample splitting is called DML for dynamic sample selection \cite{chernozhukov2018original,bia2020double,chernozhukov2021simple}. I instantiate DML for dynamic sample selection using the new nonparametric estimator from Algorithm~\ref{algorithm:planning}.

\begin{algorithm}[Semiparametric inference for dynamic sample selection]\label{algorithm:planning_dml}
Partition the sample into folds $(I_{\ell})_{\ell=1:L}$. Denote by $I_{\ell}^c$ observations not in fold $I_{\ell}$.
\begin{enumerate}
    \item For each fold $\ell$, estimate $(\hat{\gamma}_{\ell},\hat{\omega}_{\ell},\hat{\pi}_{\ell},\hat{\rho}_{\ell})$ from observations in $I_{\ell}^c$. 
    \item Estimate $\hat{\theta}^{ATE}(d)$ as
    \begin{align*}
        \hat{\theta}^{ATE}(d) 
        &=\frac{1}{n}\sum_{\ell=1}^L\sum_{i\in I_{\ell}} \bigg[
        \hat{\omega}_{\ell}(1,d;X_i)
        +\frac{\1_{D_i=d}S_i}{\hat{\pi}_{\ell}(d;X_i)\hat{\rho}_{\ell}(1;d,X_i,M_i)}\{S_iY_i-\hat{\gamma}_{\ell}(1,d,X_i,M_i)\} \\
        &\quad +\frac{\1_{D_i=d}}{\hat{\pi}_{\ell}(d;X_i)}\left\{\hat{\gamma}_{\ell}(1,d,X_i,M_i)-\hat{\omega}_{\ell}(1,d;X_i)\right\}\bigg].
   \end{align*}
    \item Estimate its $(1-a)\cdot 100$\% confidence interval as $\hat{\theta}^{ATE}(d)\pm c_{a}\frac{\hat{\sigma}(d)}{\sqrt{n}}$ where
    \begin{align*}
       \hat{\sigma}^2(d)
        &=\frac{1}{n}\sum_{\ell=1}^L\sum_{i\in I_{\ell}} \bigg[
        \hat{\omega}_{\ell}(1,d;X_i)
        +\frac{\1_{D_i=d}S_i}{\hat{\pi}_{\ell}(d;X_i)\hat{\rho}_{\ell}(1;d,X_i,M_i)}\{S_iY_i-\hat{\gamma}_{\ell}(1,d,X_i,M_i)\}\} \\
        &\quad +\frac{\1_{D_i=d}}{\hat{\pi}_{\ell}(d;X_i)}\left\{\hat{\gamma}_{\ell}(1,d,X_i,M_i)-\hat{\omega}_{\ell}(1,d;X_i)\right\} -\hat{\theta}^{ATE}(d)\bigg]^2,  
    \end{align*}
    and $c_{a}$ is the $1-\frac{a}{2}$ quantile of the standard Gaussian. 
\end{enumerate}
\end{algorithm}

See Appendix~\ref{section:alg_deriv} for explicit computations. The sequential mean embedding $\hat{\omega}$ is computed according to Algorithm~\ref{algorithm:planning} with regularization parameters $(\lambda,\lambda_4)$. The remaining nonparametric subroutines $(\hat{\gamma},\hat{\pi},\hat{\rho})$ are standard kernel ridge regressions with regularization parameters $(\lambda,\lambda_6,\lambda_7)$. See Theorem~\ref{theorem:inference_planning} for  theoretical values of the regularization parameters that balance bias and variance. See Appendix~\ref{section:tuning} for a practical tuning procedure based on the closed form solution of leave-one-out cross validation to empirically balance bias and variance. 

Unlike \cite[Algorithm 3]{bia2020double}, Algorithm~\ref{algorithm:planning_dml} does not require multiple levels of sample splitting. Crucially, I am able to prove a new, sufficiently fast rate of convergence on the sequential mean embedding $\hat{\omega}$ from Algorithm~\ref{algorithm:planning} without additional sample splitting, which allows for this simplification. 

\subsection{Guarantees: Continuous treatment}

As in Section~\ref{section:static}, I place regularity conditions on the original spaces, assume the regression $\gamma_0$ is smooth, and assume the conditional mean embeddings $\mu_{m}(d,x)$ and $\nu_{m}(d,x)$ are smooth in order to prove uniform consistency for the continuous treatment case.

\begin{theorem}[Uniform consistency of dose and incremental response curves]\label{theorem:consistency_planning}
Suppose Assumptions~\ref{assumption:selection_planning},~\ref{assumption:RKHS},~\ref{assumption:original}, and~\ref{assumption:smooth_gamma} hold. Set $(\lambda,\lambda_4,\lambda_5)$= $(n^{-\frac{1}{c+1}},n^{-\frac{1}{c_4+1}},\tilde{n}^{-\frac{1}{c_5+1}})$.
\begin{enumerate}
    \item If in addition Assumption~\ref{assumption:smooth_op} holds with $\mathcal{A}_4=\mathcal{X}$ and $\mathcal{B}_4=\mathcal{D}\times \mathcal{X}$, then
$
\|\hat{\theta}^{ATE}-\theta_0^{ATE}\|_{\infty}=O_p\left(n^{-\frac{1}{2}\frac{c-1}{c+1}}+n^{-\frac{1}{2}\frac{c_4-1}{c_4+1}}\right).
$
    \item If in addition Assumptions~\ref{assumption:covariate_planning} and~\ref{assumption:smooth_op} hold with $\mathcal{A}_5=\mathcal{X}$ and $\mathcal{B}_5=\mathcal{D}\times \mathcal{X}$, then
    $
\|\hat{\theta}^{DS}-\theta_0^{DS}\|_{\infty}=O_p\left(n^{-\frac{1}{2}\frac{c-1}{c+1}}+\tilde{n}^{-\frac{1}{2}\frac{c_5-1}{c_5+1}}\right)
$.
\end{enumerate}
See Appendix~\ref{section:dist} for counterfactual distributions.
\end{theorem}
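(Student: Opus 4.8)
The plan is to turn the sup-norm error into a sum of RKHS-norm errors via the inner-product representation of Theorem~\ref{theorem:representation_planning}, control each term by an established finite-sample rate for a (generalized) kernel ridge regression, and observe that uniformity over the treatment level $d$ costs nothing because the treatment feature map is uniformly bounded. Introduce the reweighting embedding $\bar{\mu}(d):=\int\{\phi(x)\otimes\mu_m(d,x)\}\,\mathrm{d}\mathbb{P}(x)\in\mathcal{H}_{\mathcal{X}}\otimes\mathcal{H}_{\mathcal{M}}$ and its plug-in $\hat{\bar{\mu}}(d):=\frac1n\sum_{i=1}^n\phi(X_i)\otimes\hat{\mu}_m(d,X_i)$, so that $\theta_0^{ATE}(d)=\langle\gamma_0,\phi(1)\otimes\phi(d)\otimes\bar{\mu}(d)\rangle_{\mathcal{H}}$ and $\hat{\theta}^{ATE}(d)=\langle\hat{\gamma},\phi(1)\otimes\phi(d)\otimes\hat{\bar{\mu}}(d)\rangle_{\mathcal{H}}$. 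Adding and subtracting, applying Cauchy--Schwarz, and using the uniform feature bounds of Assumption~\ref{assumption:RKHS} yields, for every $d$,
\begin{equation*}
|\hat{\theta}^{ATE}(d)-\theta_0^{ATE}(d)|\;\leq\;\kappa_s\kappa_d\Big(\|\hat{\gamma}-\gamma_0\|_{\mathcal{H}}\,\|\hat{\bar{\mu}}(d)\|+\|\gamma_0\|_{\mathcal{H}}\,\|\hat{\bar{\mu}}(d)-\bar{\mu}(d)\|\Big),
\end{equation*}
so taking $\sup_d$ reduces the claim to controlling (i) $\|\hat{\gamma}-\gamma_0\|_{\mathcal{H}}$, (ii) $\sup_d\|\hat{\bar{\mu}}(d)\|$, and (iii) $\sup_d\|\hat{\bar{\mu}}(d)-\bar{\mu}(d)\|$.

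Step (i) is the standard source-condition analysis of kernel ridge regression on the selected sample: under Assumption~\ref{assumption:smooth_gamma} with $\lambda=n^{-1/(c+1)}$ one obtains $\|\hat{\gamma}-\gamma_0\|_{\mathcal{H}}=O_p\big(n^{-\frac12\frac{c-1}{c+1}}\big)$, following~\cite{singh2020kernel,singh2021workshop}. For (iii) I decompose
\begin{equation*}
\hat{\bar{\mu}}(d)-\bar{\mu}(d)=\frac1n\sum_{i=1}^n\phi(X_i)\otimes\{\hat{\mu}_m(d,X_i)-\mu_m(d,X_i)\}+\Big\{\frac1n\sum_{i=1}^n\phi(X_i)\otimes\mu_m(d,X_i)-\bar{\mu}(d)\Big\}.
\end{equation*}
The first summand has norm at most $\kappa_x\sup_{d,x}\|\hat{\mu}_m(d,x)-\mu_m(d,x)\|_{\mathcal{H}_{\mathcal{M}}}$; writing $\mu_m(d,x)=E_4^*\{\phi(d)\otimes\phi(x)\}$ for the conditional expectation operator $E_4$ sending $f\mapsto\mathbb{E}\{f(M)\mid D=\cdot,X=\cdot\}$, Assumption~\ref{assumption:smooth_op} and the generalized-kernel-ridge-regression rate of~\cite{singh2020kernel} (via the reproducing property in the operator RKHS and the feature bounds of Assumption~\ref{assumption:RKHS}) give $O_p\big(n^{-\frac12\frac{c_4-1}{c_4+1}}\big)$ with $\lambda_4=n^{-1/(c_4+1)}$; the same estimate also settles (ii), since moreover $\|\mu_m(d,x)\|_{\mathcal{H}_{\mathcal{M}}}\le\kappa_m$ for all $(d,x)$ as $\mu_m(d,x)$ is a mean of a feature with $\|\phi(m)\|\le\kappa_m$ under a probability measure. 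The second summand equals the fixed bounded operator $\mathrm{id}\otimes E_4^*$ applied to $\frac1n\sum_i\phi(X_i)\otimes\phi(d)\otimes\phi(X_i)-\mathbb{E}[\phi(X)\otimes\phi(d)\otimes\phi(X)]$; the $\phi(d)$ factor pulls out with norm at most $\kappa_d$, so a Hilbert-space concentration inequality for i.i.d.\ bounded summands bounds its supremum over $d$ by $O_p(n^{-1/2})$.

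Collecting terms gives $\|\hat{\theta}^{ATE}-\theta_0^{ATE}\|_\infty=O_p\big(n^{-\frac12\frac{c-1}{c+1}}+n^{-\frac12\frac{c_4-1}{c_4+1}}+n^{-1/2}\big)$, and since $c,c_4\in(1,2]$ forces both exponents to be strictly below $\tfrac12$, the $n^{-1/2}$ term is dominated and the stated rate follows. The distribution-shift claim is the same argument with $\mathbb{P}$ replaced by $\tilde{\mathbb{P}}$ in the embedding and in the empirical average (so $E_5$, $\tilde n$, and $\lambda_5$ appear), invoking Assumption~\ref{assumption:covariate_planning} so that $\hat{\gamma}$ remains trained on the $\mathbb{P}$-sample; there $\tilde n^{-1/2}$ is dominated by $\tilde n^{-\frac12\frac{c_5-1}{c_5+1}}$. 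The incremental-response statements carry over verbatim after replacing $\phi(d)$ with $\nabla_d\phi(d)$, whose norm is bounded by $\kappa_d'$ under the differentiability clause of Assumption~\ref{assumption:RKHS}, so no term changes order. I expect the crux to be step (iii): the estimand is the two-stage sequential mean embedding $\hat{\omega}=\langle\hat{\gamma},\,\cdot\otimes\hat{\mu}_m\rangle$, and one must push the operator-level smoothness of $E_4$ (resp.\ $E_5$) over the product conditioning space $\mathcal{D}\times\mathcal{X}$ into a genuinely uniform-in-$(d,x)$ norm bound on $\hat{\mu}_m-\mu_m$, while simultaneously showing the plug-in $\hat{\bar\mu}(d)$ stays $O_p(1)$ so that it only multiplies the $\|\hat{\gamma}-\gamma_0\|_{\mathcal{H}}$ term; carefully tracking the finite-sample constants through this nesting---so as to deliver the new, sufficiently fast sequential-embedding rate on which the later semiparametric results depend---is where the real work lies, rather than in any single invoked rate.
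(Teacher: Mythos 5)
Your proposal is correct and takes essentially the same route as the paper: the same inner-product decomposition (the paper merely regroups your two terms into three by writing $\hat{\bar{\mu}}(d)=\bar{\mu}(d)+\Delta_p$ with $\Delta_p$ the embedding error), the same RKHS-norm kernel ridge regression rate for $\hat{\gamma}$, the same uniform-in-$(d,x)$ conditional mean embedding rate for $\hat{\mu}_m$ under the operator smoothness assumption, and an $O_p(n^{-1/2})$ Hilbert-space concentration bound for the empirical average of $\phi(X_i)\otimes\mu_m(d,X_i)$, with the distribution-shift and incremental-response cases handled identically. Your trick of factoring out $\phi(d)$ and the bounded operator $\mathrm{id}\otimes E_4^*$ to get uniformity in $d$ of that last $n^{-1/2}$ term is a slightly more careful justification of what the paper simply asserts via its Bennett-inequality lemma, but it is a minor variation within the same argument, not a different approach.
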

See Appendix~\ref{section:consistency_proof} for the proof and exact finite sample rates. As before, these rates are at best $n^{-\frac{1}{6}}$ when $(c,c_4,c_5)=2$.

\subsection{Guarantees: Discrete treatment}

Consider the semiparametric case where $D$ is binary. Then $\theta_0^{ATE}(d)$ is a vector in $\mathbb{R}^{2}$ and Theorem~\ref{theorem:consistency_planning} simplifies to a guarantee on the maximum element of the vector of differences $|\hat{\theta}^{ATE}(d)-\theta_0^{ATE}(d)|$. For the semiparametric case, I improve the rate from $n^{-\frac{1}{6}}$ to $n^{-\frac{1}{2}}$ by using Algorithm~\ref{algorithm:planning_dml} and imposing additional assumptions.

Since $\gamma_0(1,d,x,m)=\mathbb{E}(Y|S=1,D=d,X=x,M=m)$, I refer to $Y-\gamma_0(1,D,X,M)$ as the regression residual. I assume that the regression residual has non-degenerate variance.
\begin{assumption}[Non-degenerate residual]\label{assumption:resid_planning}
Assume there exists $\chi>0$ such that $\inf_{d}\mathbb{E}[\{Y-\gamma_0(1,d,X,M)\}^2]\geq \chi$.
\end{assumption}

To lighten notation for the binary treatment case, let $\pi_0(x):=\pi_0(1;x)=\mathbb{P}(D=1|X=x)$ and $\rho_0(d,x,m):=\rho_0(1;d,x,m)=\mathbb{P}(S=1|D=d,X=x,M=m)$. I assume these treatment and selection propensity scores are bounded away from zero and one.

\begin{assumption}[Bounded propensities]\label{assumption:bounded_propensity_planning}
Assume propensity scores are bounded away from zero and one, i.e. there exists $\epsilon>0$ such that
\begin{enumerate}
    \item $\epsilon \leq \pi_0(X)\leq 1-\epsilon $ and $\epsilon \leq \rho_0(D,X,M)\leq 1-\epsilon $ with probability one;
    \item $\epsilon \leq \inf_{x}  \hat{\pi}(x)\leq \sup_{x} \hat{\pi}(x) \leq  1-\epsilon$ and $\epsilon \leq \inf_{d,x,m}  \hat{\rho}(d,x,m)\leq \sup_{d,x,m}  \hat{\rho}(d,x,m) \leq  1-\epsilon$ with probability approaching one.
\end{enumerate}
\end{assumption}
Part one of Assumption~\ref{assumption:bounded_propensity_planning} is a mild strengthening of the overlap condition in Assumption~\ref{assumption:selection_planning}. Part two can be imposed by censoring extreme evaluations of the estimators $(\hat{\pi},\hat{\rho})$. Note that censoring can only improve prediction quality because $(\pi_0,\rho_0)$ are bounded away from zero and one by hypothesis. 

Finally, I assume smoothness and spectral decay assumptions for the treatment and selection propensity scores as described in Section~\ref{sec:rkhs_main}.

\begin{assumption}[Smoothness of propensities]\label{assumption:smooth_propensity_planning}
Assume $\pi_0\in \mathcal{H}_{\mathcal{X}}^{c_6}$ and $\rho_0\in (\mathcal{H}_{\mathcal{D}}\otimes \mathcal{H}_{\mathcal{X}}\otimes \mathcal{H}_{\mathcal{M}})^{c_7}$.
\end{assumption}

To quantify the effective dimension of the propensity kernels, recall the convolution operator notation from Section~\ref{sec:rkhs_main}: $\lambda_j(k)$ is the $j$-th eigenvalue of the convolution operator $L:f\mapsto \int k(\cdot, w)f(w)\mathrm{d}\mathbb{P}(w)$ of the kernel $k$.

\begin{assumption}[Effective dimension of propensities]\label{assumption:spectral_propensity_planning}
Assume $\lambda_j(k_{\mathcal{X}})\asymp j^{-b_6}$ and $\lambda_j(k_{\mathcal{D}}\cdot k_{\mathcal{X}}\cdot k_{\mathcal{M}})\asymp j^{-b_7}$.
\end{assumption}

The former statement quantifies the effective dimension of $\pi_0 \in \mathcal{H}_{\mathcal{X}}$, and the latter quantifies the effective dimension of $\rho_0 \in \mathcal{H}_{\mathcal{D}}\otimes \mathcal{H}_{\mathcal{X}}\otimes \mathcal{H}_{\mathcal{M}}$. In Appendix~\ref{sec:inference_proof}, I argue that it is sufficient for \textit{either} the kernels of the propensities $(\pi_0,\rho_0)$ to have polynomial spectral decay \textit{or} for the kernels of the regression and sequential mean embedding $(\gamma_0,\omega_0)$ to have polynomial spectral decay---a kind of double spectral robustness.

\begin{theorem}[Semiparametic consistency, Gaussian approximation, and efficiency for dynamic sample selection]\label{theorem:inference_planning}
Suppose the conditions of Theorem~\ref{theorem:consistency_planning} hold, as well as Assumptions~\ref{assumption:resid_planning},~\ref{assumption:bounded_propensity_planning},~\ref{assumption:smooth_propensity_planning}, and~\ref{assumption:spectral_propensity_planning}. Set $\lambda_6=n^{-\frac{1}{2}}$ if $b_6=\infty$ and $\lambda_6=n^{-\frac{b_6}{b_6c_6+1}}$ otherwise. Set $\lambda_7=n^{-\frac{1}{2}}$ if $b_7=\infty$ and $\lambda_7=n^{-\frac{b_7}{b_7c_7+1}}$ otherwise.
Then for any $c,c_4,c_6,c_7\in(1,2]$ and $b_6,b_7\in(1,\infty]$ satisfying
$$
\frac{\min(b_6c_6,b_7c_7)}{\min(b_6c_6,b_7c_7)+1}+\frac{\min(c,c_4)-1}{\min(c,c_4)+1}>1,
$$
the estimator $\hat{\theta}^{ATE}(d)$ is consistent, i.e. $\hat{\theta}^{ATE}(d)\overset{p}{\rightarrow}\theta_0^{ATE}(d)$, and  the confidence interval includes $\theta_0^{ATE}(d)$ with probability approaching the nominal level, i.e. $$\lim_{n\rightarrow\infty} \mathbb{P}\left[\theta_0^{ATE}(d)\in \left\{\hat{\theta}^{ATE}(d)\pm c_a\frac{\hat{\sigma}(d)}{\sqrt{n}}\right\}\right]=1-a.$$
\end{theorem}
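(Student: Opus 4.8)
The plan is to run the debiased machine learning (DML) template, exploiting that the moment function $\psi_0$ of Lemma~\ref{theorem:dr_planning} is Neyman orthogonal and is in fact the efficient influence function for $\theta_0^{ATE}(d)$ under dynamic sample selection on observables. First I would record the orthogonality: the Gateaux derivative of $\eta\mapsto\mathbb{E}[\psi(W;\eta)]$ with respect to each nuisance component $\eta\in\{\gamma,\omega,\pi,\rho\}$ vanishes at $\eta_0=(\gamma_0,\omega_0,\pi_0,\rho_0)$. This is immediate from the multiple robustness displayed after Lemma~\ref{theorem:dr_planning}: since the moment continues to hold when any single nuisance is perturbed away from its truth, every first-order term cancels. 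The same identification of $\psi_0$ as the efficient influence function, together with Assumption~\ref{assumption:resid_planning} (which makes $\sigma^2=\mathbb{E}[\psi_0(W)^2]$ bounded away from zero), will deliver the efficiency claim once Gaussian approximation is in hand.

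Second, I would assemble finite-sample rates for the four cross-fitted nuisance estimators on a generic fold, all holding conditionally on the auxiliary fold $I_\ell^c$ so that they can be treated as deterministic under that conditioning. For $\hat\gamma$, standard kernel ridge regression theory with the source condition $\gamma_0\in\mathcal{H}^c$ and $\lambda\asymp n^{-1/(c+1)}$ gives $\|\hat\gamma-\gamma_0\|_\infty=O_p(n^{-\frac12\frac{c-1}{c+1}})$, as already used in Theorem~\ref{theorem:consistency_planning}. For the sequential mean embedding $\hat\omega$ from Algorithm~\ref{algorithm:planning}, I would invoke the new rate $\|\hat\omega-\omega_0\|_\infty=O_p(n^{-\frac12\frac{c_4-1}{c_4+1}})$, obtained by decomposing $\hat\omega-\omega_0$ into a piece driven by $\hat\gamma-\gamma_0$ and a piece driven by $\hat\mu_m(d,\cdot)-\mu_m(d,\cdot)$ and bounding each in RKHS norm; crucially this argument uses only a single sample split, which is what lets Algorithm~\ref{algorithm:planning_dml} avoid the nested splitting of \cite{bia2020double}. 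For the propensities, kernel ridge regression under the effective dimension assumptions $\lambda_j(k_{\mathcal{X}})\asymp j^{-b_6}$, $\lambda_j(k_{\mathcal{D}}\cdot k_{\mathcal{X}}\cdot k_{\mathcal{M}})\asymp j^{-b_7}$ with the stated $\lambda_6,\lambda_7$ gives $\|\hat\pi-\pi_0\|_{L^2}=O_p(n^{-\frac12\frac{b_6c_6}{b_6c_6+1}})$ and $\|\hat\rho-\rho_0\|_{L^2}=O_p(n^{-\frac12\frac{b_7c_7}{b_7c_7+1}})$ (the $b=\infty$ case yielding the parametric rate $n^{-1/2}$). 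Assumption~\ref{assumption:bounded_propensity_planning} part two keeps the censored $\hat\pi,\hat\rho$ bounded away from $0$ and $1$, so the inverse-propensity weights are uniformly controlled.

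Third comes the decomposition. Writing $\mathbb{P}_n$ for the empirical average and $\hat\eta$ for the fold-wise nuisances,
\[
\sqrt n\,\big(\hat\theta^{ATE}(d)-\theta_0^{ATE}(d)\big)=\sqrt n\,\mathbb{P}_n\psi_0(W)+T_1+T_2,
\]
with $T_1:=\sqrt n\,(\mathbb{P}_n-\mathbb{P})(\psi(\cdot;\hat\eta)-\psi(\cdot;\eta_0))$ and $T_2:=\sqrt n\,\mathbb{P}(\psi(\cdot;\hat\eta)-\psi(\cdot;\eta_0))$. By cross-fitting, conditioning on $I_\ell^c$, and Chebyshev's inequality, $|T_1|$ is bounded by a constant times $\|\psi(\cdot;\hat\eta)-\psi(\cdot;\eta_0)\|_{L^2(\mathbb{P})}$, which is $o_p(1)$ because each nuisance is $L^2$-consistent and the weights are bounded (using bounded $Y$, bounded propensities, and $|\gamma_0|,|\omega_0|\le C$). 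For $T_2$, Neyman orthogonality kills the first-order term, so a second-order expansion leaves a sum of products of nuisance errors; tracking the structure of $\psi$, the surviving cross terms are (up to constants depending on $\epsilon$) of the form $\|\hat\gamma-\gamma_0\|\,\|\hat\rho-\rho_0\|$, $\|\hat\gamma-\gamma_0\|\,\|\hat\pi-\pi_0\|$, and $\|\hat\omega-\omega_0\|\,\|\hat\pi-\pi_0\|$, which is exactly why the relevant regression-side rate is governed by $\min(c,c_4)$ and the relevant propensity-side rate by $\min(b_6c_6,b_7c_7)$. Multiplying the exponents from the previous step, each cross term is $O_p\big(n^{-\frac12(\frac{\min(c,c_4)-1}{\min(c,c_4)+1}+\frac{\min(b_6c_6,b_7c_7)}{\min(b_6c_6,b_7c_7)+1})}\big)$, so $T_2=o_p(1)$ precisely under the displayed inequality.

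Finally, $\sqrt n\,\mathbb{P}_n\psi_0\rightsquigarrow\mathcal{N}(0,\sigma^2)$ by the Lyapunov central limit theorem: boundedness of $\psi_0$ (from bounded $Y$ and bounded propensities) supplies the third-moment control and Assumption~\ref{assumption:resid_planning} gives $\sigma^2>0$, with a Berry--Esseen remainder to be recorded in the appendix. Consistency of $\hat\theta^{ATE}(d)$ is then immediate since $\sqrt n$-boundedness of the error forces $o_p(1)$; consistency of $\hat\sigma^2(d)$ follows from the same $L^2$-consistency and boundedness arguments applied to the plug-in variance; and Slutsky's theorem gives $\sqrt n\,(\hat\theta^{ATE}(d)-\theta_0^{ATE}(d))/\hat\sigma(d)\rightsquigarrow\mathcal{N}(0,1)$, hence the stated asymptotic coverage, with efficiency because $\psi_0$ is the efficient influence function. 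I expect the main obstacle to be the rate on the sequential mean embedding $\hat\omega$ without an auxiliary split: since $\hat\omega$ composes the estimated regression $\hat\gamma$ with the estimated conditional mean embedding $\hat\mu_m(d,\cdot)$ fitted on the same data, the error does not decouple cleanly, and bounding it demands an RKHS-operator argument that propagates the $\hat\mu_m$ error through $\hat\gamma$ while keeping the resulting rate fast enough that the $\|\hat\omega-\omega_0\|\,\|\hat\pi-\pi_0\|$ cross term still beats $n^{-1/2}$ under the stated exponent condition.
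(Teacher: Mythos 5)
Your proposal is correct in substance and its core coincides with the paper's verification: the paper proves this theorem by checking the hypotheses of the quoted meta-theorem (Lemma~\ref{thm:dml_planning}, Theorem 4 of \cite{bia2020double}), namely bounded moments, vanishing learning rates, and the three product conditions $\sqrt{n}\|\hat{\gamma}-\gamma_0\|_2\|\hat{\pi}-\pi_0\|_2$, $\sqrt{n}\|\hat{\gamma}-\gamma_0\|_2\|\hat{\rho}-\rho_0\|_2$, $\sqrt{n}\|\hat{\omega}-\omega_0\|_2\|\hat{\pi}-\pi_0\|_2\to 0$, exactly the three cross terms you identify; these are verified by pairing sup-norm rates on the regression side (Lemma~\ref{prop:unif} for $\hat{\gamma}$ and the new Proposition~\ref{prop:unif_omega} for $\hat{\omega}$, the latter being precisely the single-split sequential-embedding rate you flag as the crux) with the Caponnetto--De Vito mean-square rates for $\hat{\pi},\hat{\rho}$ under Assumptions~\ref{assumption:smooth_propensity_planning} and~\ref{assumption:spectral_propensity_planning}, yielding the exponent condition with $\min(c,c_4)$ and $\min(b_6c_6,b_7c_7)$. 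Where you differ is that you re-derive the DML machinery itself---Neyman orthogonality from multiple robustness, the $T_1$ (cross-fitting/Chebyshev) and $T_2$ (second-order remainder) bounds, the Lyapunov CLT, and variance consistency---whereas the paper simply imports the Gaussian approximation and efficiency from \cite{bia2020double} and variance consistency from \cite[Theorem 3.2]{chernozhukov2018original}. Your route is more self-contained and your accounting of the surviving second-order terms is right (there is no $\pi\times\rho$ or $\omega\times\rho$ term, by the conditional mean-zero structure), but it carries extra burden the paper avoids: the efficiency claim is asserted rather than proven (the paper inherits it from the cited theorem), and the Berry--Esseen and variance-estimation details would need to be written out. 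Two small imprecisions to fix if you pursue your version: the uniform rate for $\hat{\omega}$ is $O_p\bigl(n^{-\frac{1}{2}\frac{c-1}{c+1}}+n^{-\frac{1}{2}\frac{c_4-1}{c_4+1}}\bigr)$, i.e.\ governed by $\min(c,c_4)$, not by $c_4$ alone (you use the correct minimum later); and the displayed inequality of the theorem is the conjunction of three pairwise conditions, which your single ``multiply the two minimal exponents'' bound implies but slightly overstates as the binding case.
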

See Appendix~\ref{sec:inference_proof} for the proof. Observe the tradeoff among the smoothness and effective dimension assumptions across nonparametric objects. The spectral decay of propensity scores $(\pi_0,\rho_0)$ parametrized by $(b_6,b_7)$ must be sufficiently fast relative to the smoothness of the various nonparametric objects $(\gamma_0,\mu_{m},\pi_0,\rho_0)$ parametrized by $(c,c_4,c_6,c_7)$. To see that this parameter region is nonempty, observe that if $(c,c_4,c_6,c_7)=2$ then the condition simplifies to $\min(b_6,b_7)>1$. In other words, when the various nonparametric objects are smooth, it is sufficient for the treatment and selection propensities to belong to RKHSs with any polynomial rate of spectral decay. 
\section{Conclusion}\label{section:conclusion}

I propose a family of novel estimators for nonparametric estimation of dose response curves from selected samples as well as semiparametric estimation of treatment effects from selected samples. The estimators are easily implemented from closed form solutions in terms of kernel matrix operations. As a contribution to the sample selection literature, I propose simple estimators with closed form solutions for new causal estimands. As a contribution to the kernel methods literature, extend the approximation and estimation theory of kernel ridge regression to causal sample selection models. I pose as a question for future research how to extend this framework to the setting with missing-not-at-random (MNAR) data.

\newpage

\appendix

\begin{center}
    \uppercase{Online appendix}
\end{center}

\section{Counterfactual distributions}\label{section:dist}

\subsection{Learning problem}

In the main text, I study target parameters defined as \textit{means} or \textit{increments} of potential outcomes. In this section, I study \textit{distributions} of potential outcomes, extending the algorithms and analyses presented in the main text. I relax the assumption that $\mathcal{Y}\subset\mathbb{R}$ to the more general setting that $\mathcal{Y}$ is a Polish space. En route to estimating each counterfactual distribution, I estimate its kernel mean embedding in an outcome RKHS $\mathcal{H}_{\mathcal{Y}}$ with feature map $\phi(y)$.

\begin{definition}[Counterfactual distributions and embeddings] 
I define the following counterfactual distributions.
\begin{enumerate}
    \item $\theta_0^{D:ATE}(d):=\mathbb{P}[Y^{(d)}]$ is the counterfactual outcome distribution given intervention $D=d$ for the entire population.
     \item $ \theta_0^{D:DS}(d,\tilde{\mathbb{P}}):=\tilde{\mathbb{P}}[Y^{(d)}]$ is the counterfactual outcome distribution given intervention $D=d$ for an alternative population with data distribution $\tilde{\mathbb{P}}$.
    \item $ \theta_0^{D:ATT}(d,d'):=\mathbb{P}[Y^{(d')}|D=d]$ is the counterfactual outcome distribution given intervention $D=d'$ for the subpopulation who actually received treatment $D=d$.
     \item $\theta_0^{D:CATE}(d,v):=\mathbb{P}[Y^{(d)}|V=v]$ is the counterfactual outcome distribution given intervention $D=d$ for the subpopulation with subcovariate value $V=v$.
\end{enumerate}
Likewise I define embeddings of the counterfactual distributions, e.g. $
\check{\theta}_0^{D:ATE}(d):=\mathbb{E}[\phi(Y^{(d)})].
$
\end{definition}

\subsection{Identification}

The same identification results apply to distribution target parameters.

\begin{theorem}[Identification of counterfactual distributions and embeddings]\label{theorem:id_dist}
If Assumption~\ref{assumption:selection_static} holds then
\begin{enumerate}
    \item $[\theta_0^{D:ATE}(d)](y)=\int \mathbb{P}(y|1,d,x)\mathrm{d}\mathbb{P}(x)$.
    \item If in addition Assumption~\ref{assumption:covariate} holds, then $[\theta_0^{D:DS}(d,\tilde{\mathbb{P}})](y)=\int \mathbb{P}(y|1,d,x) \mathrm{d}\tilde{\mathbb{P}}(x)$.
    \item $[\theta_0^{D:ATT}(d,d')](y)=\int \mathbb{P}(y|1,d',x)\mathrm{d}\mathbb{P}(x|d)$.
    \item $[\theta_0^{D:CATE}(d,v)](y)=\int \mathbb{P}(y|1,d,v,x)\mathrm{d}\mathbb{P}(x|v)$.
\end{enumerate}
If Assumption~\ref{assumption:selection_planning} holds then
\begin{enumerate}
    \item $
[\theta_0^{D:ATE}(d)](y)=\int \mathbb{P}(y|1,d,x,m) \mathrm{d}\mathbb{P}(m|d,x) \mathrm{d}\mathbb{P}(x).
$
    \item If in addition Assumption~\ref{assumption:covariate_planning} holds then  $
[\theta_0^{D:DS}(d,\tilde{\mathbb{P}})](y)=\int \mathbb{P}(y|1,d,x,m) \mathrm{d}\tilde{\mathbb{P}}(m|d,x) \mathrm{d}\tilde{\mathbb{P}}(x)
$.
\end{enumerate}
Likewise for embedding of counterfactual distributions. For example, if in addition Assumption~\ref{assumption:RKHS} holds then $\check{\theta}_0^{ATE}(d)=\int \mathbb{E}\{\phi(Y)|S=1,D=d,X=x\}\mathrm{d}\mathbb{P}(x)
$.
\end{theorem}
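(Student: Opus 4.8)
The plan is to deduce this statement as a short corollary of the mean-type identification results already established in Theorem~\ref{theorem:id_static} and Theorem~\ref{theorem:id_planning}. The key observation is that a counterfactual distribution is just the family of counterfactual means of bounded measurable transformations of the potential outcome: for any Borel set $A\subseteq\mathcal{Y}$ one has $[\theta_0^{D:ATE}(d)](A)=\mathbb{E}[\1_{A}(Y^{(d)})]$, and similarly for the DS, ATT, and CATE variants, both in the static and the dynamic settings. For a fixed measurable $g$ (take $g=\1_A$ here), the random variable $g(Y^{(d)})$ is $\sigma(Y^{(d)})$-measurable, so the no-interference and conditional exchangeability hypotheses of Assumption~\ref{assumption:selection_static} (resp.\ Assumption~\ref{assumption:selection_planning}) are inherited by the transformed outcome: $\{g(Y^{(d)})\}\indep D\mid X$ and $\{g(Y^{(d)})\}\indep S\mid D,X$ (resp.\ $\{g(Y^{(d)})\}\indep S\mid D,X,M$), and on the event $\{D=d\}$ one still has $g(Y)=g(Y^{(d)})$. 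The overlap conditions are unchanged. Hence the argument in the proof of Theorem~\ref{theorem:id_static} (resp.\ Theorem~\ref{theorem:id_planning}) applies verbatim with $Y$ replaced by $g(Y)$, giving $\mathbb{E}[\1_A(Y^{(d)})]=\int \mathbb{P}(Y\in A\mid S=1,D=d,X=x)\,\mathrm{d}\mathbb{P}(x)$ and its ATT/CATE/DS and dynamic analogues; since $\mathcal{Y}$ is Polish these regular conditional distributions exist, and reading the identities over all $A$ yields the claimed equalities at the level of measures (equivalently, densities $\mathbb{P}(y\mid\cdot)$).

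For the embedding $\check\theta_0^{ATE}(d)=\mathbb{E}[\phi(Y^{(d)})]$ I would run the same reduction with $g=\phi$ the outcome feature map, now $\mathcal{H}_{\mathcal{Y}}$-valued. Under Assumption~\ref{assumption:RKHS} the map $y\mapsto\phi(y)$ is measurable and $\sup_{y}\|\phi(y)\|_{\mathcal{H}_{\mathcal{Y}}}\leq\kappa_y$, so $\phi(Y^{(d)})$ is Bochner integrable and the $\mathcal{H}_{\mathcal{Y}}$-valued conditional expectations $\mathbb{E}\{\phi(Y)\mid S=1,D=d,X=x\}$ are well defined. The conditional independence of $Y^{(d)}$ passes to $\phi(Y^{(d)})$ because $\phi$ is a fixed measurable map, and the law of iterated expectations holds for Bochner integrals; consequently the chain of equalities from the scalar case goes through in $\mathcal{H}_{\mathcal{Y}}$, giving $\check\theta_0^{ATE}(d)=\int\mathbb{E}\{\phi(Y)\mid S=1,D=d,X=x\}\,\mathrm{d}\mathbb{P}(x)$, and likewise for the remaining static parameters and for the dynamic parameters, where the extra conditioning on $M$ and the intertemporal integration are inserted exactly as in the proof of Theorem~\ref{theorem:id_planning}.

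The main obstacle, such as it is, is measure-theoretic bookkeeping rather than a new idea: one must justify interchanging the outer integral over $X$ (and, in the dynamic case, over $M$) with the Bochner integral defining the embedding, and confirm that under overlap every conditioning event appearing in the argument has positive probability/density so the conditional expectations are meaningful. Boundedness of $k_{\mathcal{Y}}$ makes the Bochner theory routine—dominated convergence for Bochner integrals applies—and in the indicator case $g$ is bounded by one, so there is no integrability issue there either. I would therefore present the proof as a brief corollary, spelling out only (i) the transfer of no-interference and conditional exchangeability under a measurable transformation of the outcome, and (ii) the Bochner-integral version of the tower property; the characteristic property of $k_{\mathcal{Y}}$ in Assumption~\ref{assumption:RKHS} then ensures the embedding identity is equivalent to the distributional one.
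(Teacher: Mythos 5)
Your proposal is correct and takes the same route as the paper: the paper's proof of Theorem~\ref{theorem:id_dist} is simply the observation that the arguments of Theorems~\ref{theorem:id_static} and~\ref{theorem:id_planning} apply verbatim with $Y$ replaced by $\1_A(Y)$ (for the distributional claims) or $\phi(Y)$ (for the embeddings). Your additional bookkeeping---that measurable transformations of $Y^{(d)}$ inherit no interference and conditional exchangeability, and that boundedness of $k_{\mathcal{Y}}$ makes the Bochner-valued tower property licit---just makes explicit what the paper leaves implicit.
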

To relate Theorem~\ref{theorem:id_dist} with Theorems~\ref{theorem:id_static} and~\ref{theorem:id_planning}, consider the generalized regressions
$
\gamma_0(1,d,x)=\mathbb{E}\{\phi(Y)|S=1,D=d,X=x\}
$ and $\gamma_0(1,d,x,m)=\mathbb{E}\{\phi(Y)|S=1,D=d,X=x,M=m\}$. Then I can express these results in the familiar form, e.g. $\check{\theta}_0^{D:ATE}(d)=\int \gamma_0(1,d,x)\mathrm{d}\mathbb{P}(x)$.

\subsection{Algorithm}

In what follows, I present a closed form solution for the embedding of a counterfactual distribution, then sample from the counterfactual distribution \cite{welling2009herding}. Alternatively, one could use the embedding to estimate moments \cite{kanagawa2014recovering}.

As previewed above, I introduce an additional scalar valued RKHS for outcome $Y$. Note that the regression being reweighted is now a conditional mean embedding, so I define the corresponding conditional expectation operators. For static sample selection, define the conditional expectation operator
$
E_8: \mathcal{H}_{\mathcal{Y}}\rightarrow \mathcal{H}_{\mathcal{S}}\otimes \mathcal{H}_{\mathcal{D}}\otimes  \mathcal{H}_{\mathcal{X}},\; f(\cdot)\mapsto \mathbb{E}\{f(SY)|S=\cdot, D=\cdot,X=\cdot \}
$. By construction
$
\gamma_0(1,d,x)=E_8^*\{\phi(1)\otimes \phi(d)\otimes \phi(x)\}
$.  For dynamic sample selection, define the conditional expectation operator
$
E_9:\mathcal{H}_{\mathcal{Y}}\rightarrow \mathcal{H}_{\mathcal{S}}\otimes \mathcal{H}_{\mathcal{D}}\otimes  \mathcal{H}_{\mathcal{X}}\otimes  \mathcal{H}_{\mathcal{M}},\; f(\cdot)\mapsto \mathbb{E}\{f(SY)|S=\cdot,D=\cdot,X=\cdot,M=\cdot \}
$.  I represent counterfactual distributions as evaluations of $(E_8^*,E_9^*)$. 

\begin{theorem}[Representation of counterfactual distribution embeddings]\label{theorem:representation_dist}
Suppose the conditions of Theorem~\ref{theorem:id_static} hold. Further suppose Assumption~\ref{assumption:RKHS} holds and $E_8\in\mathcal{L}_2(\mathcal{H}_{\mathcal{Y}},\mathcal{H}_{\mathcal{S}}\otimes\mathcal{H}_{\mathcal{D}}\otimes \mathcal{H}_{\mathcal{X}})$ (or $\mathcal{L}_2(\mathcal{H}_{\mathcal{Y}},\mathcal{H}_{\mathcal{S}}\otimes\mathcal{H}_{\mathcal{D}}\otimes\mathcal{H}_{\mathcal{V}} \otimes  \mathcal{H}_{\mathcal{X}})$ for $\theta_0^{D:CATE}$). Then
\begin{enumerate}
    \item $\check{\theta}_0^{D:ATE}(d)=E_8^*[\phi(1)\otimes \phi(d)\otimes \mu_x] $;
    \item $\check{\theta}_0^{D:DS}(d,\tilde{\mathbb{P}})=E_8^*[\phi(1)\otimes\phi(d)\otimes \nu_x]$;
    \item $\check{\theta}_0^{D:ATT}(d,d')=E_8^*[\phi(1)\otimes\phi(d')\otimes \mu_x(d)] $;
    \item $\check{\theta}_0^{D:CATE}(d,v)=E_8^*[\phi(1)\otimes\phi(d)\otimes \phi(v)\otimes \mu_{x}(v)]$.
\end{enumerate}
Suppose the conditions of Theorem~\ref{theorem:id_planning} hold. Further suppose Assumption~\ref{assumption:RKHS} holds and $E_9\in\mathcal{L}_2(\mathcal{H}_{\mathcal{Y}},\mathcal{H}_{\mathcal{S}}\otimes \mathcal{H}_{\mathcal{D}}\otimes \mathcal{H}_{\mathcal{X}}\otimes \mathcal{H}_{\mathcal{M}})$. Then
\begin{enumerate}
    \item $\check{\theta}_0^{D:ATE}(d)=E_9^*\left[\phi(1)\otimes\phi(d) \otimes \int \{\phi(x)\otimes \mu_{m}(d,x)\} \mathrm{d}\mathbb{P}(x) \right]$; 
    \item $\check{\theta}_0^{D:DS}(d,\tilde{\mathbb{P}})=E_9^*\left[\phi(1)\otimes\phi(d) \otimes \int \{\phi(x)\otimes \nu_{m}(d,x)\} \mathrm{d}\tilde{\mathbb{P}}(x) \right]$.
\end{enumerate}
\end{theorem}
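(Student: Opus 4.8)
The plan is to follow the template of the proofs of Theorems~\ref{theorem:representation_static} and~\ref{theorem:representation_planning}: begin from the identification formulas of Theorem~\ref{theorem:id_dist}, rewrite the generalized regression in operator form, and then convert each integral against a reweighting distribution into an evaluation of an adjoint conditional expectation operator by interchanging a bounded linear operator with a Bochner integral. Concretely, under Assumption~\ref{assumption:RKHS} the scalar kernels are bounded and their feature maps measurable, so each tensor product such as $\phi(1)\otimes\phi(d)\otimes\phi(x)$ is a well-defined element of $\mathcal{H}_{\mathcal{S}}\otimes\mathcal{H}_{\mathcal{D}}\otimes\mathcal{H}_{\mathcal{X}}$ with norm bounded uniformly in $(d,x)$. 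The hypothesis $E_8\in\mathcal{L}_2(\mathcal{H}_{\mathcal{Y}},\mathcal{H}_{\mathcal{S}}\otimes\mathcal{H}_{\mathcal{D}}\otimes\mathcal{H}_{\mathcal{X}})$ makes $E_8$, hence $E_8^*$, a bounded (indeed Hilbert--Schmidt) operator, and by definition of the conditional expectation operator $\gamma_0(1,d,x)=\mathbb{E}\{\phi(Y)\mid S=1,D=d,X=x\}=E_8^*\{\phi(1)\otimes\phi(d)\otimes\phi(x)\}$; similarly $\gamma_0(1,d,x,m)=E_9^*\{\phi(1)\otimes\phi(d)\otimes\phi(x)\otimes\phi(m)\}$ in the dynamic case.

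For the static targets I would substitute this identity into Theorem~\ref{theorem:id_dist} and interchange $E_8^*$ with the integral. For $\check{\theta}_0^{D:ATE}(d)=\int\gamma_0(1,d,x)\,\mathrm{d}\mathbb{P}(x)$ the integrand $x\mapsto\phi(1)\otimes\phi(d)\otimes\phi(x)$ is Bochner integrable against $\mathbb{P}$ by the uniform norm bound and measurability, so $\int E_8^*\{\phi(1)\otimes\phi(d)\otimes\phi(x)\}\,\mathrm{d}\mathbb{P}(x)=E_8^*[\phi(1)\otimes\phi(d)\otimes\mu_x]$ with $\mu_x=\int\phi(x)\,\mathrm{d}\mathbb{P}(x)$ the kernel mean embedding (well-defined by the same bound). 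The cases $\check{\theta}_0^{D:DS}$ and $\check{\theta}_0^{D:CATE}$ with a marginal reweighting are identical, replacing $\mathbb{P}(x)$ by $\tilde{\mathbb{P}}(x)$ or $\mathbb{P}(x\mid v)$ and absorbing $\phi(v)$ into the left argument. For $\check{\theta}_0^{D:ATT}$ and $\check{\theta}_0^{D:CATE}$ the reweighting is a conditional law, so after the interchange one is left with $\int\phi(x)\,\mathrm{d}\mathbb{P}(x\mid d)=\mu_x(d)$ (resp.\ $\mu_x(v)$), which I would identify with the adjoint of the relevant conditional expectation operator applied to $\phi(d)$ (resp.\ $\phi(v)$) exactly as in the proof of Theorem~\ref{theorem:representation_static}, so no new argument is needed there.

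For the dynamic targets I would apply the interchange twice in succession. By Theorem~\ref{theorem:id_dist}, $\check{\theta}_0^{D:ATE}(d)=\int\gamma_0(1,d,x,m)\,\mathrm{d}\mathbb{P}(m\mid d,x)\,\mathrm{d}\mathbb{P}(x)$. First interchange $E_9^*$ with the inner integral over $m$: $m\mapsto\phi(1)\otimes\phi(d)\otimes\phi(x)\otimes\phi(m)$ is Bochner integrable against $\mathbb{P}(m\mid d,x)$ by boundedness of $k_{\mathcal{M}}$, giving $E_9^*[\phi(1)\otimes\phi(d)\otimes\phi(x)\otimes\mu_{m}(d,x)]$ with $\mu_{m}(d,x)=\int\phi(m)\,\mathrm{d}\mathbb{P}(m\mid d,x)$. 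Then interchange $E_9^*$ with the outer integral over $x$: the map $x\mapsto\phi(1)\otimes\phi(d)\otimes\phi(x)\otimes\mu_{m}(d,x)$ is measurable and uniformly norm bounded (since $\|\mu_{m}(d,x)\|_{\mathcal{H}_{\mathcal{M}}}\le\kappa_m$), hence Bochner integrable against $\mathbb{P}(x)$, yielding $E_9^*[\phi(1)\otimes\phi(d)\otimes\int\{\phi(x)\otimes\mu_{m}(d,x)\}\,\mathrm{d}\mathbb{P}(x)]$, the sequential mean embedding representation claimed; the distribution-shift version replaces $\mathbb{P}$ by $\tilde{\mathbb{P}}$ throughout.

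The main obstacle is making these interchanges rigorous: one must verify Bochner integrability of the tensor-product feature-map-valued integrands and boundedness of $E_8^*,E_9^*$, and in the dynamic case also that $(d,x)\mapsto\mu_{m}(d,x)$ is a measurable, uniformly norm-bounded $\mathcal{H}_{\mathcal{M}}$-valued map before it is integrated against $\mathbb{P}(x)$. All of these follow from the Hilbert--Schmidt hypotheses on $E_8,E_9$ together with the uniform kernel bounds and measurability in Assumption~\ref{assumption:RKHS} (for the conditional mean embedding, from its own operator form and the boundedness of the corresponding conditional expectation operator); everything else is bookkeeping identical to the proofs of Theorems~\ref{theorem:representation_static} and~\ref{theorem:representation_planning}.
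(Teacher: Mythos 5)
Your proposal matches the paper's own argument: the paper proves this result by noting $\gamma_0(1,d,x)=E_8^*\{\phi(1)\otimes\phi(d)\otimes\phi(x)\}$ and $\gamma_0(1,d,x,m)=E_9^*\{\phi(1)\otimes\phi(d)\otimes\phi(x)\otimes\phi(m)\}$, then repeating the Bochner-integrability/interchange arguments of Theorems~\ref{theorem:representation_static} and~\ref{theorem:representation_planning}, which is exactly your plan, including the two-stage interchange in the dynamic case. The proof is correct and takes essentially the same route.
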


See Appendix~\ref{section:alg_deriv} for the proof. The mean embeddings are the same as in the main text. These representations imply analogous estimators, e.g.
$\hat{\theta}^{D:ATE}(d)=\hat{E}_8^*\left[\phi(1)\otimes\phi(d)\otimes\hat{\mu}_x\right]$, where  $\hat{E}_9$ is a generalized kernel ridge regression and $\hat{\mu}_x$ is an empirical mean.

\begin{algorithm}[Nonparametric estimation of counterfactual distribution embeddings]\label{algorithm:dist}
Denote the empirical kernel matrices
$
K_{SS},$ $K_{DD},$ $K_{XX}$, $K_{MM}$, $K_{YY}\in\mathbb{R}^{n\times n}
$ calculated from observations drawn from population $\mathbb{P}$. Denote the empirical kernel matrices $K_{\tilde{D}\tilde{D}},$ $K_{\tilde{X}\tilde{X}}$, $K_{\tilde{M}\tilde{M}}\in\mathbb{R}^{\tilde{n}\times \tilde{n}}
$ calculated from observations drawn from population $\tilde{\mathbb{P}}$. Denote by $\odot$ the elementwise product. 
For static sample selection,
\begin{enumerate}
    \item $[\hat{\theta}^{D:ATE}(d)](y)=\frac{1}{n}\sum_{i=1}^n (S\odot K_{Yy})^{\top} (K_{SS}\odot K_{DD}\odot K_{XX}+n\lambda_8 I)^{-1}(K_{S1}\odot K_{Dd}\odot K_{Xx_i})  $;
     \item $[\hat{\theta}^{D:DS}(d,\tilde{\mathbb{P}})](y)=\frac{1}{\tilde{n}}\sum_{i=1}^{\tilde{n}} (S\odot K_{Yy})^{\top}(K_{SS}\odot K_{DD}\odot K_{XX}+n\lambda_8 I)^{-1}(K_{S1}\odot K_{Dd}\odot K_{X\tilde{x}_i}) $;
    \item $[\hat{\theta}^{D:ATT}(d,d')](y)=(S\odot K_{Yy})^{\top}(K_{SS}\odot K_{DD}\odot K_{XX}+n\lambda_8 I)^{-1}(K_{S1}\odot K_{Dd'}\odot [K_{XX}(K_{DD}+n\lambda_1 I)^{-1}K_{Dd}])$;
    \item $[\hat{\theta}^{D:CATE}(d,v)](y)=(S\odot K_{Yy})^{\top}(K_{SS}\odot K_{DD}\odot K_{VV}\odot K_{XX} +n\lambda_8 I)^{-1}(K_{S1}\odot K_{Dd}\odot K_{Vv}\odot K_{XX}(K_{VV}+n\lambda_2 I)^{-1}K_{Vv} ) $.
\end{enumerate}
For dynamic sample selection,
\begin{enumerate}
    \item $[\hat{\theta}^{D:ATE}(d)](y)=\frac{1}{n}\sum_{i=1}^n (S\odot K_{Yy})^{\top}(K_{SS}\odot K_{DD}\odot K_{XX}\odot K_{MM}+n\lambda_9 I)^{-1}$ \\
        $[K_{S1}\odot K_{Dd} \odot  K_{Xx_{i}}\odot 
        \{K_{MM}(K_{DD}\odot K_{XX}+n\lambda_4 I)^{-1}(K_{Dd}\odot K_{Xx_{i}})\}]  $,
    \item $[\hat{\theta}^{D:DS}(d,\tilde{\mathbb{P}})](y)=\frac{1}{n}\sum_{i=1}^{\tilde{n}} (S\odot K_{Yy})^{\top}(K_{SS}\odot K_{DD}\odot K_{XX}\odot K_{MM}+n\lambda_9 I)^{-1}$ \\
        $[K_{S1}\odot K_{Dd} \odot  K_{X\tilde{x}_{i}}\odot \{K_{M \tilde{M}}(K_{\tilde{D}\tilde{D}}\odot K_{\tilde{X}\tilde{X}}+\tilde{n}\lambda_5 I)^{-1}(K_{\tilde{D}d}\odot K_{\tilde{X}\tilde{x}_{i}})\}]  $,
\end{enumerate}
where $(\lambda_1,\lambda_4,\lambda_5,\lambda_8,\lambda_9)$ are ridge regression penalty hyperparameters.
\end{algorithm}
See Appendix~\ref{section:alg_deriv} for the derivation. See Theorem~\ref{theorem:consistency_dist}  for theoretical values of $(\lambda_1,\lambda_4,\lambda_5,\lambda_8,\lambda_9)$ that balance bias and variance. See Appendix~\ref{section:tuning} for a practical tuning procedure based on the closed form solution for leave-one-out cross validation to empirically balance bias and variance.

Finally, I present a kernel herding procedure  \cite{welling2009herding,muandet2020counterfactual} that uses the counterfactual distribution embedding to provide samples $(\tilde{Y_j})$ from the counterfactual distribution.

\begin{algorithm}[Nonparametric estimation of counterfactual distributions]\label{algorithm:herding}
Recall that $\hat{\theta}^{D:ATE}(d)$ is a mapping from $\mathcal{Y}$ to $\mathbb{R}$. 
Given $\hat{\theta}^{D:ATE}(d)$, calculate
\begin{enumerate}
    \item $\tilde{Y}_1=\argmax_{y\in\mathcal{Y}} \left[\{\hat{\theta}^{D:ATE}(d)\}(y)\right]$,
    \item $\tilde{Y}_{j}=\argmax_{y\in\mathcal{Y}} \left[\{ \hat{\theta}^{D:ATE}(d)\}(y)-\frac{1}{j+1}\sum_{\ell=1}^{j-1}k_{\mathcal{Y}}(\tilde{Y}_{\ell},y)\right]$ for $j>1$.
\end{enumerate}
Likewise for the other counterfactual distributions, replacing $\hat{\theta}^{D:ATE}(d)$ with the other quantities in Algorithm~\ref{algorithm:dist}.
\end{algorithm}
This procedure provides samples from counterfactual distributions, with which one may estimate the counterfactual densities or test statistical hypotheses.

\subsection{Guarantees}

As in Sections~\ref{section:static} and~\ref{section:dynamic}, I place regularity conditions on the original spaces (now allowing $\mathcal{Y}$ to be Polish), assume the generalized regression $\gamma_0$ is smooth, and assume the various conditional mean embeddings are smooth in order to prove uniform consistency.

\begin{theorem}[Nonparametric consistency of counterfactual distribution embeddings]\label{theorem:consistency_dist}
For static sample selection, suppose the conditions of Theorem~\ref{theorem:consistency_static} hold as well as Assumption~\ref{assumption:smooth_op} with $\mathcal{A}_8=\mathcal{Y}$ and $\mathcal{B}_8=\mathcal{S}\times\mathcal{D}\times \mathcal{X}$ (or $\mathcal{B}_8=\mathcal{S}\times \mathcal{D}\times \mathcal{V}\times \mathcal{X}$ for $\theta_0^{D:CATE}$). Set $(\lambda_1,\lambda_2,\lambda_8)=(n^{-\frac{1}{c_1+1}},n^{-\frac{1}{c_2+1}},n^{-\frac{1}{c_8+1}})$.
\begin{enumerate}
    \item Then
    $$
    \sup_{d\in\mathcal{D}}\|\hat{\theta}^{D:ATE}(d)-\check{\theta}_0^{D:ATE}(d)\|_{\mathcal{H}_{\mathcal{Y}}}=O_p\left(n^{-\frac{1}{2}\frac{c_8-1}{c_8+1}}\right).
$$
 \item If in addition Assumption~\ref{assumption:covariate} holds, then 
      $$
     \sup_{d\in\mathcal{D}}\|\hat{\theta}^{D:DS}(d,\tilde{\mathbb{P}})-\check{\theta}_0^{D:DS}(d,\tilde{\mathbb{P}})\|_{\mathcal{H}_{\mathcal{Y}}}=O_p\left( n^{-\frac{1}{2}\frac{c_8-1}{c_8+1}}+\tilde{n}^{-\frac{1}{2}}\right).
    $$
    \item If in addition Assumption~\ref{assumption:smooth_op} holds with $\mathcal{A}_1=\mathcal{X}$ and $\mathcal{B}_1=\mathcal{D}$, then
        $$
    \sup_{d,d'\in\mathcal{D}}\|\hat{\theta}^{D:ATT}(d,d')-\check{\theta}_0^{D:ATT}(d,d')\|_{\mathcal{H}_{\mathcal{Y}}}=O_p\left(n^{-\frac{1}{2}\frac{c_8-1}{c_8+1}}+n^{-\frac{1}{2}\frac{c_1-1}{c_1+1}}\right).
$$
 \item If in addition Assumption~\ref{assumption:smooth_op} holds with $\mathcal{A}_2=\mathcal{X}$ and $\mathcal{B}_2=\mathcal{V}$, then
      $$
     \sup_{d\in\mathcal{D},v\in\mathcal{V}}\|\hat{\theta}^{D:CATE}(d,v)-\check{\theta}_0^{D:CATE}(d,v)\|_{\mathcal{H}_{\mathcal{Y}}}=O_p\left(n^{-\frac{1}{2}\frac{c_8-1}{c_8+1}}+n^{-\frac{1}{2}\frac{c_2-1}{c_2+1}}\right).
    $$
\end{enumerate}
For dynamic sample selection, suppose the conditions of Theorem~\ref{theorem:consistency_planning} hold as well as Assumption~\ref{assumption:smooth_op} with $\mathcal{A}_9=\mathcal{Y}$ and $\mathcal{B}_9=\mathcal{S}\times \mathcal{D}\times \mathcal{X}\times \mathcal{M}$.  Set $(\lambda_4,\lambda_5,\lambda_9)$=$(n^{-\frac{1}{c_4+1}},\tilde{n}^{-\frac{1}{c_5+1}},n^{-\frac{1}{c_9+1}})$.
\begin{enumerate}
    \item Then
    $$
    \sup_{d \in\mathcal{D}}\|\hat{\theta}^{D:ATE}(d)-\check{\theta}_0^{D:ATE}(d)\|_{\mathcal{H}_{\mathcal{Y}}}=O_p\left(n^{-\frac{1}{2}\frac{c_9-1}{c_9+1}}+n^{-\frac{1}{2}\frac{c_4-1}{c_4+1}}\right).
$$
    \item If in addition Assumption~\ref{assumption:covariate_planning} holds, then 
     $$
     \sup_{d \in\mathcal{D}}\|\hat{\theta}^{D:DS}(d,\tilde{\mathbb{P}})-\check{\theta}_0^{D:DS}(d,\tilde{\mathbb{P}})\|_{\mathcal{H}_{\mathcal{Y}}}=O_p\left(n^{-\frac{1}{2}\frac{c_9-1}{c_9+1}}+\tilde{n}^{-\frac{1}{2}\frac{c_5-1}{c_5+1}}\right).
    $$
\end{enumerate}
\end{theorem}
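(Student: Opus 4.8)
The plan is to mirror the proofs of Theorems~\ref{theorem:consistency_static} and~\ref{theorem:consistency_planning}, with the scalar conditional expectation function $\gamma_0$ replaced by the $\mathcal{H}_{\mathcal{Y}}$-valued conditional mean embedding $\gamma_0(1,d,x)=E_8^*\{\phi(1)\otimes\phi(d)\otimes\phi(x)\}$ (and $\gamma_0(1,d,x,m)=E_9^*\{\phi(1)\otimes\phi(d)\otimes\phi(x)\otimes\phi(m)\}$ in the dynamic case), and with inner products and absolute values replaced by operator applications and $\mathcal{H}_{\mathcal{Y}}$-norms. The auxiliary mean embeddings $\mu_x,\mu_x(d),\mu_x(v),\nu_x,\mu_m(d,x),\nu_m(d,x)$ and their estimators are literally the same objects analyzed in the main text, so their rates can be quoted verbatim.

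Concretely, for static $\theta_0^{D:ATE}$ I would invoke the representation of Theorem~\ref{theorem:representation_dist} to write $\hat{\theta}^{D:ATE}(d)-\check{\theta}_0^{D:ATE}(d)=(\hat{E}_8^*-E_8^*)[\phi(1)\otimes\phi(d)\otimes\hat{\mu}_x]+E_8^*[\phi(1)\otimes\phi(d)\otimes(\hat{\mu}_x-\mu_x)]$, take $\mathcal{H}_{\mathcal{Y}}$-norms, and bound the factors using Assumption~\ref{assumption:RKHS} ($\|\phi(1)\|\leq\kappa_s$, $\|\phi(d)\|\leq\kappa_d$, $\|\hat{\mu}_x\|\leq\kappa_x$) together with $\|E_8^*\|_{op}\leq\|E_8\|_{\mathcal{L}_2}<\infty$. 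This gives $\sup_d\|\hat{\theta}^{D:ATE}(d)-\check{\theta}_0^{D:ATE}(d)\|_{\mathcal{H}_{\mathcal{Y}}}\lesssim\|\hat{E}_8-E_8\|_{\mathcal{L}_2}+\|\hat{\mu}_x-\mu_x\|_{\mathcal{H}_{\mathcal{X}}}$, where the supremum over $d$ costs nothing because the right-hand side no longer depends on $d$. The second term is $O_p(n^{-1/2})$ as in Theorem~\ref{theorem:consistency_static}. For the first term, $\hat{E}_8$ is an operator-valued kernel ridge regression of the bounded output features $\phi(Y)$ on the bounded characteristic kernel $k_{\mathcal{S}}\cdot k_{\mathcal{D}}\cdot k_{\mathcal{X}}$; under the source condition $E_8\in\{\mathcal{L}_2(\mathcal{H}_{\mathcal{Y}},\mathcal{H}_{\mathcal{S}}\otimes\mathcal{H}_{\mathcal{D}}\otimes\mathcal{H}_{\mathcal{X}})\}^{c_8}$ with $\lambda_8=n^{-1/(c_8+1)}$, the conditional-mean-embedding learning rate (the same estimate used for $\mu_x(d)$ and $\mu_x(v)$ in the main text, only with a different output RKHS) gives $\|\hat{E}_8-E_8\|_{\mathcal{L}_2}=O_p(n^{-\frac{1}{2}\frac{c_8-1}{c_8+1}})$. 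Summing yields the stated rate.

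The remaining cases use the identical decomposition and reuse the auxiliary-embedding rates from the proofs of Theorems~\ref{theorem:consistency_static} and~\ref{theorem:consistency_planning}: for $\theta_0^{D:DS}$ the empirical mean $\hat{\nu}_x$ over the $\tilde{n}$-sample adds $\tilde{n}^{-1/2}$ in the static case and the conditional mean embedding $\hat{\nu}_m(d,x)$ adds $\tilde{n}^{-\frac{1}{2}\frac{c_5-1}{c_5+1}}$ in the dynamic case; for $\theta_0^{D:ATT}$ and $\theta_0^{D:CATE}$ the generalized ridge regressions $\hat{\mu}_x(d)$ and $\hat{\mu}_x(v)$ add $n^{-\frac{1}{2}\frac{c_1-1}{c_1+1}}$ and $n^{-\frac{1}{2}\frac{c_2-1}{c_2+1}}$; and for dynamic $\theta_0^{D:ATE},\theta_0^{D:DS}$ the sequential mean embedding $\frac{1}{n}\sum_i\phi(X_i)\otimes\hat{\mu}_m(d,X_i)$ contributes $n^{-\frac{1}{2}\frac{c_4-1}{c_4+1}}$ exactly as in Theorem~\ref{theorem:consistency_planning} while $\hat{E}_9$ contributes $n^{-\frac{1}{2}\frac{c_9-1}{c_9+1}}$. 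In every case the supremum over $d$, $(d,d')$, or $(d,v)$ is absorbed into the uniform feature bounds $\kappa_d,\kappa_v$.

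I expect the only genuinely new ingredient to be the $\mathcal{L}_2$-norm rate for the infinite-output-dimensional ridge regressions $\hat{E}_8$ and $\hat{E}_9$. The task there is to check the hypotheses of the operator-valued Caponnetto--De Vito bound in this setting---boundedness and measurability of the output feature map $\phi(Y)$ (Assumption~\ref{assumption:RKHS}, now with $\mathcal{Y}$ Polish), the source condition, and boundedness of the input kernel---after which the rate is insensitive to the output space because the governing covariance operator acts on the input space $\mathcal{S}\times\mathcal{D}\times\mathcal{X}$ (respectively $\mathcal{S}\times\mathcal{D}\times\mathcal{X}\times\mathcal{M}$). Since this is the very estimate already deployed for the conditional mean embeddings in the main text, the extension is essentially bookkeeping, and the rest is the routine propagation of errors through products of bounded operators and vectors, exactly as in the scalar theorems.
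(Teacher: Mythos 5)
Your proposal is correct and follows essentially the same route as the paper: the paper's proof simply repeats the error decompositions of Theorems~\ref{theorem:consistency_static} and~\ref{theorem:consistency_planning} with $\|\gamma_0\|_{\mathcal{H}}$ replaced by $\|E_8\|_{\mathcal{L}_2}$ (or $\|E_9\|_{\mathcal{L}_2}$) and the regression rate replaced by the operator rate $r_E(n,\delta,c_8)$ (or $r_E(n,\delta,c_9)$) from Lemma~\ref{theorem:conditional}, exactly the vector-valued ridge/conditional-mean-embedding bound you invoke, with the auxiliary embedding rates reused verbatim. Your two-term decomposition (bounding $\|\hat{\mu}_x\|\leq\kappa_x$) versus the paper's three-term one is an immaterial difference.
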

See Appendix~\ref{section:consistency_proof} for the proof and exact finite sample rates. These rates are at best $n^{-\frac{1}{6}}$ when $(c_1,c_2,c_4,c_5,c_8,c_9)=2$. 

Finally, I prove that the samples $(\tilde{Y}_j)$ calculated from the counterfactual distribution embeddings weakly converge to the desired counterfactual distribution.
\begin{assumption}[Additional regularity]\label{assumption:regularity}
Assume
\begin{enumerate}
    \item $\mathcal{Y}$ is locally compact,
    \item $\mathcal{H}_{\mathcal{Y}}\subset\mathcal{C}$, where $\mathcal{C}$ is the space of bounded, continuous, real valued functions that vanish at infinity.
\end{enumerate}
\end{assumption}
See \cite{simon2020metrizing} for a discussion of the assumptions that $\mathcal{Y}$ is Polish and locally compact. $\mathcal{Y}=\mathbb{R}^p$ satisfies both conditions.
\begin{theorem}[Convergence in distribution of counterfactual distributions]\label{theorem:conv_dist}
Suppose the conditions of Theorem~\ref{theorem:consistency_dist} hold, as well as Assumption~\ref{assumption:regularity}. Suppose samples $(\tilde{Y}_j)$ are calculated for $\theta_0^{D:ATE}(d)$ as described in Algorithm~\ref{algorithm:herding}. Then $(\tilde{Y}_j)\overset{d}{\rightarrow} \theta_0^{D:ATE}(d)$. Likewise for the other counterfactual distributions, replacing $\hat{\theta}^{D:ATE}(d)$ with the other quantities in Algorithm~\ref{algorithm:dist}.
\end{theorem}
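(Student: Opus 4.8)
The plan is to route the argument through the maximum mean discrepancy (MMD) and then invoke a metrization-of-weak-convergence result. Write $\hat{\mathbb{Q}}_J:=\frac{1}{J}\sum_{j=1}^J \delta_{\tilde{Y}_j}$ for the empirical measure of the first $J$ herding samples, so its kernel mean embedding is $\hat{\mu}_J:=\frac{1}{J}\sum_{j=1}^J \phi(\tilde{Y}_j)\in\mathcal{H}_{\mathcal{Y}}$, and recall the identity $\mathrm{MMD}(\hat{\mathbb{Q}}_J,\theta_0^{D:ATE}(d))=\|\hat{\mu}_J-\check{\theta}_0^{D:ATE}(d)\|_{\mathcal{H}_{\mathcal{Y}}}$. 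It then suffices to show: (i) $\|\hat{\mu}_J-\check{\theta}_0^{D:ATE}(d)\|_{\mathcal{H}_{\mathcal{Y}}}\to 0$ as $J\to\infty$ with the estimation sample size $n$ large; and (ii) on the space of Borel probability measures over $\mathcal{Y}$, convergence in MMD implies weak convergence.

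For step (i) I would decompose via the triangle inequality in $\mathcal{H}_{\mathcal{Y}}$:
\begin{equation*}
\|\hat{\mu}_J-\check{\theta}_0^{D:ATE}(d)\|_{\mathcal{H}_{\mathcal{Y}}}\;\le\;\|\hat{\mu}_J-\Pi\hat{\theta}^{D:ATE}(d)\|_{\mathcal{H}_{\mathcal{Y}}}+\|\Pi\hat{\theta}^{D:ATE}(d)-\check{\theta}_0^{D:ATE}(d)\|_{\mathcal{H}_{\mathcal{Y}}},
\end{equation*}
where $\Pi$ denotes metric projection onto the marginal polytope $\overline{\mathrm{conv}}\{\phi(y):y\in\mathcal{Y}\}\subset\mathcal{H}_{\mathcal{Y}}$. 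The first term vanishes as $J\to\infty$ because kernel herding (Algorithm~\ref{algorithm:herding}) is exactly the conditional gradient / Frank--Wolfe algorithm minimizing $\|\cdot-\hat{\theta}^{D:ATE}(d)\|_{\mathcal{H}_{\mathcal{Y}}}^2$ over that polytope \cite{welling2009herding,bach2012equivalence}, which converges to the projection $\Pi\hat{\theta}^{D:ATE}(d)$; here Assumption~\ref{assumption:regularity} is used both so that $\mathcal{H}_{\mathcal{Y}}\subset\mathcal{C}$ makes the herding arg-max steps well defined (the objective vanishes at infinity on a locally compact $\mathcal{Y}$, hence attains its supremum) and so that the vertices of the polytope are the $\phi(y)$. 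For the second term, since $\check{\theta}_0^{D:ATE}(d)=\int\phi(y)\,\mathrm{d}[\theta_0^{D:ATE}(d)](y)$ is itself an element of the marginal polytope and metric projection onto a closed convex set is nonexpansive, $\|\Pi\hat{\theta}^{D:ATE}(d)-\check{\theta}_0^{D:ATE}(d)\|_{\mathcal{H}_{\mathcal{Y}}}\le\|\hat{\theta}^{D:ATE}(d)-\check{\theta}_0^{D:ATE}(d)\|_{\mathcal{H}_{\mathcal{Y}}}$, which is $O_p(n^{-\frac{1}{2}\frac{c_8-1}{c_8+1}})\to 0$ by Theorem~\ref{theorem:consistency_dist}.

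For step (ii), I would invoke the metrization theorem: $k_{\mathcal{Y}}$ is continuous, bounded, and characteristic (Assumption~\ref{assumption:RKHS}), and under Assumption~\ref{assumption:regularity} the domain $\mathcal{Y}$ is locally compact Polish with $\mathcal{H}_{\mathcal{Y}}\subset\mathcal{C}$; under exactly these hypotheses the MMD metrizes weak convergence on the set of Borel probability measures over $\mathcal{Y}$ \cite{simon2020metrizing} (see also \cite{sriperumbudur2010relation,sriperumbudur2016optimal}). Combining (i) and (ii) yields $\hat{\mathbb{Q}}_J\Rightarrow\theta_0^{D:ATE}(d)$, i.e. $(\tilde{Y}_j)\overset{d}{\rightarrow}\theta_0^{D:ATE}(d)$. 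The same argument applies verbatim to $\theta_0^{D:DS}$, $\theta_0^{D:ATT}$, $\theta_0^{D:CATE}$ and their dynamic counterparts, replacing $\hat{\theta}^{D:ATE}(d)$ and its consistency rate by the corresponding estimator from Algorithm~\ref{algorithm:dist} and the matching rate in Theorem~\ref{theorem:consistency_dist}.

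I expect the main obstacle to be the interaction of the two limiting regimes together with the polytope subtlety: the estimated embedding $\hat{\theta}^{D:ATE}(d)$ need not itself be the embedding of any probability measure (it may lie strictly outside the marginal polytope), so kernel herding only converges to its metric projection, and one must argue --- as above, via nonexpansiveness of the projection and the fact that $\check{\theta}_0^{D:ATE}(d)$ is a bona fide mean embedding --- that this projection is close to the target embedding. A secondary subtlety is stating convergence in distribution cleanly for the data-dependent, doubly indexed array $(\tilde{Y}_j)$; I would phrase it as weak convergence of $\hat{\mathbb{Q}}_J$ along a sequence $J=J_n\to\infty$ (equivalently, conditionally on the estimation sample), which is precisely what the MMD bound above delivers.
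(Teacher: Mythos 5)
Your proposal is correct and essentially mirrors the paper's proof: the paper likewise splits the error into a herding term (citing \cite[Section 4.2]{bach2012equivalence} for an $O(m^{-1/2})$ bound on $\|\hat{\theta}^{D:ATE}(d)-\frac{1}{m}\sum_{j}\phi(\tilde{Y}_j)\|_{\mathcal{H}_{\mathcal{Y}}}$) and the estimation term from Theorem~\ref{theorem:consistency_dist}, then combines them by the triangle inequality and concludes via the MMD metrization of weak convergence from \cite{sriperumbudur2016optimal}, as quoted in \cite{simon2020metrizing}. Your extra care about the possibility that $\hat{\theta}^{D:ATE}(d)$ lies outside the marginal polytope---so that herding converges only to its metric projection, controlled by nonexpansiveness and the fact that $\check{\theta}_0^{D:ATE}(d)$ is a genuine mean embedding---is a refinement of a point the paper leaves implicit in its citation of Bach et al., and it does not change the conclusion.
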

See Appendix~\ref{section:consistency_proof} for the proof.
\section{Identification}\label{sec:id}

\subsection{Static sample selection}

\begin{proof}[Proof of Theorem~\ref{theorem:id_static}]
By no interference and conditional exchangeability,
\begin{align*}
    \gamma_0(1,d,x)&=\mathbb{E}[Y|S=1,D=d,X=x] \\
    &= \mathbb{E}[Y^{(d)}|S=1,D=d,X=x] \\
    &=\mathbb{E}[Y^{(d)}|D=d,X=x]  \\
    &=\mathbb{E}[Y^{(d)}|X=x].
\end{align*}
Therefore by law of iterated expectations and Assumption~\ref{assumption:covariate}
\begin{align*}
    \theta_0^{ATE}(d)&=\mathbb{E}[Y^{(d)}]=\mathbb{E}[\mathbb{E}[Y^{(d)}|X]]=\int \gamma_0(1,d,x)\mathrm{d}\mathbb{P}(x); \\
    \theta_0^{DS}(d)&=\mathbb{E}_{\tilde{\mathbb{P}}}[Y^{(d)}]=\mathbb{E}_{\tilde{\mathbb{P}}}[\mathbb{E}_{\tilde{\mathbb{P}}}[Y^{(d)}|X]]=\int \gamma_0(1,d,x)\mathrm{d}\tilde{\mathbb{P}}(x).
\end{align*}
Similarly by law of iterated expectations and conditional exchangeability,
\begin{align*}
    \theta_0^{ATT}(d,d;)&=\mathbb{E}[Y^{(d')}|D=d]\\
    &=\mathbb{E}[\mathbb{E}[Y^{(d')}|D,X]|D=d]\\
    &=\mathbb{E}[\mathbb{E}[Y^{(d')}|X]|D=d]\\
    &=\int \gamma_0(1,d',x)\mathrm{d}\mathbb{P}(x|d); \\
    \theta_0^{CATE}(d,v)&=\mathbb{E}[Y^{(d)}|V=v]\\
    &=\mathbb{E}[\mathbb{E}[Y^{(d)}|D,V,X]|V=v]\\
    &=\mathbb{E}[\mathbb{E}[Y^{(d)}|V,X]|V=v]\\
    &=\int \gamma_0(1,d,v,x)\mathrm{d}\mathbb{P}(x|v).
\end{align*}

Next I turn to incremental effects using the nonseparable model notation $Y=Y(D,\eta)$, where $\eta$ is unobserved heterogeneity. I restructure the argument so that taking the derivative with respect to treatment does not lead to additional factors (as it otherwise would by chain rule). 

By no interference and conditional exchangeability,
\begin{align*}
\gamma_0(1,d,x)&=\mathbb{E}[Y|S=1,D=d,X=x] \\
    &= \int Y(d,\eta) \mathrm{d}\mathbb{P}(\eta|1,d,x) \\
    &=\int Y(d,\eta) \mathrm{d}\mathbb{P}(\eta|d,x) \\
    &=\int Y(d,\eta) \mathrm{d}\mathbb{P}(\eta|x).
\end{align*}
Taking the derivative yields
$$
\nabla_d\gamma_0(1,d,x)=\int \nabla_d Y(d,\eta) \mathrm{d}\mathbb{P}(\eta|x).
$$
Therefore by law of iterated expectation and Assumption~\ref{assumption:covariate}
\begin{align*}
    \theta_0^{\nabla:ATE}(d)&
    =\int \nabla_d Y(d,\eta) \mathrm{d}\mathbb{P}(\eta) \\
    &=\int \nabla_d Y(d,\eta) \mathrm{d}\mathbb{P}(\eta,x) \\
    &=\int \nabla_d Y(d,\eta) \mathrm{d}\mathbb{P}(\eta|x)\mathrm{d}\mathbb{P}(x)  \\
    &=\int \nabla_d\gamma_0(1,d,x)\mathrm{d}\mathbb{P}(x);\\
    \theta_0^{\nabla:DS}(d)&
    =\int \nabla_d Y(d,\eta) \mathrm{d}\tilde{\mathbb{P}}(\eta)\\
    &=\int \nabla_d Y(d,\eta) \mathrm{d}\tilde{\mathbb{P}}(\eta,x)\\
    &=\int \nabla_d Y(d,\eta) \mathrm{d}\tilde{\mathbb{P}}(\eta|x)\mathrm{d}\tilde{\mathbb{P}}(x) \\
    &=\int \nabla_d\gamma_0(1,d,x)\mathrm{d}\tilde{\mathbb{P}}(x).
\end{align*}
Similarly by law of iterated expectation and conditional exchangeability,
\begin{align*}
    \theta_0^{\nabla:ATT}(d,d')&
    =\int \nabla_{d'} Y(d,\eta) \mathrm{d}\mathbb{P}(\eta|d)\\
    &=\int \nabla_d Y(d,\eta) \mathrm{d}\mathbb{P}(\eta,x|d)\\
    &=\int \nabla_d Y(d,\eta) \mathrm{d}\mathbb{P}(\eta|d,x)\mathbb{P}(x|d) \\
    &=\int \nabla_d Y(d,\eta) \mathrm{d}\mathbb{P}(\eta|x)\mathrm{d}\mathbb{P}(x|d) \\
    &=\int \nabla_d\gamma_0(1,d',x)\mathrm{d}\mathbb{P}(x|d);
    \end{align*}
    \begin{align*}
    \theta_0^{\nabla:CATE}(d,v)&
    =\int \nabla_d Y(d,\eta) \mathrm{d}\mathbb{P}(\eta|v)\\
    &=\int \nabla_d Y(d,\eta) \mathrm{d}\mathbb{P}(\eta,x|v)\\
    &=\int \nabla_d Y(d,\eta) \mathrm{d}\mathbb{P}(\eta|v,x)\mathrm{d}\mathbb{P}(x|v) \\
    &=\int \nabla_d\gamma_0(1,d,v,x)\mathrm{d}\mathbb{P}(x|v).
\end{align*}
\end{proof}

\begin{proof}[Proof of Theorem~\ref{theorem:dr_static}]
For $\theta_0^{ATE}$, observe that by law of iterated expectations
\begin{align*}
    \gamma(1,d,X)
&=\mathbb{E}[S\cdot \mathbbm{1}_{D=d}\cdot \gamma(S,D,X)|S=1,D=d,X] \\
&=\mathbb{E}\left[\frac{S\cdot \mathbbm{1}_{D=d}\cdot\gamma(S,D,X)}{\mathbb{P}(S=1,D=d|X)}\mid S=1,D=d,X\right]\mathbb{P}(S=1,D=d|X) \\
&=\mathbb{E}\left[\frac{S\cdot \mathbbm{1}_{D=d}\cdot \gamma(S,D,X)}{\mathbb{P}(S=1,D=d|X)}\mid X\right] \\
&=\mathbb{E}\left[\frac{S\cdot \mathbbm{1}_{D=d}\cdot \gamma(S,D,X)}{\mathbb{P}(S=1|d,X)\mathbb{P}(d|X)}\mid X\right]
\end{align*}
Hence by law of iterated expectation and Assumption~\ref{assumption:bounded_Riesz},
$$
\mathbb{E}[\gamma(1,d,X)] =\mathbb{E}\left[\mathbb{E}\left[\frac{S\cdot \mathbbm{1}_{D=d}\cdot \gamma(S,D,X)}{\mathbb{P}(S=1|d,X)\mathbb{P}(d|X)}\mid X\right]\right]=\mathbb{E}\left[\frac{S\cdot \mathbbm{1}_{D=d}\cdot \gamma(S,D,X)}{\mathbb{P}(S=1|d,X)\mathbb{P}(d|X)}\right].
$$
Likewise for $\theta_0^{DS}$. 

Next, consider $\theta_0^{ATT}$. By Theorem~\ref{theorem:id_static} and law of iterated expectations 
\begin{align*}
    \theta_0^{ATT}(d,d')
    &=\int \gamma_0(1,d',x) \mathrm{d}\mathbb{P}(x|d)\\
    &=\frac{1}{\mathbb{P}(d)}\int \gamma_0(1,d',x) \mathbb{P}(d|x)\mathrm{d}\mathbb{P}(x)\\
    &=\frac{1}{\mathbb{P}(d)}\int \gamma_0(1,d',x) \mathbbm{1}_{D=d}\mathrm{d}\mathbb{P}(x).
\end{align*}
Moreover, by law of iterated expectations,
\begin{align*}
    \int \gamma(1,d',x) \mathbbm{1}_{D=d}\mathrm{d}\mathbb{P}(x) 
    &= \mathbb{E}\left[\mathbb{E}\left[\frac{S\cdot \mathbbm{1}_{D=d'} \cdot \gamma(S,D,X)}{\mathbb{P}(S=1|d',X)\mathbb{P}(d'|X)} \mid X\right]\mathbbm{1}_{D=d}\right] \\
    &=\mathbb{E}\left[\mathbb{E}\left[\frac{S\cdot \mathbbm{1}_{D=d'} \cdot \gamma(S,D,X)}{\mathbb{P}(S=1|d',X)\mathbb{P}(d'|X)} \mid X\right]\mathbb{P}(d|X)\right]\\
    &=\mathbb{E}\left[\frac{S\cdot \mathbbm{1}_{D=d'} \cdot \gamma(S,D,X)}{\mathbb{P}(S=1|d',X)\mathbb{P}(d'|X)} \mathbb{P}(d|X)\right]
\end{align*}

Finally, consider $\theta_0^{CATE}$. By Theorem~\ref{theorem:id_static} and law of iterated expectations 
\begin{align*}
    \theta_0^{CATE}(d,v)
    &=\int \gamma_0(1,d,v,x) \mathrm{d}\mathbb{P}(x|v)\\
    &=\frac{1}{\mathbb{P}(v)}\int \gamma_0(1,d,v,x) \mathbb{P}(v|x)\mathrm{d}\mathbb{P}(x)\\
    &=\frac{1}{\mathbb{P}(v)}\int \gamma_0(1,d,v,x) \mathbbm{1}_{V=v}\mathrm{d}\mathbb{P}(x).
\end{align*}
Moreover, by law of iterated expectations,
\begin{align*}
    \int \gamma(1,d,v,x) \mathbbm{1}_{V=v}\mathrm{d}\mathbb{P}(x) 
     &= \mathbb{E}\left[\gamma(1,d,V,X) \mathbbm{1}_{V=v}\right]\\
    &= \mathbb{E}\left[\mathbb{E}\left[\frac{S\cdot \mathbbm{1}_{D=d} \cdot \gamma(S,D,V,X)}{\mathbb{P}(S=1|d,V,X)\mathbb{P}(d|V,X)} \mid V,X\right]\mathbbm{1}_{V=v}\right] \\
    &=\mathbb{E}\left[\frac{S\cdot \mathbbm{1}_{D=d} \cdot \gamma(S,D,V,X)}{\mathbb{P}(S=1|d,V,X)\mathbb{P}(d|V,X)} \mathbbm{1}_{V=v}\right].
\end{align*}

\end{proof}

\subsection{Dynamic sample selection}

\begin{proof}[Proof of Theorem~\ref{theorem:id_planning}]
By no interference and conditional exchangeability,
\begin{align*}
    \gamma_0(1,d,x,m)&=\mathbb{E}[Y|S=1,D=d,X=x,M=m] \\
    &= \mathbb{E}[Y^{(d)}|S=1,D=d,X=x,M=m] \\
    &=\mathbb{E}[Y^{(d)}|D=d,X=x,M=m].
\end{align*}
By conditional exchangeability and law of iterated expectation
\begin{align*}
    \mathbb{E}[Y^{(d)}|X=x]
    = \mathbb{E}[Y^{(d)}|D=d,X=x] 
    =\mathbb{E}[\mathbb{E}[Y^{(d)}|D,X,M]|D=d,X=x].
\end{align*}
In summary
\begin{align*}
\mathbb{E}[Y^{(d)}|X=x]
&=\mathbb{E}[\gamma_0(1,D,X,M)|D=d,X=x]\\
&=\mathbb{E}[\gamma_0(1,d,x,M)|D=d,X=x]\\
&=\int \gamma_0(1,d,x,m) \mathrm{d}\mathbb{P}(m|d,x).
\end{align*}
Therefore by law of iterated expectations and Assumption~\ref{assumption:covariate}
\begin{align*}
    \theta_0^{ATE}(d)&=\mathbb{E}[Y^{(d)}]=\mathbb{E}[\mathbb{E}[Y^{(d)}|X]]=\int \gamma_0(1,d,x,m)\mathrm{d}\mathbb{P}(m|d,x)\mathrm{d}\mathbb{P}(x); \\
    \theta_0^{DS}(d)&=\mathbb{E}_{\tilde{\mathbb{P}}}[Y^{(d)}]=\mathbb{E}_{\tilde{\mathbb{P}}}[\mathbb{E}_{\tilde{\mathbb{P}}}[Y^{(d)}|X]]=\int \gamma_0(1,d,x,m)\mathrm{d}\tilde{\mathbb{P}}(m|d,x)\mathrm{d}\tilde{\mathbb{P}}(x).
\end{align*}
\end{proof}

\begin{proof}[Proof of Theorem~\ref{theorem:id_dist}]
Identical to the proofs of Theorems~\ref{theorem:id_static} and~\ref{theorem:id_planning}.
\end{proof}

\section{Algorithm derivation}\label{section:alg_deriv}

\subsection{Static sample selection}

\begin{proof}[Proof of Theorem~\ref{theorem:representation_static}]
I generalize \cite[Theorem 6.3]{singh2020kernel} for the static sample selection problem. Recall from Remark~\ref{remark:notation} that
$
\gamma_0(s,d,x)=\mathbb{E}[SY|S=s,D=d,X=x],
$ which by hypothesis is an element of a tensor product RKHS.

In Assumption~\ref{assumption:RKHS}, I impose that the scalar kernels are bounded. This assumption has several implications. First, the feature maps are Bochner integrable \cite[Definition A.5.20]{steinwart2008support}. Bochner integrability permits the exchange of expectation and inner product. Second, the mean embeddings exist. Third, the product kernel is also bounded and hence the tensor product RKHS inherits these favorable properties. By Theorem~\ref{theorem:id_static} and linearity of expectation
 \begin{align*}
    \theta_0^{ATE}(d)&= \int \gamma_0(1,d,x)\mathrm{d}\mathbb{P}(x)\\
    &=\int \langle \gamma_0, \phi(1)\otimes \phi(d)\otimes \phi(x)\rangle_{\mathcal{H}}  \mathrm{d}\mathbb{P}(x) \\
    &= \langle \gamma_0, \phi(1)\otimes\phi(d)\otimes \int\phi(x) \mathrm{d}\mathbb{P}(x) \rangle_{\mathcal{H}} \\
    &= \langle \gamma_0, \phi(1)\otimes\phi(d)\otimes \mu_x \rangle_{\mathcal{H}}.
\end{align*}
Likewise for $\theta_0^{DS}(d,\tilde{\mathbb{P}})$. Next,
\begin{align*}
    \theta_0^{ATT}(d,d')&= \int \gamma_0(1,d',x)\mathrm{d}\mathbb{P}(x|d)\\
    &=\int \langle \gamma_0, \phi(1)\otimes \phi(d')\otimes \phi(x)\rangle_{\mathcal{H}}  \mathrm{d}\mathbb{P}(x|d) \\
    &= \langle \gamma_0, \phi(1)\otimes \phi(d')\otimes \int\phi(x) \mathrm{d}\mathbb{P}(x|d) \rangle_{\mathcal{H}}\\
    &= \langle \gamma_0, \phi(1)\otimes \phi(d')\otimes \mu_x(d) \rangle_{\mathcal{H}}.
\end{align*}
Finally, 
 \begin{align*}
    \theta_0^{CATE}(d,v)&= \int \gamma_0(1,d,v,x)\mathrm{d}\mathbb{P}(x|v)\\
    &=\int \langle \gamma_0, \phi(1)\otimes \phi(d)\otimes \phi(v) \otimes \phi(x)\rangle_{\mathcal{H}}  \mathrm{d}\mathbb{P}(x|v) \\
    &= \langle \gamma_0, \phi(1)\otimes \phi(d)\otimes \phi(v) \otimes \int\phi(x) \mathrm{d}\mathbb{P}(x|v) \rangle_{\mathcal{H}} \\
    &= \langle \gamma_0, \phi(1)\otimes  \phi(d)\otimes \phi(v) \otimes \mu_{x}(v) \rangle_{\mathcal{H}}.
\end{align*}
\cite[Lemma 4.34]{steinwart2008support} guarantees that the derivative feature map $\nabla_d\phi(d)$ exists, is continuous, and is Bochner integrable (since $\kappa_d'=\sqrt{\sup_{d,d'\in\mathcal{D}}\nabla_d\nabla_{d'}k(d,d')}<\infty$). Therefore the derivations remain valid for incremental effects.
\end{proof}

\begin{proof}[Derivation of Algorithm~\ref{algorithm:static}]
I generalize \cite[Algorithm 6.4]{singh2020kernel} for the static sample selection problem.

Recall from Remark~\ref{remark:notation} that
$$
\gamma_0(1,d,x)=\mathbb{E}[Y|S=1,D=d,X=x]=\mathbb{E}[SY|S=1,D=d,X=x].
$$
The final expression is in terms of observables $(SY,S,D,X)$ as desired. Extending standard arguments \cite{kimeldorf1971some} according to this representation,
\begin{align*}
\hat{\gamma}(1,d,x)&=
\langle \hat{\gamma}, \phi(1)\otimes \phi(d)\otimes \phi(x) \rangle_{\mathcal{H}}\\
&=(S \odot Y)^{\top}(K_{SS} \odot K_{DD}\odot K_{XX}+n\lambda)^{-1}(K_{S1}\odot K_{Dd}\odot K_{Xx}).    
\end{align*}
    The results for $\hat{\theta}^{ATE}(d)$ holds by substitution:
   $$
   \hat{\mu}_x=\frac{1}{n}\sum_{i=1}^n \phi(x_i),\quad \hat{\theta}^{ATE}(d)=\langle \hat{\gamma}, \phi(1)\otimes  \phi(d)\otimes \hat{\mu}_x\rangle_{\mathcal{H}}.
   $$
    Likewise for $\hat{\theta}^{DS}(d,\tilde{\mathbb{P}})$. 
    
    The results for $\hat{\theta}^{ATT}(d,d')$ and $\hat{\theta}^{CATE}(d,v)$ use the closed form of the conditional mean embedding from \cite[Algorithm 1]{singh2019kernel}. Specifically,
    $$
    \hat{\mu}_x(d)=K_{\cdot X}(K_{DD}+n\lambda_1)^{-1}K_{Dd},\quad  \hat{\theta}^{ATT}(d,d')=\langle \hat{\gamma},  \phi(1)\otimes\phi(d')\otimes \hat{\mu}_x(d)\rangle_{\mathcal{H}} 
    $$
and
    $$
    \hat{\mu}_{x}(v)=K_{\cdot X}(K_{VV}+n\lambda_2)^{-1}K_{Vv},\quad \hat{\theta}^{CATE}(d,v)=\langle \hat{\gamma},  \phi(1)\otimes\phi(d)\otimes \phi(v)\otimes \hat{\mu}_{x}(v)\rangle_{\mathcal{H}}.
    $$
For incremental effects, replace $\hat{\gamma}(1,d,x)$ with
\begin{align*}
\nabla_d\hat{\gamma}(1, d,x)&=
\langle \hat{\gamma},  \phi(1)\otimes \nabla_d \phi(d)\otimes \phi(x) \rangle_{\mathcal{H}}\\
&=(S\odot Y)^{\top}(K_{DD}\odot K_{XX}+n\lambda)^{-1}(K_{S1}\odot \nabla_d K_{Dd}\odot K_{Xx}).
\end{align*}
\end{proof}

\begin{algorithm}[Explicit computations for Algorithm~\ref{algorithm:static_dml}]
Abstracting from sample splitting,
\begin{enumerate}
\item $\theta_0^{ATE},\theta_0^{DS},\theta_0^{ATT}$: $\hat{\gamma}(1,d,x)=(S\odot Y)^{\top}(K_{SS}\odot K_{DD}\odot  K_{XX}+n\lambda I)^{-1}(K_{S1}\odot K_{Dd}\odot  K_{Xx}) $,
  \item $\theta_0^{CATE}$: $\hat{\gamma}(1,d,v,x)=(S\odot Y)^{\top}(K_{SS}\odot K_{DD}\odot  K_{VV}\odot K_{XX}+n\lambda I)^{-1}(K_{S1}\odot K_{Dd}\odot  K_{Vv}\odot K_{Xx}) $,
\end{enumerate}
\end{algorithm}

\subsection{Dynamic sample selection}

\begin{proof}[Proof of Theorem~\ref{theorem:representation_planning}]
I adapt the techniques in \cite[Theorem 6]{singh2021workshop} to the dynamic sample selection problem. Recall from Remark~\ref{remark:notation} that
$
\gamma_0(s,d,x,m)=\mathbb{E}[SY|S=s,D=d,X=x,M=m],
$ which by hypothesis is an element of a tensor product RKHS.

Assumption~\ref{assumption:RKHS} implies Bochner integrability, which permits the exchange of expectation and inner product. Therefore
\begin{align*}
    \theta_0^{ATE}(d)
    &= \int \gamma_0(1,d,x,m) \mathrm{d}\mathbb{P}(m|d,x)\mathrm{d}\mathbb{P}(x)\\
    &=\int \langle \gamma_0, \phi(1) \otimes  \phi(d)\otimes\phi(x)\otimes \phi(m)\rangle_{\mathcal{H}}  \mathrm{d}\mathbb{P}(m|d,x)\mathrm{d}\mathbb{P}(x) \\
    &= \int \langle \gamma_0,  \phi(1) \otimes  \phi(d)\otimes\phi(x)\otimes \int \phi(m) \mathrm{d}\mathbb{P}(m|d,x)\rangle_{\mathcal{H}}  \mathrm{d}\mathbb{P}(x)\\
    &=\int\langle \gamma_0,  \phi(1) \otimes \phi(d)\otimes \phi(x)\otimes \mu_{m}(d,x) \rangle_{\mathcal{H}}\mathrm{d}\mathbb{P}(x)\\
    &=\langle \gamma_0, \phi(1) \otimes \phi(d)\otimes \int \{\phi(x)\otimes \mu_{m}(d,x)\} \mathrm{d}\mathbb{P}(x) \rangle_{\mathcal{H}}.
\end{align*}
The argument for $\theta_0^{DS}$ is identical.
\end{proof}

\begin{proof}[Derivation of Algorithm~\ref{algorithm:planning}]
I adapt the techniques of \cite[Algorithm 3]{singh2021workshop} to the dynamic sample selection problem.

Recall from Remark~\ref{remark:notation} that
$$
\gamma_0(1,d,x,m)=\mathbb{E}[Y|S=1,D=d,X=x,M=m]=\mathbb{E}[SY|S=1,D=d,X=x,M=m].
$$
The final expression is in terms of observables $(SY,S,D,X,M)$ as desired. Extending standard arguments \cite{kimeldorf1971some} according to this representation,
\begin{align*}
\hat{\gamma}(1,d,x,m)
&=
\langle \hat{\gamma}, \phi(1)\otimes \phi(d)\otimes  \phi(x)\otimes \phi(m) \rangle_{\mathcal{H}}\\
&=(S\odot Y)^{\top}( K_{SS}\odot K_{DD}\odot K_{XX}\odot K_{MM}+n\lambda I)^{-1} (K_{S1}\odot K_{Dd}\odot   K_{Xx}\odot K_{Mm}).    \end{align*}

    
    By \cite[Algorithm 1]{singh2019kernel}, write the conditional mean 
$$
    \hat{\mu}_{m}(d,x)=K_{\cdot M}(K_{DD}\odot K_{XX}+n\lambda_4 I)^{-1}(K_{Dd}\odot K_{Xx}).
    $$
    Therefore
    \begin{align*}
        \frac{1}{n}\sum_{i=1}^n[\phi(X_i) \otimes \hat{\mu}_{m}(d,X_i) ]
        &=\frac{1}{n}\sum_{i=1}^n[\phi(X_i)\otimes\{K_{\cdot M}(K_{DD}\odot K_{XX}+n\lambda_4 I)^{-1}(K_{Dd}\odot K_{Xx_i})\} ]
    \end{align*}
    and
    \begin{align*}
        \hat{\theta}^{ATE}(d)
        &=\langle \hat{\gamma},\phi(1)\otimes \phi(d)\otimes \frac{1}{n}\sum_{i=1}^n[\phi(X_i)\otimes \hat{\mu}_{m}(d,X_i)]  \rangle_{\mathcal{H}} \\
        &=\frac{1}{n}\sum_{i=1}^n (S\odot Y)^{\top}(K_{SS}\odot K_{DD}\odot  K_{XX}\odot K_{MM}+n\lambda I)^{-1} \\
        &\quad [K_{S1}\odot K_{Dd}\odot  K_{Xx_i}\odot \{K_{M M}(K_{DD}\odot K_{XX}+n\lambda_4 I)^{-1} (K_{Dd}\odot K_{Xx_i})\}].
    \end{align*}
     
     The argument for $\theta_0^{DS}$ is identical.
\end{proof}

\begin{algorithm}[Explicit computations for Algorithm~\ref{algorithm:planning_dml}]
Abstracting from sample splitting,
\begin{enumerate}
\item $\hat{\gamma}(1,d,x,m)=(S\odot Y)^{\top}(K_{SS}\odot K_{DD}\odot  K_{XX}\odot K_{MM}+n\lambda I)^{-1}(K_{S1}\odot K_{Dd}\odot  K_{Xx}\odot K_{Mm}) $,
  \item $\hat{\pi}(x)=D^{\top}(K_{XX}+n\lambda_6 I)^{-1}K_{Xx}$,
    \item $\hat{\rho}(d,x,m)=S^{\top}(K_{DD}\odot K_{XX}\odot K_{MM}+n\lambda_7 I)^{-1}(K_{Dd}\odot K_{Xx}\odot K_{Mm}) $.
\end{enumerate}
\end{algorithm}

\subsection{Counterfactual distributions}

\begin{proof}[Proof of Theorem~\ref{theorem:representation_dist}]
The argument is analogous to Theorems~\ref{theorem:representation_static} and~\ref{theorem:representation_planning}, recognizing that in the distributional setting
\begin{align*}
    \gamma_0(1,d,x)&= E_8^*\{\phi(1) \otimes \phi(d) \otimes \phi(x)\}, \\
     \gamma_0(1,d,x,m)&= E_9^*\{\phi(1) \otimes \phi(d)\otimes  \phi(x)\otimes \phi(m)\}.
\end{align*}
\end{proof}

\begin{proof}[Derivation of Algorithm~\ref{algorithm:dist}]
The argument is analogous to Algorithms~\ref{algorithm:static} and~\ref{algorithm:planning} appealing to \cite[Algorithm 1]{singh2019kernel}, which gives
\begin{align*}
    \hat{\gamma}(1,d,x)
    &=\hat{E}_8^*[\phi(1)\otimes \phi(d)\otimes   \phi(x)] \\
    &=(K_{1 S}\odot K_{\cdot Y})(K_{SS}\odot K_{DD}\odot  K_{XX}+n\lambda_8 I)^{-1}( K_{S1}\odot K_{Dd}\odot K_{Xx}),\\
    \hat{\gamma}(1,d,x,m)&=\hat{E}_9^*\{\phi(1) \otimes \phi(d)\otimes  \phi(x) \otimes \phi(m)\}\\
       &=(K_{1 S}\odot K_{\cdot Y})(K_{SS} \odot K_{DD}\odot   K_{XX}\odot K_{MM} +n\lambda_9 I)^{-1}( K_{S1}\odot K_{Dd}\odot K_{XX}\odot K_{Mm}).
\end{align*}
\end{proof}

\subsection{Alternative closed form}

I present alternative expressions for $\hat{\theta}^{ATE}$ and $\hat{\theta}^{DS}$ in Algorithm~\ref{algorithm:planning} that run faster by replacing the summations with matrix multiplications.
 
\begin{lemma}[Proposition 2 of \cite{singh2021workshop}] \label{lem:matrix_form}
If $A \in \mathbb{R}^{n\times n}$, $a,b \in \mathbb{R}^n$, and $\mathbf{1}_n \in \mathbb{R}^n$ is the vector of ones then
$$
A(a \odot b) = (A \odot (\mathbf{1}_n a^\top))b,\quad (Aa) \odot b = (A \odot (b \mathbf{1}_n^\top))a.
$$
\end{lemma}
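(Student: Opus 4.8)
The plan is a direct entrywise verification of both identities; no structural argument is needed. First I would expand the left-hand side of the first identity componentwise: for each $i\in\{1,\dots,n\}$,
$$
[A(a\odot b)]_i = \sum_{j=1}^n A_{ij}(a\odot b)_j = \sum_{j=1}^n A_{ij}a_jb_j.
$$
Next I would compute the $(i,j)$ entry of the Hadamard product $A\odot(\mathbf{1}_n a^\top)$. Since $[\mathbf{1}_n a^\top]_{ij}=a_j$ for every $i$, we get $[A\odot(\mathbf{1}_n a^\top)]_{ij}=A_{ij}a_j$, hence
$$
[(A\odot(\mathbf{1}_n a^\top))b]_i = \sum_{j=1}^n A_{ij}a_j b_j.
$$
Comparing the two displays, and noting $i$ is arbitrary, establishes the first identity.

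For the second identity I would proceed identically. The left-hand side has $i$-th entry $[(Aa)\odot b]_i = b_i\sum_{j=1}^n A_{ij}a_j$. For the right-hand side, note $[b\mathbf{1}_n^\top]_{ij}=b_i$ for every $j$, so $[A\odot(b\mathbf{1}_n^\top)]_{ij}=A_{ij}b_i$ and
$$
[(A\odot(b\mathbf{1}_n^\top))a]_i = \sum_{j=1}^n A_{ij}b_i a_j = b_i\sum_{j=1}^n A_{ij}a_j,
$$
which matches. Again $i$ is arbitrary, so the vector identity holds.

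There is no genuine obstacle: the only point requiring care is the bookkeeping of which index of $A$ each vector is broadcast against, namely that $\mathbf{1}_n a^\top$ replicates $a$ across rows (acting on the column index, i.e. prescaling the columns of $A$) whereas $b\mathbf{1}_n^\top$ replicates $b$ across columns (acting on the row index, i.e. prescaling the rows of $A$). Once that is pinned down, both claims collapse to the elementary observation that the double sum $\sum_{j} A_{ij}a_jb_j$ can be read either as $A$ applied to $a\odot b$, or as $A$ with columns prescaled by $a$ applied to $b$ (first identity), or as $A$ with rows prescaled by $b$ applied to $a$ and then recognized as $(Aa)\odot b$ (second identity).
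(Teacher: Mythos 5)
Your entrywise verification is correct: both identities reduce to reading the sum $\sum_j A_{ij}a_jb_j$ (resp.\ $b_i\sum_j A_{ij}a_j$) in two ways, and you have correctly pinned down that $\mathbf{1}_n a^\top$ broadcasts $a$ over the column index while $b\mathbf{1}_n^\top$ broadcasts $b$ over the row index, which is the only place an error could creep in. The paper itself gives no proof of this lemma --- it is quoted as Proposition 2 of the cited reference --- so there is nothing to compare beyond noting that your direct componentwise argument is the standard and essentially unique way to establish it.
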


\begin{algorithm}[Alternative expressions]\label{algorithm:planning2}
Denote the empirical kernel matrices
$
K_{SS}$, $
K_{DD}$, $K_{XX}$, $K_{MM}\in\mathbb{R}^{n\times n}
$
calculated from observations drawn from population $\mathbb{P}$.  Denote the empirical kernel matrices
$
K_{\tilde{D}\tilde{D}}$, $K_{\tilde{X}\tilde{X}}$, $K_{\tilde{M}\tilde{M}}\in\mathbb{R}^{n\times n}
$
calculated from observations drawn from population $\tilde{\mathbb{P}}$.  Dose response estimators have the closed form solutions
\begin{enumerate}
    \item $\hat{\theta}^{ATE}(d)= (S\odot Y)^{\top}( K_{SS}\odot K_{DD}\odot K_{XX}\odot K_{MM}+n\lambda I)^{-1}$ \\ $
        [K_{S1}\odot K_{Dd}\odot  \{K_{M M}(K_{DD}\odot K_{XX}+n\lambda_4)^{-1} \odot \frac{1}{n} K^2_{XX}\}K_{Dd}]  $,
    \item $\hat{\theta}^{DS}(d,\tilde{\mathbb{P}})= (S\odot Y)^{\top}( K_{SS}\odot K_{DD}\odot K_{XX}\odot K_{MM}+n\lambda I)^{-1}$ \\$
        [K_{S1}\odot  K_{Dd}\odot  \{K_{M \tilde{M}}(K_{\tilde{D}\tilde{D}}\odot K_{\tilde{X}\tilde{X}}+\tilde{n}\lambda_5 I)^{-1} \odot \frac{1}{n} (K_{X \tilde X}K_{\tilde X \tilde X})\}K_{\tilde{D}d}] $,
\end{enumerate}
where $(\lambda,\lambda_4,\lambda_5)$ are ridge regression penalty parameters.
\end{algorithm}

\begin{proof}[Derivation]
I adapt the techniques of \cite[Algorithm 10]{singh2021workshop} to the dynamic sample selection problem.

Denote
\begin{align*}
    R_1 &= (S\odot Y)^{\top}(K_{SS}\odot K_{DD}\odot K_{XX}\odot K_{XX}+n\lambda I)^{-1},\quad R_2 = K_{MM}(K_{DD}\odot K_{XX}+n\lambda_4 I )^{-1}
\end{align*}
Then by Algorithm~\ref{algorithm:static} and Lemma~\ref{lem:matrix_form}, 
\begin{align*}
\hat{\theta}^{ATE}(d)
    &= \frac{1}{n} \sum_{i=1}^n R_1 \{K_{S1}\odot K_{Dd}\odot K_{Xx_i}\odot  R_2(K_{Dd}  \odot K_{Xx_i}) \}\\
    &= \frac{1}{n} \sum_{i=1}^n R_1 [K_{S1}\odot K_{Dd}\odot K_{Xx_i} \odot  \{(R_2 \odot \mathbf{1}_n K^\top_{Xx_i}) K_{Dd}\} ]\\
    &= \frac{1}{n} \sum_{i=1}^n R_1 \{K_{S1}\odot K_{Dd}\odot  (R_2 \odot \mathbf{1}_n K^\top_{Xx_i} \odot K_{Xx_i} \mathbf{1}^\top_n) K_{Dd}\}\\
    &= R_1 \left\{K_{S1}\odot K_{Dd}\odot  \left(R_2 \odot \frac{1}{n} \sum_{i=1}^n K_{Xx_i} K^\top_{Xx_i}\right) K_{Dd}\right\} \\
    &= R_1 \left\{K_{S1}\odot K_{Dd}\odot  \left(R_2 \odot \frac{1}{n} K_{XX}^2\right) K_{Dd}\right\}.
\end{align*}
Note that I use the identity
$
    \mathbf{1}_n K^\top_{Xx_i} \odot K_{Xx_i} \mathbf{1}^\top_n =  K_{Xx_i} K_{Xx_i}^\top.
$
\end{proof}
\section{Nonparametric consistency proofs}\label{section:consistency_proof}

In this appendix, I (i) present technical lemmas for regression, unconditional mean embeddings, and conditional mean embeddings; (ii) appeal to these lemmas to prove uniform consistency of dose response curves, incremental response curves, and counterfactual distributions.

\subsection{Lemmas}

\subsubsection{Regression}

Recall classic results for the kernel ridge regression estimator $\hat{\gamma}$ of $\gamma_0(w):=\mathbb{E}(Y|W=w)$. As in Section~\ref{sec:rkhs_main}, $W$ is the concatenation of regressors. Consider the following notation:
\begin{align*}
    \gamma_0&=\argmin_{\gamma\in\mathcal{H}}\mathcal{E}(\gamma),\quad \mathcal{E}(\gamma)=\mathbb{E}[\{Y-\gamma(W)\}^2]; \\
    \gamma_{\lambda}&=\argmin_{\gamma\in\mathcal{H}}\mathcal{E}_{\lambda}(\gamma),\quad \mathcal{E}_{\lambda}(\gamma)=\mathcal{E}(\gamma)+\lambda\|\gamma\|^2_{\mathcal{H}}; \\
    \hat{\gamma}&=\argmin_{\gamma\in\mathcal{H}}\hat{\mathcal{E}}(\gamma),\quad \hat{\mathcal{E}}(\gamma)=\frac{1}{n}\sum_{i=1}^n\{Y_i-\gamma(W_i)\}^2+\lambda\|\gamma\|^2_{\mathcal{H}}.
\end{align*}

\begin{lemma}[Sampling error; Theorem 1 of \cite{smale2007learning}]\label{lemma:sampling}
Suppose Assumptions~\ref{assumption:RKHS} and~\ref{assumption:original} hold. Then with probability $1-\delta$,
$
\|\hat{\gamma}-\gamma_{\lambda}\|_{\mathcal{H}}\leq \frac{6 C \kappa_w \ln(2/\delta)}{\sqrt{n}\lambda}.
$
\end{lemma}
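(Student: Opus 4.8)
Since Lemma~\ref{lemma:sampling} is quoted verbatim as Theorem 1 of \cite{smale2007learning}, in the paper the ``proof'' is just that citation; here is how I would establish it. The plan is to recast both $\widehat\gamma$ and $\gamma_\lambda$ in operator form and write their difference as a resolvent applied to a centered empirical mean. First I would introduce the covariance operators $T=\mathbb{E}[\phi(W)\otimes\phi(W)]$ and $\widehat T=\frac1n\sum_{i=1}^n\phi(W_i)\otimes\phi(W_i)$ on $\mathcal{H}$, together with the cross-moment elements $g=\mathbb{E}[Y\phi(W)]$ and $\widehat g=\frac1n\sum_{i=1}^nY_i\phi(W_i)$, all well defined and norm-bounded because $\kappa_w<\infty$ (Assumption~\ref{assumption:RKHS}) and $|Y|\le C$ (Assumption~\ref{assumption:original}). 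The first-order conditions of the two quadratic objectives defining $\gamma_\lambda$ and $\widehat\gamma$ then give the closed forms $\gamma_\lambda=(T+\lambda I)^{-1}g$ and $\widehat\gamma=(\widehat T+\lambda I)^{-1}\widehat g$.

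Next I would derive the key identity. Starting from $\widehat\gamma-\gamma_\lambda=(\widehat T+\lambda I)^{-1}\widehat g-\gamma_\lambda$, applying $(\widehat T+\lambda I)$ on the left, and substituting $\lambda\gamma_\lambda=g-T\gamma_\lambda$ (which is just $(T+\lambda I)\gamma_\lambda=g$), one obtains
\begin{align*}
\widehat\gamma-\gamma_\lambda
&=(\widehat T+\lambda I)^{-1}\Big[(\widehat g-g)-(\widehat T-T)\gamma_\lambda\Big]\\
&=(\widehat T+\lambda I)^{-1}\Big(\tfrac1n\textstyle\sum_{i=1}^n\xi_i-\mathbb{E}[\xi]\Big),\qquad \xi_i:=\{Y_i-\gamma_\lambda(W_i)\}\phi(W_i),
\end{align*}
where I used $\gamma_\lambda(W_i)=\langle\gamma_\lambda,\phi(W_i)\rangle_{\mathcal{H}}$ so that the two empirical averages combine. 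Because $\widehat T$ is positive semidefinite, $\|(\widehat T+\lambda I)^{-1}\|_{\mathrm{op}}\le\lambda^{-1}$ — this is the source of the $\lambda^{-1}$ in the stated bound — so it remains to control $\big\|\frac1n\sum_i\xi_i-\mathbb{E}[\xi]\big\|_{\mathcal{H}}$, the deviation of an i.i.d. $\mathcal{H}$-valued average from its mean.

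For the concentration step I would invoke a Bernstein-type inequality for Hilbert-space–valued random variables (the Pinelis--Yurinskii bound used throughout \cite{smale2007learning,caponnetto2007optimal}), which requires an almost-sure bound and a second-moment bound on $\xi$. Both assumptions supply these. Boundedness of $Y$ and $\phi$ together with the energy inequality $\lambda\|\gamma_\lambda\|_{\mathcal{H}}^2\le\mathcal{E}_\lambda(\gamma_\lambda)\le\mathcal{E}_\lambda(0)=\mathbb{E}[Y^2]\le C^2$ yield an a.s.\ bound $\|\xi\|_{\mathcal{H}}\le\kappa_w(C+\kappa_w C/\sqrt\lambda)$; crucially, the second moment is free of $\lambda$, since $\mathbb{E}\|\xi\|_{\mathcal{H}}^2\le\kappa_w^2\,\mathbb{E}[\{Y-\gamma_\lambda(W)\}^2]=\kappa_w^2\,\mathcal{E}(\gamma_\lambda)\le\kappa_w^2 C^2$. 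The variance term therefore governs the rate, giving $\big\|\frac1n\sum_i\xi_i-\mathbb{E}[\xi]\big\|_{\mathcal{H}}=O\!\big(\kappa_w C\ln(2/\delta)/\sqrt n\big)$ with probability $1-\delta$; multiplying by the resolvent bound $\lambda^{-1}$ delivers the claim.

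The step I expect to be the main obstacle is this last piece of bookkeeping: arguing carefully that the $\lambda^{-1/2}$ factor in the a.s.\ bound on $\|\xi\|$ contributes only to a lower-order $n^{-1}$ term — not to the leading $n^{-1/2}$ term, which is $\lambda$-free — and extracting the explicit constant $6$. This is precisely the refinement carried out in \cite{smale2007learning}, whose statement the paper quotes directly.
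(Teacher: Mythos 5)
Your proposal is correct and follows essentially the same route as the paper, which proves this lemma simply by citing Theorem 1 of \cite{smale2007learning}: your resolvent identity $\hat{\gamma}-\gamma_{\lambda}=(\hat{T}+\lambda I)^{-1}\bigl(\frac{1}{n}\sum_{i}\xi_i-\mathbb{E}[\xi]\bigr)$ with $\xi=\{Y-\gamma_{\lambda}(W)\}\phi(W)$, the operator bound $\|(\hat{T}+\lambda I)^{-1}\|_{\mathrm{op}}\le\lambda^{-1}$, and the Hilbert-space Bennett inequality (exactly the paper's Lemma~\ref{lemma:prob}) with $\mathbb{E}\|\xi\|_{\mathcal{H}}^2\le\kappa_w^2C^2$ and $\|\xi\|_{\mathcal{H}}\le\kappa_w C(1+\kappa_w/\sqrt{\lambda})$ are precisely the ingredients of the cited argument. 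The one loose end you flag—showing the $\lambda^{-1/2}$ part of the almost-sure bound, which after the resolvent contributes a term of order $\kappa_w^2C\ln(2/\delta)/(n\lambda^{3/2})$, is absorbed into the stated bound (it requires $\sqrt{n\lambda}\gtrsim\kappa_w$, automatic under the calibration $\lambda=n^{-\frac{1}{c+1}}$) and yields the constant $6$—is exactly the bookkeeping carried out in the reference, so deferring it is consistent with how the paper itself handles this lemma.
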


\begin{lemma}[Approximation error; Proposition 3 of \cite{caponnetto2007optimal}] \label{prop:approx_error}
Suppose Assumptions~\ref{assumption:original} and~\ref{assumption:smooth_gamma} hold. Then
$
\|\gamma_{\lambda}-\gamma_0\|_{\mathcal{H}}\leq \lambda^{\frac{c-1}{2}}\sqrt{\zeta}.
$
\end{lemma}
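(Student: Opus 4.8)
The plan is to reconstruct the classical source-condition bias bound for Tikhonov regularization. First I would introduce the population covariance operator $T\colon\mathcal{H}\to\mathcal{H}$, $T=\mathbb{E}\big[\langle\cdot,\phi(W)\rangle_{\mathcal{H}}\,\phi(W)\big]$, which is self-adjoint, positive, and trace class since the kernel is bounded. Because $c>1$ forces $\gamma_0\in\mathcal{H}^c\subset\mathcal{H}$, well-specification of the regression gives $\mathbb{E}[\phi(W)Y]=\mathbb{E}[\phi(W)\gamma_0(W)]=T\gamma_0$, and computing the Fr\'echet derivative of $\gamma\mapsto\mathcal{E}_\lambda(\gamma)$ over $\mathcal{H}$ and setting it to zero yields the first-order characterization $(T+\lambda I)\gamma_\lambda=T\gamma_0$. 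Rearranging produces the exact bias identity
\[
\gamma_\lambda-\gamma_0=\big[(T+\lambda I)^{-1}T-I\big]\gamma_0=-\lambda\,(T+\lambda I)^{-1}\gamma_0 ,
\]
which is the starting point for the estimate.

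Second, I would translate the smoothness hypothesis $\gamma_0\in\mathcal{H}^c$ into an operator factorization. Writing $\gamma_0=\sum_j a_j\varphi_j$, the hypothesis is $\sum_j a_j^2\lambda_j^{-c}<\infty$; since $\{\sqrt{\lambda_j}\,\varphi_j\}_j$ is an orthonormal basis of $\mathcal{H}$ that diagonalizes $T$ with eigenvalues $(\lambda_j)$ --- a consequence of Mercer's theorem under the stated kernel regularity --- this is exactly the statement $\gamma_0=T^{(c-1)/2}h$ with $h=\sum_j a_j\lambda_j^{-c/2}\big(\sqrt{\lambda_j}\,\varphi_j\big)\in\mathcal{H}$ and $\|h\|_{\mathcal{H}}^2=\sum_j a_j^2\lambda_j^{-c}=:\zeta$. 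Substituting into the bias identity gives $\|\gamma_\lambda-\gamma_0\|_{\mathcal{H}}\le\lambda\,\big\|(T+\lambda I)^{-1}T^{(c-1)/2}\big\|_{\mathrm{op}}\sqrt{\zeta}$.

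Third, I would bound the operator norm by spectral calculus. Since $T$ is diagonal in the basis above, $\big\|(T+\lambda I)^{-1}T^{(c-1)/2}\big\|_{\mathrm{op}}=\sup_j \lambda_j^{(c-1)/2}/(\lambda_j+\lambda)$; with $s=(c-1)/2\in(0,\tfrac12]$, weighted AM--GM gives $t^s\lambda^{1-s}\le s t+(1-s)\lambda\le t+\lambda$ for all $t\ge 0$, hence $t^s/(t+\lambda)\le\lambda^{s-1}$, so the operator norm is at most $\lambda^{(c-3)/2}$. Combining, $\|\gamma_\lambda-\gamma_0\|_{\mathcal{H}}\le\lambda\cdot\lambda^{(c-3)/2}\cdot\sqrt{\zeta}=\lambda^{(c-1)/2}\sqrt{\zeta}$, as claimed.

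The only step I expect to require care is the identification of the source-condition space $\mathcal{H}^c$ (defined spectrally through the convolution operator $L$) with the range of $T^{(c-1)/2}$ inside $\mathcal{H}$, together with the matching of norms that pins down the constant $\zeta$; once the spectral picture of Section~\ref{sec:rkhs_main} is in place this is bookkeeping, and the remainder is the textbook filter-function estimate. Alternatively, one may simply invoke Proposition~3 of \cite{caponnetto2007optimal} verbatim.
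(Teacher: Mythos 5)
Your derivation is correct, and it is the standard operator-theoretic argument underlying the cited result: the paper offers no internal proof of this lemma---it invokes Proposition 3 of \cite{caponnetto2007optimal} verbatim---so your reconstruction (the normal equation $(T+\lambda I)\gamma_{\lambda}=T\gamma_0$, which uses $\gamma_0\in\mathcal{H}$ guaranteed here by $c>1$; the bias identity $\gamma_{\lambda}-\gamma_0=-\lambda(T+\lambda I)^{-1}\gamma_0$; the source condition recast as $\gamma_0=T^{(c-1)/2}h$; and the spectral bound $\sup_{t\geq 0} t^{(c-1)/2}/(t+\lambda)\leq \lambda^{(c-3)/2}$) supplies exactly the argument the citation stands in for. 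The one piece of bookkeeping you rightly flag---identifying $\mathcal{H}^c$ with the range of $T^{(c-1)/2}$ inside $\mathcal{H}$ via the Mercer basis $\{\sqrt{\lambda_j}\,\varphi_j\}$ and reading off $\|h\|_{\mathcal{H}}^2=\sum_j a_j^2\lambda_j^{-c}$---is what pins down $\zeta$ as the squared source norm of $\gamma_0$, which is how $\zeta$ is used (though never explicitly defined) throughout the paper's rate statements, so your constant is the intended one.
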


\begin{lemma}[Regression rate; Theorem D.1 of \cite{singh2020kernel}]\label{theorem:regression}
Suppose Assumptions~\ref{assumption:RKHS},~\ref{assumption:original}, and \ref{assumption:smooth_gamma} hold. Then with probability $1-\delta$,
$$
\|\hat{\gamma}-\gamma_0\|_{\mathcal{H}}\leq r_{\gamma}(n,\delta,c):=\dfrac{ \sqrt{\zeta}(c+1)}{4^{\frac{1}{c+1}}} \left\{\dfrac{6 C \kappa_w \ln(2/\delta)}{ \sqrt{n\zeta}(c-1)}\right\}^{\frac{c-1}{c+1}}.
$$
\end{lemma}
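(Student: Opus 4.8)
\emph{Proof strategy.} The plan is to combine the two preceding lemmas via a triangle inequality through the population ridge solution $\gamma_{\lambda}$, and then to choose $\lambda$ so as to balance the resulting sampling and approximation terms. First I would check that the hypotheses line up: Assumption~\ref{assumption:RKHS} forces the product kernel of $\mathcal{H}$ to be bounded, so $\kappa_w:=\sup_w\|\phi(w)\|_{\mathcal{H}}$ is finite (a product of the scalar bounds $\kappa_s,\kappa_d,\dots$), and Assumption~\ref{assumption:original} gives $|Y|\le C$; hence Lemma~\ref{lemma:sampling} applies, while Assumption~\ref{assumption:smooth_gamma} is precisely the source condition required by Lemma~\ref{prop:approx_error}. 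Writing $A:=6C\kappa_w\ln(2/\delta)/\sqrt{n}$, the triangle inequality together with these two lemmas yields, with probability at least $1-\delta$,
\[
\|\hat{\gamma}-\gamma_0\|_{\mathcal{H}}\le\|\hat{\gamma}-\gamma_{\lambda}\|_{\mathcal{H}}+\|\gamma_{\lambda}-\gamma_0\|_{\mathcal{H}}\le\frac{A}{\lambda}+\sqrt{\zeta}\,\lambda^{\frac{c-1}{2}}.
\]
Note the first bound is a high-probability event and the second is deterministic, so their conjunction still holds on a $1-\delta$ event.

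It then remains to minimize $g(\lambda):=A\lambda^{-1}+\sqrt{\zeta}\,\lambda^{(c-1)/2}$ over $\lambda>0$. Since $g(\lambda)\to\infty$ at both endpoints and $g'$ has a unique positive zero, that stationary point is the global minimizer; solving $A\lambda^{-2}=\tfrac{c-1}{2}\sqrt{\zeta}\,\lambda^{(c-3)/2}$ gives $\lambda_*=\bigl(\tfrac{2A}{\sqrt{\zeta}(c-1)}\bigr)^{2/(c+1)}$. Substituting $\lambda_*$, both summands become proportional to $A^{(c-1)/(c+1)}\zeta^{1/(c+1)}$, and the constant collapses: with $p:=2/(c+1)$ one has $(c-1)\,2^{-p}+2^{1-p}=2^{-p}(c+1)$, which after writing $2^{-2/(c+1)}=4^{-1/(c+1)}$ and $\zeta^{1/(c+1)}=\sqrt{\zeta}\,\zeta^{-(c-1)/(2(c+1))}$ produces exactly
\[
\|\hat{\gamma}-\gamma_0\|_{\mathcal{H}}\le\frac{\sqrt{\zeta}(c+1)}{4^{1/(c+1)}}\left\{\frac{6C\kappa_w\ln(2/\delta)}{\sqrt{n\zeta}(c-1)}\right\}^{\frac{c-1}{c+1}}=r_{\gamma}(n,\delta,c),
\]
as claimed. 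Since $\lambda_*\asymp n^{-1/(c+1)}$, this is consistent with the tuning $\lambda=n^{-1/(c+1)}$ used downstream in Theorems~\ref{theorem:consistency_static} and~\ref{theorem:consistency_planning}.

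I do not anticipate a genuine obstacle: the argument is a triangle inequality followed by a one-dimensional calculus optimization. The only care required is (i) confirming that Assumptions~\ref{assumption:RKHS}, \ref{assumption:original}, and~\ref{assumption:smooth_gamma} supply exactly the hypotheses of Lemmas~\ref{lemma:sampling} and~\ref{prop:approx_error}; (ii) the bookkeeping of the exponents $p=2/(c+1)$ and $(c-1)/(c+1)$ in the final algebraic simplification; and (iii) recording that the stated rate corresponds to the bias--variance--balancing choice $\lambda_*$ rather than to an arbitrary ridge parameter.
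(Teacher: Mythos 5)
Your proposal matches the paper's proof, which is exactly the triangle inequality through $\gamma_{\lambda}$ combined with Lemmas~\ref{lemma:sampling} and~\ref{prop:approx_error}; the paper simply states this as "immediate" (deferring the constant to the cited Theorem D.1 of \cite{singh2020kernel}), while you additionally carry out the one-dimensional optimization over $\lambda$ whose balancing choice $\lambda_*\asymp n^{-1/(c+1)}$ reproduces the stated constant correctly. No gaps.
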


\begin{proof}
    Immediate from triangle inequality using Lemmas~\ref{lemma:sampling} and~\ref{prop:approx_error}.
\end{proof}

\begin{remark}\label{remark:1}
In sample selection problems, I consider $\gamma_0(w)=\mathbb{E}[SY|W=w]$ where $W$ and hence $\kappa_w$ varies.
\begin{enumerate}
    \item Static sample selection
    \begin{enumerate}
        \item $\theta_0^{ATE}$, $\theta_0^{DS}$, $\theta_0^{ATT}$: $\kappa_w=\kappa_s\kappa_d\kappa_x$;
        \item $\theta_0^{CATE}$: $\kappa_w=\kappa_s\kappa_d\kappa_v\kappa_x$.
    \end{enumerate}
    \item Dynamic sample selection: $\kappa_w=\kappa_s\kappa_d\kappa_x\kappa_m$.
\end{enumerate}
\end{remark}

\subsubsection{Unconditional mean embedding}

Next, I recall classic results for the unconditional mean embedding estimator $\hat{\mu}_w$ for $\mu_w:=\mathbb{E}\{\phi(W)\}$. Let $W$ be a placeholder random variable as in Section~\ref{sec:rkhs_main}.

\begin{lemma}[Bennett inequality; Lemma 2 of \cite{smale2007learning}]\label{lemma:prob}
Let $(\xi_i)$ be i.i.d. random variables drawn from distribution $\mathbb{P}$ taking values in a real separable Hilbert space $\mathcal{K}$. Suppose there exists $ \tilde{M}$ such that
$
    \|\xi_i\|_{\mathcal{K}} \leq \tilde{M}<\infty$ almost surely and $
    \sigma^2(\xi_i):=\mathbb{E}(\|\xi_i\|_{\mathcal{K}}^2)$. Then $\forall n\in\mathbb{N}, \forall \eta\in(0,1)$,
$$
\mathbb{P}\left\{\left\|\dfrac{1}{n}\sum_{i=1}^n\xi_i-\mathbb{E}(\xi)\right\|_{\mathcal{K}}\leq\dfrac{2\tilde{M}\ln(2/\eta)}{n}+\sqrt{\dfrac{2\sigma^2(\xi)\ln(2/\eta)}{n}}\right\}\geq 1-\eta.
$$
\end{lemma}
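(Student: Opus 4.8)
The plan is to reduce the statement to the Bernstein inequality for sums of independent, bounded, Hilbert-space-valued random variables, and then to invert the resulting exponential tail bound. First I would center: set $\zeta_i := \xi_i - \mathbb{E}(\xi)$ and $S_n := \sum_{i=1}^n \zeta_i$, so that $\frac{1}{n}\sum_{i=1}^n\xi_i - \mathbb{E}(\xi) = \frac{1}{n} S_n$ and $(\zeta_i)$ are i.i.d., mean zero, valued in $\mathcal{K}$. Centering only shrinks the second moment, $\mathbb{E}\|\zeta_i\|_{\mathcal K}^2 = \mathbb{E}\|\xi_i\|_{\mathcal K}^2 - \|\mathbb{E}(\xi)\|_{\mathcal K}^2 \le \sigma^2(\xi)$, while the almost sure bound degrades only by a factor of two, $\|\zeta_i\|_{\mathcal K} \le \|\xi_i\|_{\mathcal K} + \|\mathbb{E}(\xi)\|_{\mathcal K} \le 2\tilde M$.

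Next I would invoke (or, since $\mathcal K$ is a separable Hilbert space and hence $2$-smooth, re-derive) the vector-valued Bernstein inequality: for every $t>0$,
$$
\mathbb{P}\left(\|S_n\|_{\mathcal K}\ge t\right)\;\le\;2\exp\!\left(-\frac{t^2}{2\left(n\,\sigma^2(\xi)+\tfrac{2\tilde M}{3}\,t\right)}\right).
$$
The standard derivation is the exponential-moment method applied to the martingale $S_k=\sum_{i\le k}\zeta_i$ with respect to its natural filtration $(\mathcal F_k)$: one bounds $\mathbb{E}\exp(\lambda\|S_n\|_{\mathcal K})$ by iterating a one-step estimate, using $\mathbb{E}[\langle S_{k-1},\zeta_k\rangle_{\mathcal K}\mid \mathcal F_{k-1}]=0$ so that $\mathbb{E}[\|S_{k-1}+\zeta_k\|_{\mathcal K}^2\mid\mathcal F_{k-1}]=\|S_{k-1}\|_{\mathcal K}^2+\mathbb{E}\|\zeta_k\|_{\mathcal K}^2$ (the Hilbert structure makes the relevant smoothness constant equal to one), together with the elementary Bennett-type numerical inequality controlling $\mathbb{E}\,e^{\lambda\|\zeta_k\|_{\mathcal K}}$ in terms of $\mathbb{E}\|\zeta_k\|_{\mathcal K}^2$ and the almost sure bound on $\|\zeta_k\|_{\mathcal K}$. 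Alternatively one quotes Pinelis's martingale inequality for $2$-smooth spaces, or cites the bound directly from \cite{smale2007learning}.

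Finally I would invert. Setting the right-hand side equal to $\eta$ and solving the quadratic $\tfrac{t^2}{2(n\sigma^2(\xi)+\frac{2\tilde M}{3}t)}=\ln(2/\eta)$ for its positive root gives
$$
t=\frac{2\tilde M}{3}\ln(2/\eta)+\sqrt{\frac{4\tilde M^2}{9}\ln^2(2/\eta)+2n\,\sigma^2(\xi)\ln(2/\eta)}\;\le\;\frac{4\tilde M}{3}\ln(2/\eta)+\sqrt{2n\,\sigma^2(\xi)\ln(2/\eta)},
$$
using $\sqrt{a+b}\le\sqrt a+\sqrt b$. Dividing by $n$ and using $\tfrac{4}{3}\le2$ shows that on the complementary event, which has probability at least $1-\eta$, one has $\|\frac{1}{n} S_n\|_{\mathcal K}\le\frac{2\tilde M\ln(2/\eta)}{n}+\sqrt{\frac{2\sigma^2(\xi)\ln(2/\eta)}{n}}$, which is the claim.

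I expect the only genuine obstacle to be the middle step, the vector-valued Bernstein inequality itself: unlike the scalar case one cannot write down an MGF in closed form, so the argument must route through a moment recursion that truly exploits $2$-smoothness of $\mathcal K$ (or a symmetrization/contraction argument, or an outright citation of \cite{smale2007learning}). The centering bookkeeping and the inversion of the quadratic are routine once that ingredient is in hand.
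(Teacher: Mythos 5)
Your proposal is correct, and it is essentially the argument behind the paper's own treatment: the paper gives no proof, quoting the result verbatim as Lemma 2 of \cite{smale2007learning}, whose proof likewise routes through a Bennett/Bernstein-type tail inequality for bounded Hilbert-space-valued random variables (Pinelis--Yurinsky) and then inverts the exponential bound by solving the quadratic in $t$. Your constant bookkeeping is sound — centering costs only $\|\zeta_i\|_{\mathcal K}\le 2\tilde M$ while $\mathbb{E}\|\zeta_i\|_{\mathcal K}^2\le\sigma^2(\xi)$, and the rounding $\tfrac{4}{3}\le 2$ after $\sqrt{a+b}\le\sqrt{a}+\sqrt{b}$ recovers exactly the stated constants.
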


\begin{lemma}[Mean embedding rate; Theorem D.2 of \cite{singh2020kernel}]\label{theorem:mean}
Suppose Assumptions~\ref{assumption:RKHS} and~\ref{assumption:original} hold. Then with probability $1-\delta$, 
$$
\|\hat{\mu}_w-\mu_w\|_{\mathcal{H}_{\mathcal{W}}}\leq r_{\mu}(n,\delta):=\frac{4\kappa_w \ln(2/\delta)}{\sqrt{n}}.
$$
\end{lemma}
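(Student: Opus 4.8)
The plan is to apply the Bennett inequality of Lemma~\ref{lemma:prob} to the i.i.d.\ Hilbert-space-valued summands $\xi_i := \phi(W_i)$, whose empirical average is precisely $\hat{\mu}_w=\frac1n\sum_{i=1}^n\phi(W_i)$ and whose common mean is $\mu_w=\mathbb{E}\{\phi(W)\}$. The entire content of the lemma is this one concentration inequality; the rest is bookkeeping to match the stated closed form.

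First I would verify the hypotheses of Lemma~\ref{lemma:prob}. The space $\mathcal{H}_{\mathcal{W}}$ is separable because $\mathcal{W}$ is Polish (Assumption~\ref{assumption:original}) and $k_{\mathcal{W}}$ is continuous, and $\phi(W_i)$ is a genuine random element of $\mathcal{H}_{\mathcal{W}}$ by the measurability part of Assumption~\ref{assumption:RKHS}. Boundedness of $k_{\mathcal{W}}$ gives $\|\xi_i\|_{\mathcal{H}_{\mathcal{W}}}=\sqrt{k_{\mathcal{W}}(W_i,W_i)}\leq \kappa_w$ almost surely, so one may take $\tilde{M}=\kappa_w$ and hence $\sigma^2(\xi_i)=\mathbb{E}[\|\phi(W_i)\|^2_{\mathcal{H}_{\mathcal{W}}}]\leq \kappa_w^2$. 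Bochner integrability of $\phi(W)$—already invoked in the derivation of Theorem~\ref{theorem:representation_static}—guarantees that $\mathbb{E}\{\phi(W)\}=\mu_w$ is well-defined and agrees with the mean appearing in Lemma~\ref{lemma:prob}.

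Next I would substitute these quantities into Lemma~\ref{lemma:prob} with $\eta=\delta$, obtaining that with probability at least $1-\delta$,
$$
\|\hat{\mu}_w-\mu_w\|_{\mathcal{H}_{\mathcal{W}}}\leq \frac{2\kappa_w\ln(2/\delta)}{n}+\sqrt{\frac{2\kappa_w^2\ln(2/\delta)}{n}}.
$$
Finally I would simplify to the stated bound. Since $n\geq 1$, the first term is at most $\frac{2\kappa_w\ln(2/\delta)}{\sqrt n}$; and since $\ln(2/\delta)\geq \ln 2 > 1/4$, we have $\sqrt{2\ln(2/\delta)}\leq 2\ln(2/\delta)$, so the second term is at most $\frac{2\kappa_w\ln(2/\delta)}{\sqrt n}$ as well. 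Adding the two gives $r_{\mu}(n,\delta)=\frac{4\kappa_w\ln(2/\delta)}{\sqrt n}$.

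There is no real obstacle: the only points requiring care are confirming the separability and measurability preconditions so that the vector-valued inequality applies, and the elementary estimate $\sqrt{2\ln(2/\delta)}\leq 2\ln(2/\delta)$, which is the source of the (non-sharp) constant $4$. For use in the later asymptotic arguments one then fixes $\delta$ to a constant (or lets it decay polynomially in $n$) to conclude $\|\hat{\mu}_w-\mu_w\|_{\mathcal{H}_{\mathcal{W}}}=O_p(n^{-1/2})$.
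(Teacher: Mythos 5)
Your proposal is correct and is essentially the paper's own route: the lemma is quoted from Theorem D.2 of \cite{singh2020kernel}, whose argument is exactly the application of the Bennett inequality (Lemma~\ref{lemma:prob}) to $\xi_i=\phi(W_i)$ with $\tilde M=\kappa_w$, $\sigma^2\le\kappa_w^2$, followed by the same simplification $\frac{2\kappa_w\ln(2/\delta)}{n}+\kappa_w\sqrt{\frac{2\ln(2/\delta)}{n}}\le \frac{4\kappa_w\ln(2/\delta)}{\sqrt{n}}$. One small nit: the step $\sqrt{2\ln(2/\delta)}\le 2\ln(2/\delta)$ needs $\ln(2/\delta)\ge \tfrac12$ rather than $\tfrac14$, which still holds for all $\delta\in(0,1)$ since $\ln 2>\tfrac12$.
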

I quote a result that appeals to Lemma~\ref{lemma:prob}. \cite[Theorem 15]{altun2006unifying} originally prove this rate by McDiarmid inequality. See \cite[Theorem 2]{smola2007hilbert} for an argument via Rademacher complexity. See \cite[Proposition A.1]{tolstikhin2017minimax} for an improved constant and the proof that the rate is minimax optimal.

\begin{remark}\label{remark:2}
In various applications, $\kappa_w$ varies.
\begin{enumerate}
    \item Static sample selection.
    \begin{enumerate}
         \item $\theta_0^{ATE}$: with probability $1-\delta$, $
\|\hat{\mu}_x-\mu_x\|_{\mathcal{H}_{\mathcal{X}}}\leq r_{\mu}(n,\delta):=\frac{4\kappa_x \ln(2/\delta)}{\sqrt{n}}.
$
        \item $\theta_0^{DS}$: with probability $1-\delta$, 
$
\|\hat{\nu}_x-\nu_x\|_{\mathcal{H}_{\mathcal{X}}}\leq r_{\nu}(\tilde{n},\delta):=\frac{4\kappa_x \ln(2/\delta)}{\sqrt{\tilde{n}}}.
$
    \end{enumerate}
    \item Dynamic sample selection.
    \begin{enumerate}
        \item $\theta_0^{ATE}$: with probability $1-\delta$, $\forall d\in\mathcal{D}$
$$
\left\|\frac{1}{n}\sum_{i=1}^n\{\phi(X_i) \otimes \mu_{m}(d,X_i)\}-\int \{\phi(x)\otimes \mu_{m}(d,x)\} \mathrm{d}\mathbb{P}(x)\right\|_{\mathcal{H}_{\mathcal{X}}\otimes \mathcal{H}_{\mathcal{M}}} \leq r^{ATE}_{\mu}(n,\delta):=\frac{4\kappa_x \kappa_m \ln(2/\delta)}{\sqrt{n}}.$$
        \item $\theta_0^{DS}$: with probability $1-\delta$, $\forall d \in\mathcal{D}$
        $$
\left\|\frac{1}{n}\sum_{i=1}^n\{\phi(\tilde{X}_{i}) \otimes \mu_{m}(d,\tilde{X}_{i})\}-\int \{\phi(x)\otimes \mu_{m}(d,x)\} \mathrm{d}\tilde{\mathbb{P}}(x)\right\|_{\mathcal{H}_{\mathcal{X}}\otimes \mathcal{H}_{\mathcal{M}}}
\leq r^{DS}_{\nu}(\tilde{n},\delta):=\frac{4\kappa_x\kappa_m \ln(2/\delta)}{\sqrt{\tilde{n}}}.$$
    \end{enumerate}
\end{enumerate}
\end{remark}

\subsubsection{Conditional expectation operator and conditional mean embedding}

In Sections~\ref{section:static} and~\ref{section:dynamic} as well as Appendix~\ref{section:dist}, I consider the abstract conditonal expectation operator $E_j\in \mathcal{L}_2(\mathcal{H}_{\mathcal{A}_j},\mathcal{H}_{\mathcal{B}_j})$, where $\mathcal{A}_j$ and $\mathcal{B}_j$ are spaces that can be instantiated for different causal parameters.

\begin{lemma}[Conditional mean embedding rate; Theorem 2 of \cite{singh2019kernel}]\label{theorem:conditional}
Suppose Assumptions~\ref{assumption:RKHS},~\ref{assumption:original}, and~\ref{assumption:smooth_op} hold. Then with probability $1-\delta$, 
$$
\|\hat{E}_j-E_j\|_{\mathcal{L}_2}\leq r_E(\delta,n,c_j):=\dfrac{ \sqrt{\zeta_j}(c_j+1)}{4^{\frac{1}{c_j+1}}} \left\{\dfrac{4\kappa_b(\kappa_a+\kappa_b \|E_j\|_{\mathcal{L}_2}) \ln(2/\delta)}{ \sqrt{n\zeta_j}(c_j-1)}\right\}^{\frac{c_j-1}{c_j+1}}.
$$
Moreover, $\forall b\in\mathcal{B}_j$
$$
  \|\hat{\mu}_a(b)-\mu_a(b)\|_{\mathcal{H}_{\mathcal{A}_j}}\leq r_{\mu}(\delta,n,c_j):=\kappa_{b}\cdot
  \dfrac{ \sqrt{\zeta_j}(c_j+1)}{4^{\frac{1}{c_j+1}}} \left\{\dfrac{4\kappa_b(\kappa_a+\kappa_b \|E_j\|_{\mathcal{L}_2}) \ln(2/\delta)}{ \sqrt{n\zeta_j}(c_j-1)}\right\}^{\frac{c_j-1}{c_j+1}}.
    $$
\end{lemma}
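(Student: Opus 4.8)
The plan is to treat the conditional expectation operator $E_j$ as the population solution of a vector-valued (indeed operator-valued) kernel ridge regression over the RKHS $\mathcal{L}_2(\mathcal{H}_{\mathcal{A}_j},\mathcal{H}_{\mathcal{B}_j})$, mirroring the scalar argument behind Lemma~\ref{theorem:regression} with the $\mathbb{R}$-valued response replaced by the feature-map-valued ``response'' $\phi(A_j)$. Concretely, write $E_j=\argmin_{E\in\mathcal{L}_2}\mathbb{E}\|\phi(A_j)-E^*\phi(B_j)\|^2_{\mathcal{H}_{\mathcal{A}_j}}$, introduce the ridge-penalized population object $E_\lambda$ and its empirical counterpart $\hat{E}_j$ (both with penalty $\lambda$), and split by the triangle inequality
\[
\|\hat{E}_j-E_j\|_{\mathcal{L}_2}\leq \|\hat{E}_j-E_\lambda\|_{\mathcal{L}_2}+\|E_\lambda-E_j\|_{\mathcal{L}_2}
\]
into a sampling-error term and an approximation-error term, exactly paralleling Lemmas~\ref{lemma:sampling} and~\ref{prop:approx_error}.

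For the sampling-error term, I would invoke the Hilbert-space Bennett inequality (Lemma~\ref{lemma:prob}), now in the Hilbert space of Hilbert--Schmidt operators $\mathcal{L}_2(\mathcal{H}_{\mathcal{A}_j},\mathcal{H}_{\mathcal{B}_j})$, applied to the i.i.d.\ summands produced by expanding $\hat{E}_j-E_\lambda$ through the empirical versus population (cross-)covariance operators. Assumption~\ref{assumption:RKHS} supplies $\|\phi(a)\|\leq \kappa_a$ and $\|\phi(b)\|\leq\kappa_b$, so each summand has Hilbert--Schmidt norm bounded by a constant multiple of $\kappa_b(\kappa_a+\kappa_b\|E_j\|_{\mathcal{L}_2})$; this is precisely the factor appearing inside $r_E$. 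One then obtains $\|\hat{E}_j-E_\lambda\|_{\mathcal{L}_2}\leq 4\kappa_b(\kappa_a+\kappa_b\|E_j\|_{\mathcal{L}_2})\ln(2/\delta)/(\sqrt{n}\lambda)$ with probability $1-\delta$. For the approximation-error term, the source condition of Assumption~\ref{assumption:smooth_op}, namely $E_j\in\{\mathcal{L}_2(\mathcal{H}_{\mathcal{A}_j},\mathcal{H}_{\mathcal{B}_j})\}^{c_j}$, plays the role that Assumption~\ref{assumption:smooth_gamma} plays for regression, giving, by the operator analogue of Lemma~\ref{prop:approx_error}, $\|E_\lambda-E_j\|_{\mathcal{L}_2}\leq \lambda^{(c_j-1)/2}\sqrt{\zeta_j}$. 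Adding the two bounds and choosing $\lambda$ to balance $1/(\sqrt{n}\lambda)$ against $\lambda^{(c_j-1)/2}$ — i.e.\ $\lambda\asymp n^{-1/(c_j+1)}$ — reproduces, by the same elementary one-variable optimization used in the proof of Lemma~\ref{theorem:regression}, the explicit constant $\tfrac{\sqrt{\zeta_j}(c_j+1)}{4^{1/(c_j+1)}}\{\cdots\}^{(c_j-1)/(c_j+1)}$ that defines $r_E(\delta,n,c_j)$.

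The conditional mean embedding bound is then an immediate corollary. Since $\mu_a(b)=E_j^*\phi(b)$ and $\hat{\mu}_a(b)=\hat{E}_j^*\phi(b)$, for every $b\in\mathcal{B}_j$
\[
\|\hat{\mu}_a(b)-\mu_a(b)\|_{\mathcal{H}_{\mathcal{A}_j}}=\|(\hat{E}_j-E_j)^*\phi(b)\|_{\mathcal{H}_{\mathcal{A}_j}}\leq \|\hat{E}_j-E_j\|_{\mathrm{op}}\,\|\phi(b)\|_{\mathcal{H}_{\mathcal{B}_j}}\leq \kappa_b\,\|\hat{E}_j-E_j\|_{\mathcal{L}_2},
\]
using that the operator norm is dominated by the Hilbert--Schmidt norm and $\|\phi(b)\|\leq\kappa_b$; this inserts the extra prefactor $\kappa_b$, so $r_\mu(\delta,n,c_j)=\kappa_b\,r_E(\delta,n,c_j)$. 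I expect the sampling-error step to be the main obstacle: one must check that the operator-valued ridge estimator admits the clean i.i.d.\ decomposition needed to run Lemma~\ref{lemma:prob} in Hilbert--Schmidt norm, and carefully track the dependence on $\kappa_a$, $\kappa_b$, and $\|E_j\|_{\mathcal{L}_2}$ through the covariance-operator algebra; the approximation bound and the $\lambda$-optimization are routine transcriptions of the scalar case.
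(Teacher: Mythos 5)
Your proposal is correct and is essentially the argument behind the quoted result: the paper itself states this lemma without proof, citing Theorem 2 of \cite{singh2019kernel}, and that source proves it exactly as you outline---casting $E_j$ as an operator-valued kernel ridge regression, splitting $\|\hat{E}_j-E_j\|_{\mathcal{L}_2}$ into a sampling term controlled by Hilbert-space concentration (Lemma~\ref{lemma:prob}) in the Hilbert--Schmidt norm and an approximation term controlled by the source condition of Assumption~\ref{assumption:smooth_op}, then balancing with $\lambda\asymp n^{-\frac{1}{c_j+1}}$ as in Lemma~\ref{theorem:regression}. Your derivation of the second bound from $\mu_a(b)=E_j^*\phi(b)$, $\|\cdot\|_{\mathrm{op}}\leq\|\cdot\|_{\mathcal{L}_2}$, and $\|\phi(b)\|_{\mathcal{H}_{\mathcal{B}_j}}\leq\kappa_b$ is likewise the intended route, so there is no gap beyond the covariance-operator bookkeeping you already flag.
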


\begin{remark}\label{remark:3}
Note that in various applications, $\kappa_a$ and $\kappa_b$ vary.
\begin{enumerate}
\item Static sample selection
\begin{enumerate}
    \item $\theta_0^{ATT}$: $\kappa_a=\kappa_x$, $\kappa_b=\kappa_d$;
        \item $\theta_0^{CATE}$: $\kappa_a=\kappa_x$, $\kappa_b=\kappa_v$;
\end{enumerate}
    \item Dynamic sample selection: 
    \begin{enumerate}
    \item $\theta_0^{ATE}$: $\kappa_a=\kappa_m$, $\kappa_b=\kappa_d\kappa_x$;
        \item $\theta_0^{DS}$: $\kappa_a=\kappa_m$, $\kappa_b=\kappa_d\kappa_x$;
\end{enumerate}
    \item Counterfactual distributions
    \begin{enumerate}
        \item Static $\theta_0^{D:ATE}$, $\theta_0^{D:DS}$, $\theta_0^{D:ATT}$: $\kappa_a=\kappa_y$, $\kappa_b=\kappa_s \kappa_d \kappa_x$.
        \item Static $\theta_0^{D:CATE}$: $\kappa_a=\kappa_y$, $\kappa_b=\kappa_s \kappa_d  \kappa_v \kappa_x$.
        \item Dynamic $\theta_0^{D:ATE}$, $\theta_0^{D:DS}$: $\kappa_a=\kappa_y$, $\kappa_b=\kappa_s \kappa_d \kappa_x \kappa_m$.
    \end{enumerate}
\end{enumerate}
\end{remark}

\subsection{Main results}

Appealing to Remarks~\ref{remark:notation},~\ref{remark:1},~\ref{remark:2}, and~\ref{remark:3}, I prove uniform consistency for (i) dose response curves, (ii) incremental response curves, and (iii) counterfactual distributions.

\subsubsection{Static sample selection}

\begin{proof}[Proof of Theorem~\ref{theorem:consistency_static}]
I generalize \cite[Theorem 6.5]{singh2020kernel} to the static sample selection problem.

Consider $\theta_0^{ATE}$. 
    \begin{align*}
        &\hat{\theta}^{ATE}(d)-\theta_0^{ATE}(d)
        =\langle \hat{\gamma} ,\phi(1)\otimes  \phi(d)\otimes \hat{\mu}_x \rangle_{\mathcal{H}} - \langle \gamma_0 , \phi(1)\otimes \phi(d)\otimes \mu_x \rangle_{\mathcal{H}} \\
        &=\langle \hat{\gamma} ,\phi(1)\otimes \phi(d)\otimes[\hat{\mu}_x-\mu_x] \rangle_{\mathcal{H}} + \langle [\hat{\gamma}-\gamma_0], \phi(1)\otimes \phi(d) \otimes \mu_x \rangle_{\mathcal{H}} \\
        &=\langle [\hat{\gamma}-\gamma_0],\phi(1)\otimes \phi(d)\otimes[\hat{\mu}_x-\mu_x] \rangle_{\mathcal{H}} \\
        &\quad + \langle \gamma_0,\phi(1)\otimes \phi(d)\otimes[\hat{\mu}_x-\mu_x] \rangle_{\mathcal{H}}\\
        &\quad+\langle [\hat{\gamma}-\gamma_0],\phi(1)\otimes \phi(d) \otimes \mu_x \rangle_{\mathcal{H}}.
    \end{align*}
    Therefore by Lemmas~\ref{theorem:regression} and~\ref{theorem:mean}, with probability $1-2\delta$
   \begin{align*}
       &|\hat{\theta}^{ATE}(d)-\theta_0^{ATE}(d)|\leq 
       \|\hat{\gamma}-\gamma_0\|_{\mathcal{H}}\|\phi(1)\|_{\mathcal{H}_{\mathcal{S}}} \|\phi(d)\|_{\mathcal{H}_{\mathcal{D}}} \|\hat{\mu}_x-\mu_x\|_{\mathcal{H}_{\mathcal{X}}}\\
       &\quad +
       \|\gamma_0\|_{\mathcal{H}}\|\phi(1)\|_{\mathcal{H}_{\mathcal{S}}}\|\phi(d)\|_{\mathcal{H}_{\mathcal{D}}}\|\hat{\mu}_x-\mu_x\|_{\mathcal{H}_{\mathcal{X}}} \\
       &\quad + 
       \|\hat{\gamma}-\gamma_0\|_{\mathcal{H}}\|\phi(1)\|_{\mathcal{H}_{\mathcal{S}}}\|\phi(d)\|_{\mathcal{H}_{\mathcal{D}}} \|\mu_x\|_{\mathcal{H}_{\mathcal{X}}}
      \\
      &\leq \kappa_s\kappa_d \cdot r_{\gamma}(n,\delta,c) \cdot r_{\mu}(n,\delta)+\kappa_s\kappa_d\cdot\|\gamma_0\|_{\mathcal{H}} \cdot r_{\mu}(n,\delta)+\kappa_s\kappa_d\kappa_x \cdot r_{\gamma}(n,\delta,c)\\
      &=O\left(n^{-\frac{1}{2}\frac{c-1}{c+1}}\right).
   \end{align*}
By the same argument, with probability $1-2\delta$
    \begin{align*}
   &|\hat{\theta}^{DS}(d,\tilde{\mathbb{P}})-\theta_0^{DS}(d,\tilde{\mathbb{P}})| \\
    &\leq \kappa_s\kappa_d \cdot r_{\gamma}(n,\delta,c) \cdot r_{\nu}(\tilde{n},\delta)+\kappa_s\kappa_d\cdot\|\gamma_0\|_{\mathcal{H}} \cdot r_{\nu}(\tilde{n},\delta)+\kappa_s\kappa_d\kappa_x \cdot r_{\gamma}(n,\delta,c)\\
      &=O\left( n^{-\frac{1}{2}\frac{c-1}{c+1}}+\tilde{n}^{-\frac{1}{2}}\right).
    \end{align*}
    Next, consider $\theta_0^{ATT}$.
    \begin{align*}
        &\hat{\theta}^{ATT}(d,d')-\theta_0^{ATT}(d,d') =\langle \hat{\gamma} ,\phi(1)\otimes \phi(d')\otimes \hat{\mu}_x(d) \rangle_{\mathcal{H}} - \langle \gamma_0 ,\phi(1)\otimes \phi(d')\otimes \mu_x(d) \rangle_{\mathcal{H}} \\
        &=\langle \hat{\gamma} ,\phi(1)\otimes \phi(d')\otimes[\hat{\mu}_x(d)-\mu_x(d)] \rangle_{\mathcal{H}} + \langle [\hat{\gamma}-\gamma_0],\phi(1)\otimes \phi(d') \otimes \mu_x(d) \rangle_{\mathcal{H}} \\
        &=\langle [\hat{\gamma}-\gamma_0], \phi(1)\otimes\phi(d')\otimes[\hat{\mu}_x(d)-\mu_x(d)] \rangle_{\mathcal{H}} \\
        &\quad + \langle \gamma_0, \phi(1)\otimes\phi(d')\otimes[\hat{\mu}_x(d)-\mu_x(d)] \rangle_{\mathcal{H}}\\
        &\quad +\langle [\hat{\gamma}-\gamma_0], \phi(1)\otimes\phi(d') \otimes \mu_x(d) \rangle_{\mathcal{H}}.
    \end{align*}
    Therefore by Lemmas~\ref{theorem:regression} and~\ref{theorem:conditional}, with probability $1-2\delta$
   \begin{align*}
       &|\hat{\theta}^{ATT}(d,d')-\theta_0^{ATT}(d,d')|
       \leq 
       \|\hat{\gamma}-\gamma_0\|_{\mathcal{H}}\|\phi(1)\|_{\mathcal{H}_{\mathcal{S}}}\|\phi(d')\|_{\mathcal{H}_{\mathcal{D}}} \|\hat{\mu}_x(d)-\mu_x(d)\|_{\mathcal{H}_{\mathcal{X}}}\\
       &\quad +\|\gamma_0\|_{\mathcal{H}}\|\phi(1)\|_{\mathcal{H}_{\mathcal{S}}}\|\phi(d')\|_{\mathcal{H}_{\mathcal{D}}}\|\hat{\mu}_x(d)-\mu_x(d)\|_{\mathcal{H}_{\mathcal{X}}} \\
       &\quad +
       \|\hat{\gamma}-\gamma_0\|_{\mathcal{H}}\|\phi(1)\|_{\mathcal{H}_{\mathcal{S}}}\|\phi(d')\|_{\mathcal{H}_{\mathcal{D}}} \|\mu_x(d)\|_{\mathcal{H}_{\mathcal{X}}}
      \\
      &\leq \kappa_s\kappa_d \cdot r_{\gamma}(n,\delta,c) \cdot r_{\mu}^{ATT}(n,\delta,c_1)+\kappa_s\kappa_d\cdot\|\gamma_0\|_{\mathcal{H}} \cdot r_{\mu}^{ATT}(n,\delta,c_1)
      +\kappa_s\kappa_d\kappa_x \cdot r_{\gamma}(n,\delta,c)
      \\
      &=O\left(n^{-\frac{1}{2}\frac{c-1}{c+1}}+n^{-\frac{1}{2}\frac{c_1-1}{c_1+1}}\right).
   \end{align*}
    Finally, consider $\theta_0^{CATE}$.
    \begin{align*}
        &\hat{\theta}^{CATE}(d,v)-\theta_0^{CATE}(d,v)=\langle \hat{\gamma} , \phi(1)\otimes\phi(d)\otimes \phi(v)\otimes \hat{\mu}_{x}(v) \rangle_{\mathcal{H}} - \langle \gamma_0 ,\phi(1)\otimes \phi(d )\otimes \phi(v) \otimes \mu_{x}(v) \rangle_{\mathcal{H}} \\
        &=\langle \hat{\gamma} ,\phi(1)\otimes \phi(d)\otimes \phi(v)\otimes[\hat{\mu}_{x}(v)-\mu_{x}(v)] \rangle_{\mathcal{H}} + \langle [\hat{\gamma}-\gamma_0],\phi(1)\otimes \phi(d)\otimes \phi(v) \otimes \mu_{x}(v) \rangle_{\mathcal{H}} \\
        &=\langle [\hat{\gamma}-\gamma_0],\phi(1)\otimes \phi(d)\otimes \phi(v)\otimes[\hat{\mu}_{x}(v)-\mu_{x}(v)] \rangle_{\mathcal{H}}  \\
        &\quad + \langle \gamma_0,\phi(1)\otimes \phi(d)\otimes \phi(v)\otimes[\hat{\mu}_{x}(v)-\mu_{x}(v)] \rangle_{\mathcal{H}}\\
        &\quad +\langle [\hat{\gamma}-\gamma_0],\phi(1)\otimes \phi(d)\otimes \phi(v) \otimes \mu_{x}(v) \rangle_{\mathcal{H}}.
    \end{align*}
    Therefore by Lemmas~\ref{theorem:regression} and~\ref{theorem:conditional}, with probability $1-2\delta$
   \begin{align*}
       &|\hat{\theta}^{CATE}(d,v)-\theta_0^{CATE}(d,v)|\leq 
       \|\hat{\gamma}-\gamma_0\|_{\mathcal{H}}\|\phi(1)\|_{\mathcal{H}_{\mathcal{S}}}\|\phi(d)\|_{\mathcal{H}_{\mathcal{D}}}\|\phi(v)\|_{\mathcal{H}_{\mathcal{V}}} \|\hat{\mu}_{x}(v)-\mu_{x}(v)\|_{\mathcal{H}_{\mathcal{X}}}\\
      &\quad+
       \|\gamma_0\|_{\mathcal{H}}\|\phi(1)\|_{\mathcal{H}_{\mathcal{S}}}\|\phi(d)\|_{\mathcal{H}_{\mathcal{D}}}\|\phi(v)\|_{\mathcal{H}_{\mathcal{V}}}\|\hat{\mu}_{x}(v)-\mu_{x}(v)\|_{\mathcal{H}_{\mathcal{X}}} \\
       &\quad+
       \|\hat{\gamma}-\gamma_0\|_{\mathcal{H}}\|\phi(1)\|_{\mathcal{H}_{\mathcal{S}}}\|\phi(d)\|_{\mathcal{H}_{\mathcal{D}}}\|\phi(v)\|_{\mathcal{H}_{\mathcal{V}}} \|\mu_{x}(v)\|_{\mathcal{H}_{\mathcal{X}}}
      \\
      &\leq \kappa_s\kappa_d\kappa_{v} \cdot r_{\gamma}(n,\delta,c) \cdot r_{\mu}^{CATE}(n,\delta,c_2)
      +\kappa_s\kappa_d\kappa_{v}\cdot\|\gamma_0\|_{\mathcal{H}} \cdot r_{\mu}^{CATE}(n,\delta,c_2)
      +\kappa_s\kappa_d\kappa_{v} \kappa_{x} \cdot r_{\gamma}(n,\delta,c)
      \\
      &=O\left(n^{-\frac{1}{2}\frac{c-1}{c+1}}+n^{-\frac{1}{2}\frac{c_2-1}{c_2+1}}\right).
   \end{align*}
    For incremental responses, replace $\phi(d)$ with $\nabla_d \phi(d)$ and hence replace $\|\phi(d)\|_{\mathcal{H}_{\mathcal{D}}}\leq \kappa_d$ with $\|\nabla_d \phi(d)\|_{\mathcal{H}_{\mathcal{D}}}\leq \kappa_d'$.
\end{proof}

\subsubsection{Dynamic sample selection}

To lighten notation, define
\begin{align*}
    \Delta_p&:=\frac{1}{n}\sum_{i=1}^n \{\phi(X_i)\otimes \hat{\mu}_{m}(d,X_i) \}-\int \{\phi(x)\otimes \mu_{m}(d,x) \}\mathrm{d}\mathbb{P}(x), \\
    \Delta_q&:=\frac{1}{\tilde{n}}\sum_{i=1}^{\tilde{n}} \{\phi(\tilde{X}_i)\otimes \hat{\nu}_{m}(d,\tilde{X}_i) \}-\int \{\phi(x)\otimes \nu_{m}(d,x) \}\mathrm{d}\tilde{\mathbb{P}}(x).
\end{align*}

\begin{proposition}\label{prop:delta_p}
Suppose Assumptions~\ref{assumption:RKHS} and~\ref{assumption:original} hold.
\begin{enumerate}
    \item If in addition Assumption~\ref{assumption:smooth_op} holds with with $\mathcal{A}_4=\mathcal{X}$ and $\mathcal{B}_4=\mathcal{D}\times \mathcal{X}$ then with probability $1-2\delta$
    \begin{align*}
        &\left\|\Delta_p\right\|_{\mathcal{H}_{\mathcal{X}}\otimes\mathcal{H}_{\mathcal{M}}}\leq \kappa_x \cdot r^{ATE}_{\mu}(n,\delta,c_4)+r^{ATE}_{\mu}(n,\delta).
\end{align*}
    \item If in addition Assumption~\ref{assumption:smooth_op} hold with $\mathcal{A}_5=\mathcal{X}$ and $\mathcal{B}_5=\mathcal{D}\times \mathcal{X}$ then with probability $1-2\delta$
    \begin{align*}
        &\left\|\Delta_q\right\|_{\mathcal{H}_{\mathcal{X}}\otimes\mathcal{H}_{\mathcal{M}}}\leq \kappa_x \cdot r^{DS}_{\nu}(\tilde{n},\delta,c_5)+r^{DS}_{\nu}(\tilde{n},\delta).
\end{align*}
\end{enumerate}
\end{proposition}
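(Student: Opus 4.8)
The plan is to prove both bounds by the standard ``add and subtract the population conditional mean embedding'' decomposition, reducing each quantity to an estimation-error term governed by Lemma~\ref{theorem:conditional} and a sampling-error term governed by Lemma~\ref{theorem:mean}. Concretely, I would write
\begin{align*}
\Delta_p &= \frac{1}{n}\sum_{i=1}^n \phi(X_i)\otimes\{\hat{\mu}_{m}(d,X_i)-\mu_{m}(d,X_i)\} \\
&\quad + \left[\frac{1}{n}\sum_{i=1}^n \phi(X_i)\otimes \mu_{m}(d,X_i) - \int \phi(x)\otimes\mu_{m}(d,x)\,\mathrm{d}\mathbb{P}(x)\right],
\end{align*}
bound the two bracketed quantities separately, and finish with the triangle inequality in $\mathcal{H}_{\mathcal{X}}\otimes\mathcal{H}_{\mathcal{M}}$ together with a union bound over the two $1-\delta$ events.

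For the first term I would use multiplicativity of the tensor norm together with $\sup_{x}\|\phi(x)\|_{\mathcal{H}_{\mathcal{X}}}\le\kappa_x$ from Assumption~\ref{assumption:RKHS}, and invoke the uniform-in-conditioning-value guarantee of Lemma~\ref{theorem:conditional} for the relevant conditional expectation operator $E_4$ (with $\kappa_a=\kappa_m$, $\kappa_b=\kappa_d\kappa_x$ as recorded in Remark~\ref{remark:3}): on a single $1-\delta$ event, $\|\hat{\mu}_{m}(d,b)-\mu_{m}(d,b)\|_{\mathcal{H}_{\mathcal{M}}}\le r^{ATE}_{\mu}(n,\delta,c_4)$ for every conditioning value $b$, hence in particular at each data point, so the first term is at most $\kappa_x\cdot r^{ATE}_{\mu}(n,\delta,c_4)$. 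For the second term I would observe that it is exactly the empirical-mean error for estimating $\mathbb{E}[\phi(X)\otimes\mu_{m}(d,X)]$ by the i.i.d.\ average of the Hilbert-space-valued variables $\xi_i=\phi(X_i)\otimes\mu_{m}(d,X_i)$, which satisfy $\|\xi_i\|_{\mathcal{H}_{\mathcal{X}}\otimes\mathcal{H}_{\mathcal{M}}}\le\kappa_x\kappa_m$; Lemma~\ref{lemma:prob} (equivalently Lemma~\ref{theorem:mean} with $\kappa_w=\kappa_x\kappa_m$, cf.\ Remark~\ref{remark:2}) gives the bound $r^{ATE}_{\mu}(n,\delta)$ with probability $1-\delta$. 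Combining proves part (1). Part (2) is identical after replacing $n$ by $\tilde n$, $\mathbb{P}$ by $\tilde{\mathbb{P}}$, $E_4$ by $E_5$, and $\mu_{m}$ by $\nu_{m}$, since $\hat{\nu}_{m}(d,\cdot)$ is built from the $\tilde{\mathbb{P}}$-sample in Algorithm~\ref{algorithm:planning}.

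The one point requiring care, and the only non-routine step, is that $\hat{\mu}_{m}(d,\cdot)$ is estimated from the same sample $\{(X_i,M_i)\}$ over which the outer average runs, so the two error terms are not independent. This is harmless precisely because Lemma~\ref{theorem:conditional} is a \emph{uniform} statement: it controls the supremum over all conditioning values on one event determined by the sample used to form $\hat{\mu}_{m}$, so substituting the same-sample values $X_i$ is legitimate and no further concentration step is needed. A secondary bookkeeping observation is that every constant entering both bounds ($\kappa_x,\kappa_m,\kappa_d$, $\|E_4\|_{\mathcal{L}_2}$, and the Bennett parameters $\tilde M=\kappa_x\kappa_m$, $\sigma^2\le\kappa_x^2\kappa_m^2$) is independent of $d$, so the stated rates hold with no $d$-dependence, as is needed when the result is later applied uniformly over $d\in\mathcal{D}$ in Theorem~\ref{theorem:consistency_planning}.
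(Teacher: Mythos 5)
Your proof is correct and follows essentially the same route as the paper: the identical add-and-subtract decomposition of $\Delta_p$, the bound $\kappa_x\cdot r^{ATE}_{\mu}(n,\delta,c_4)$ on the estimation term via the uniform (in the conditioning value) guarantee of Lemma~\ref{theorem:conditional}, the bound $r^{ATE}_{\mu}(n,\delta)$ on the sampling term via Lemma~\ref{theorem:mean} with $\kappa_w=\kappa_x\kappa_m$, and a triangle-inequality plus union-bound finish, with the $DS$ case handled symmetrically. Your explicit remarks on same-sample dependence and $d$-independence of the constants are sound and merely make explicit what the paper's proof uses implicitly.
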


\begin{proof}
  By triangle inequality, 
\begin{align*}
    \left\|\Delta_p\right\|_{\mathcal{H}_{\mathcal{X}}\otimes\mathcal{H}_{\mathcal{M}}}
    &\leq \left\|\frac{1}{n}\sum_{i=1}^n \{\phi(X_i)\otimes \hat{\mu}_{m}(d,X_i) \}-\{\phi(X_i)\otimes \mu_{m}(d,X_i) \}\right\|_{\mathcal{H}_{\mathcal{X}}\otimes\mathcal{H}_{\mathcal{M}}} \\
    &\quad +  \left\|\frac{1}{n}\sum_{i=1}^n\{\phi(X_i) \otimes \mu_{m}(d,X_i)\}-\int \{\phi(x)\otimes \mu_{m}(d,x)\} \mathrm{d}\mathbb{P}(x)\right\|_{\mathcal{H}_{\mathcal{X}}\otimes \mathcal{H}_{\mathcal{M}}}.
\end{align*}
  Focusing on the former term, by Lemma~\ref{theorem:conditional}
  \begin{align*}
      &\left\|\frac{1}{n}\sum_{i=1}^n \{\phi(X_i)\otimes \hat{\mu}_{m}(d,X_i) \}-\{\phi(X_i)\otimes \mu_{m}(d,X_i) \}\right\|_{\mathcal{H}_{\mathcal{X}}\otimes\mathcal{H}_{\mathcal{M}}} \\
      &=\left\|\frac{1}{n}\sum_{i=1}^n  \phi(X_i) \otimes \{\hat{\mu}_{m}(d,X_i)-\mu_{m}(d,X_i)\} \right\|_{\mathcal{H}_{\mathcal{X}}\otimes\mathcal{H}_{\mathcal{M}}} \\
      &\leq \kappa_x \cdot \sup_{x\in\mathcal{X}}\left\| \hat{\mu}_{m}(d,x)-\mu_{m}(d,x)\right\|_{\mathcal{H}_{\mathcal{X}}} \\
      &\leq \kappa_x \cdot r^{ATE}_{\mu}(n,\delta,c_4).
  \end{align*}
  Focusing on the latter term, by Lemma~\ref{theorem:mean}
  \begin{align*}
      \left\|\frac{1}{n}\sum_{i=1}^n\{\phi(X_i) \otimes \mu_{m}(d,X_i)\}-\int \{\phi(x)\otimes \mu_{m}(d,x)\} \mathrm{d}\mathbb{P}(x)\right\|_{\mathcal{H}_{\mathcal{X}}\otimes \mathcal{H}_{\mathcal{M}}}\leq r^{ATE}_{\mu}(n,\delta).
  \end{align*}
  The argument for $\theta_0^{DS}$ is identical.
\end{proof}

\begin{proof}[Proof of Theorem~\ref{theorem:consistency_planning}]
I adapt the techniques of \cite[Theorem 7]{singh2021workshop} to the dynamic sample selection problem.

Consider $\theta_0^{ATE}$.
    \begin{align*}
        &\hat{\theta}^{ATE}(d)-\theta_0^{ATE}(d)\\
        &=\langle \hat{\gamma}, \phi(1)\otimes \phi(d)\otimes \frac{1}{n}\sum_{i=1}^n\{\phi(X_i)\otimes \hat{\mu}_{m}(d,X_i)\}  \rangle_{\mathcal{H}}  - \langle \gamma_0,\phi(1) \otimes  \phi(d)\otimes \int \phi(x)\otimes \mu_{m}(d,x) \mathrm{d}\mathbb{P}(x) \rangle_{\mathcal{H}} \\
        &=\langle \hat{\gamma}, \phi(1)\otimes  \phi(d)\otimes
        \Delta_p \rangle_{\mathcal{H}} +\langle (\hat{\gamma}-\gamma_0),  \phi(1)\otimes\phi(d) \otimes \int \phi(x)\otimes \mu_{m}(d,x) \mathrm{d}\mathbb{P}(x) \rangle_{\mathcal{H}}  \\
        &=\langle (\hat{\gamma}-\gamma_0), \phi(1)\otimes\phi(d)\otimes
        \Delta_p \rangle_{\mathcal{H}} 
        \\
       &\quad+\langle \gamma_0, \phi(1)\otimes\phi(d)\otimes
        \Delta_p \rangle_{\mathcal{H}} \\
        &\quad +\langle (\hat{\gamma}-\gamma_0),  \phi(1)\otimes\phi(d) \otimes \int \phi(x)\otimes \mu_{m}(d,x) \mathrm{d}\mathbb{P}(x) \rangle_{\mathcal{H}}.
    \end{align*}
    Therefore by Lemmas~\ref{theorem:regression},~\ref{theorem:mean}, and~\ref{theorem:conditional} as well as~Proposition~\ref{prop:delta_p}, with probability $1-3\delta$
    \begin{align*}
        &|\hat{\theta}^{ATE}(d)-\theta^{ATE}_0(d)|\\
        &\leq \|\hat{\gamma}-\gamma_0\|_{\mathcal{H}}\|\phi(1)\|_{\mathcal{H}_{\mathcal{S}}}\|\phi(d)\|_{\mathcal{H}_{\mathcal{D}}} 
        \left\|\Delta_p\right\|_{\mathcal{H}_{\mathcal{X}}\otimes\mathcal{H}_{\mathcal{M}}}
       \\
       &\quad +
       \|\gamma_0\|_{\mathcal{H}}\|\phi(1)\|_{\mathcal{H}_{\mathcal{S}}}\|\phi(d)\|_{\mathcal{H}_{\mathcal{D}}}
       \left\|\Delta_p\right\|_{\mathcal{H}_{\mathcal{X}}\otimes\mathcal{H}_{\mathcal{M}}}
       \\
       &\quad+
       \|\hat{\gamma}-\gamma_0\|_{\mathcal{H}}\|\phi(1)\|_{\mathcal{H}_{\mathcal{S}}}\|\phi(d)\|_{\mathcal{H}_{\mathcal{D}}}\times \left\|\int \{\phi(x)\otimes \mu_{m}(d,x) \}\mathrm{d}\mathbb{P}(x)\right\|_{\mathcal{H}_{\mathcal{X}}\otimes\mathcal{H}_{\mathcal{M}}}
      \\
      &\leq \kappa_s\kappa_d \cdot r_{\gamma}(n,\delta,c) \cdot \{\kappa_x \cdot r^{ATE}_{\mu}(n,\delta,c_4)+r^{ATE}_{\mu}(n,\delta)\}\\
      &\quad +\kappa_s\kappa_d\cdot\|\gamma_0\|_{\mathcal{H}} \cdot \{\kappa_x \cdot r^{ATE}_{\mu}(n,\delta,c_4)+r^{ATE}_{\mu}(n,\delta)\}\\
       &\quad+\kappa_s\kappa_d\kappa_x\kappa_m \cdot r_{\gamma}(n,\delta,c)
      \\
      &=O\left(n^{-\frac{1}{2}\frac{c-1}{c+1}}+n^{-\frac{1}{2}\frac{c_4-1}{c_4+1}}\right).
    \end{align*}
    By the same argument
  \begin{align*}
        &|\hat{\theta}^{DS}(d)-\theta^{DS}_0(d)|\\
      &\leq \kappa_s\kappa_d \cdot r_{\gamma}(n,\delta,c) \cdot \{\kappa_x \cdot r^{DS}_{\nu}(\tilde{n},\delta,c_5)+r^{DS}_{\nu}(\tilde{n},\delta)\}\\
      &\quad +\kappa_s\kappa_d\cdot\|\gamma_0\|_{\mathcal{H}} \cdot \{\kappa_x \cdot r^{DS}_{\nu}(\tilde{n},\delta,c_5)+r^{DS}_{\nu}(\tilde{n},\delta)\}\\
      &\quad +\kappa_s\kappa_d\kappa_x\kappa_m \cdot r_{\gamma}(n,\delta,c)
      \\
      &=O\left(n^{-\frac{1}{2}\frac{c-1}{c+1}}+\tilde{n}^{-\frac{1}{2}\frac{c_5-1}{c_5+1}}\right).
    \end{align*}
   For incremental responses, replace $\phi(d)$ with $\nabla_d \phi(d)$ and hence replace $\|\phi(d)\|_{\mathcal{H}_{\mathcal{D}}}\leq \kappa_d$ with $\|\nabla_d \phi(d)\|_{\mathcal{H}_{\mathcal{D}}}\leq \kappa_d'$.
\end{proof}

\subsubsection{Counterfactual distributions}

\begin{proof}[Proof of Theorem~\ref{theorem:consistency_dist}]
   The argument is analogous to Theorems~\ref{theorem:consistency_static} and~\ref{theorem:consistency_planning}, replacing $\|\gamma_0\|_{\mathcal{H}}$ with $\|E_8\|_{\mathcal{L}_2}$ or $\|E_9\|_{\mathcal{L}_2}$  and replacing $r_{\gamma}(n,\delta,c)$ with $r_E(n,\delta,c_8)$ or $r_E(n,\delta,c_9)$.
\end{proof}

\begin{proof}[Proof of Theorem~\ref{theorem:conv_dist}]
 I generalize \cite[Theorem A.7]{singh2020kernel} to sample selection problems.
   
    Fix $d$. By Theorem~\ref{theorem:consistency_dist}
    $$
    \|\hat{\theta}^{ATE}(d)-\check{\theta}_0^{ATE}(d)\|_{\mathcal{H}_{\mathcal{Y}}}=O_p\left(n^{-\frac{1}{2}\frac{c_8-1}{c_8+1}}\right).
    $$
    Denote the samples constructed by Algorithm~\ref{algorithm:herding} by $(\tilde{Y}_j)^m_{j=1}$. Then by \cite[Section 4.2]{bach2012equivalence}
    $$
    \left\|\hat{\theta}^{ATE}(d)-\frac{1}{m}\sum_{j=1}^m \phi(\tilde{Y}_j)\right\|_{\mathcal{H}_{\mathcal{Y}}}=O(m^{-\frac{1}{2}}).
    $$
    Therefore by triangle inequality
    $$
    \left\|\frac{1}{m}\sum_{j=1}^m \phi(\tilde{Y}_j)-\check{\theta}_0^{ATE}(d)\right\|_{\mathcal{H}_{\mathcal{Y}}}=O_p\left(n^{-\frac{1}{2}\frac{c_8-1}{c_8+1}}+m^{-\frac{1}{2}}\right).
    $$
    The desired result follows from \cite{sriperumbudur2016optimal}, as quoted by \cite[Theorem 1.1]{simon2020metrizing}. The arguments for other counterfactual distributions are identical.
\end{proof}
\section{Semiparametric inference proofs}\label{sec:inference_proof}

In this appendix, I (i) present technical lemmas for regression; (ii) appeal to these lemmas to prove $\sqrt{n}$ consistency, Gaussian approximation, and semiparametric efficiency of treatment effects for static and dynamic sample selection.

\subsection{Lemmas}

Recall the various nonparametric objects required for inference. For static sample selection,
\begin{align*}
    \gamma_0(s,d,x)&=\mathbb{E}(SY|S=s,D=d,M=m,X=x)
\end{align*}
and $ \alpha^{(m)}_0(s,d,x)$ is given in Theorem~\ref{theorem:dr_static}. For dynamic sample selection,
\begin{align*}
\gamma_0(s,d,x,m)&=\mathbb{E}(SY|S=s,D=d,X=x,M=m),\\
\pi_0(x)&=\mathbb{E}(D|X=x),\\
\rho_0(d,x,m)&=\mathbb{E}(S|D=d,X=x,M=m),\\
\omega_0(s,d;x)&=\int \gamma_0(s,d,x,m) \mathrm{d}\mathbb{P}(m|d,x).
\end{align*}
I present uniform and mean square rates for these nonparametric objects.

\subsubsection{Uniform rate}

Observe that $(\gamma_0,\pi_0,\rho_0)$ can be estimated by nonparametric regressions. I quote a classic result for kernel ridge regression then specialize it for these various nonparametric regressions. For the abstract result, write
\begin{align*}
    \gamma_0&=\argmin_{\gamma\in\mathcal{H}}\mathcal{E}(\gamma),\quad \mathcal{E}(\gamma)=\mathbb{E}[\{Y-\gamma(W)\}^2], \\
    \hat{\gamma}&=\argmin_{\gamma\in\mathcal{H}}\hat{\mathcal{E}}(\gamma),\quad \hat{\mathcal{E}}(\gamma)=\frac{1}{n}\sum_{i=1}^n\{Y_i-\gamma(W_i)\}^2+\lambda\|\gamma\|^2_{\mathcal{H}}.
\end{align*}

\begin{lemma}[Uniform regression rate; Proposition 5 of \cite{singh2021workshop}]\label{prop:unif}
Suppose Assumptions~\ref{assumption:RKHS},~\ref{assumption:original}, and \ref{assumption:smooth_gamma} hold. Then with probability $1-\delta$
$$
\|\hat{\gamma}-\gamma_0\|_{\infty}\leq r_{\gamma,{\infty}}(n,\delta,c):=\kappa_w\dfrac{ \sqrt{\zeta}(c+1)}{4^{\frac{1}{c+1}}} \left\{\dfrac{6 C \kappa_w \ln(2/\delta)}{ \sqrt{n\zeta}(c-1)}\right\}^{\frac{c-1}{c+1}}=O\left(n^{-\frac{1}{2}\frac{c-1}{c+1}}\right).
$$
To lighten notation, I write summarize the asymptotic rate as
$
r_{\gamma,{\infty}}(n,c)=n^{-\frac{1}{2}\frac{c-1}{c+1}}
$.
\end{lemma}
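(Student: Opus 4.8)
The plan is to obtain the uniform bound as an immediate corollary of the Hilbert-norm rate already recorded in Lemma~\ref{theorem:regression}, using the fact that a bounded kernel embeds $\mathcal{H}$ continuously into $L_\infty$. First I would note that, under Assumption~\ref{assumption:RKHS}, the kernel of the relevant regressor $W$ is a product of the scalar kernels and is therefore bounded, with $\sup_{w}\|\phi(w)\|_{\mathcal{H}}=\sup_w\sqrt{k(w,w)}\leq\kappa_w$; the precise value of $\kappa_w$ in each instantiation of the sample selection model is the one tabulated in Remark~\ref{remark:1}. Both $\hat{\gamma}$ and $\gamma_0$ lie in $\mathcal{H}$ — the former by construction as a minimizer over $\mathcal{H}$, the latter because Assumption~\ref{assumption:smooth_gamma} gives $\gamma_0\in\mathcal{H}^c\subset\mathcal{H}$ — so $\hat{\gamma}-\gamma_0\in\mathcal{H}$ and the reproducing property applies pointwise.

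Next I would combine the reproducing property with Cauchy--Schwarz: for every $w$,
\[
|(\hat{\gamma}-\gamma_0)(w)|=\left|\langle \hat{\gamma}-\gamma_0,\phi(w)\rangle_{\mathcal{H}}\right|\leq\|\hat{\gamma}-\gamma_0\|_{\mathcal{H}}\,\|\phi(w)\|_{\mathcal{H}}\leq\kappa_w\,\|\hat{\gamma}-\gamma_0\|_{\mathcal{H}},
\]
and taking the supremum over $w$ yields $\|\hat{\gamma}-\gamma_0\|_{\infty}\leq\kappa_w\|\hat{\gamma}-\gamma_0\|_{\mathcal{H}}$. Since the hypotheses of the present lemma are exactly those of Lemma~\ref{theorem:regression}, that lemma gives $\|\hat{\gamma}-\gamma_0\|_{\mathcal{H}}\leq r_{\gamma}(n,\delta,c)$ with probability $1-\delta$; substituting produces, on the same event,
\[
\|\hat{\gamma}-\gamma_0\|_{\infty}\leq\kappa_w\,r_{\gamma}(n,\delta,c)=\kappa_w\dfrac{\sqrt{\zeta}(c+1)}{4^{\frac{1}{c+1}}}\left\{\dfrac{6C\kappa_w\ln(2/\delta)}{\sqrt{n\zeta}(c-1)}\right\}^{\frac{c-1}{c+1}}=:r_{\gamma,\infty}(n,\delta,c).
\]
Reading off the $n$-dependence, the bracketed factor is of order $n^{-\frac{1}{2}\frac{c-1}{c+1}}$ while every remaining factor is constant in $n$, which gives the stated $O(n^{-\frac{1}{2}\frac{c-1}{c+1}})$ and justifies the shorthand $r_{\gamma,\infty}(n,c)=n^{-\frac{1}{2}\frac{c-1}{c+1}}$.

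There is no real obstacle here: the argument is the textbook continuous embedding $\|f\|_{\infty}\leq\sqrt{\|k\|_{\infty}}\,\|f\|_{\mathcal{H}}$ applied to the existing regression-rate bound. The only points requiring care are bookkeeping ones — using the \emph{boundedness} part of Assumption~\ref{assumption:RKHS} (rather than continuity or the characteristic property) for the step $\|\phi(w)\|_{\mathcal{H}}\leq\kappa_w$, and matching $\kappa_w$ to the correct product of scalar constants, namely $\kappa_s\kappa_d\kappa_x$ (or $\kappa_s\kappa_d\kappa_v\kappa_x$) for static sample selection and $\kappa_s\kappa_d\kappa_x\kappa_m$ for dynamic sample selection, as in Remark~\ref{remark:1}.
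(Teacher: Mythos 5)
Your argument is correct and is exactly the intended one: the paper quotes this rate (from the cited reference) as $\kappa_w$ times the Hilbert-norm bound of Lemma~\ref{theorem:regression}, and the justification is precisely the reproducing-property/Cauchy--Schwarz embedding $\|f\|_{\infty}\leq \kappa_w\|f\|_{\mathcal{H}}$ that you use, under the same Assumptions~\ref{assumption:RKHS},~\ref{assumption:original}, and~\ref{assumption:smooth_gamma}. Your bookkeeping of $\kappa_w$ across the static and dynamic instantiations also matches Remark~\ref{remark:unif}, so nothing is missing.
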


\begin{remark}\label{remark:unif}
In various applications, $(\kappa_w,c,\zeta,C)$ vary.
\begin{enumerate}
    \item Static sample selection
    \begin{enumerate}
        \item $\theta_0^{ATE}$, $\theta_0^{DS}$, $\theta_0^{ATT}$: $\|\hat{\gamma}-\gamma_0\|_{\infty}\leq r^{ATE}_{\gamma,{\infty}}(n,\delta,c)$ with $(\kappa_s\kappa_d\kappa_x,c,\zeta,C)$, summarized by $r^{ATE}_{\gamma,{\infty}}(n,c)$.
         \item $\theta_0^{CATE}$: $\|\hat{\gamma}-\gamma_0\|_{\infty}\leq r^{CATE}_{\gamma,{\infty}}(n,\delta,c)$ with $(\kappa_s\kappa_d\kappa_v\kappa_x,c,\zeta,C)$, summarized by $r^{CATE}_{\gamma,{\infty}}(n,c)$.
    \end{enumerate}
    \item Dynamic sample selection
        \begin{enumerate}
         \item $\|\hat{\gamma}-\gamma_0\|_{\infty}\leq r^{ATE}_{\gamma,{\infty}}(n,\delta,c)$ with $(\kappa_s\kappa_d\kappa_x\kappa_m,c,\zeta,C)$, summarized by $r^{ATE}_{\gamma,{\infty}}(n,c)$.
        \item $\|\hat{\pi}-\pi_0\|_{\infty}\leq r^{ATE}_{\pi,{\infty}}(n,\delta,c_6)$  with $(\kappa_x,c_6,\zeta_6,1)$, summarized by $r^{ATE}_{\pi,{\infty}}(n,c_6)$.
        \item $\|\hat{\rho}-\rho_0\|_{\infty}\leq r^{ATE}_{\rho,{\infty}}(n,\delta,c_7)$  with $(\kappa_d\kappa_x\kappa_m,c_7,\zeta_7,1)$, summarized by $r^{ATE}_{\rho,{\infty}}(n,c_7)$.
    \end{enumerate}
\end{enumerate}
\end{remark}

Next I provide a uniform rate for $\omega_0$, using nonparametric techniques developed in Appendix~\ref{section:consistency_proof}.

\begin{proposition}[Uniform $\omega$ rate]\label{prop:unif_omega}
Suppose Assumptions~\ref{assumption:RKHS},~\ref{assumption:original},~\ref{assumption:smooth_gamma}, and~\ref{assumption:smooth_op} hold with $\mathcal{A}_4=\mathcal{X}$ and $\mathcal{B}_4=\mathcal{D}\times \mathcal{X}$. Then with probability $1-2\delta$
\begin{align*}
    &\|\hat{\omega}-\omega_0\|_{\infty}\leq r^{ATE}_{\omega,\infty}(n,\delta,c,c_4)\\
    &:=\kappa_s\kappa_d\kappa_x \cdot r_{\gamma}(n,\delta,c) \cdot r^{ATE}_{\mu}(n,\delta,c_4)+\kappa_s\kappa_d\kappa_x \cdot \|\gamma_0\|_{\mathcal{H}}\cdot  r^{ATE}_{\mu}(n,\delta,c_4)+\kappa_s\kappa_d\kappa_x\kappa_m\cdot  r_{\gamma}(n,\delta,c).
\end{align*}
I summarize the rate as $r^{ATE}_{\omega,\infty}(n,c,c_4)=O\left(n^{-\frac{1}{2}\frac{c-1}{c+1}}+n^{-\frac{1}{2}\frac{c_4-1}{c_4+1}}\right)$.
\end{proposition}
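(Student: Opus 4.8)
\textbf{Proof proposal for Proposition~\ref{prop:unif_omega}.}
The plan is to mimic the inner-product decomposition used in the proofs of Theorems~\ref{theorem:representation_planning} and~\ref{theorem:consistency_planning}, but to keep the covariate value $x$ (and treatment value $d$) fixed-but-arbitrary rather than averaging over a sample, so that the supremum over $(d,x)$ is absorbed by the uniform feature-norm bounds rather than generating an extra sampling-error term. First I would record the two representations. By Bochner integrability (Assumption~\ref{assumption:RKHS}) and the argument in the proof of Theorem~\ref{theorem:representation_planning}, the population object satisfies $\omega_0(1,d;x)=\langle \gamma_0,\ \phi(1)\otimes\phi(d)\otimes\phi(x)\otimes \mu_{m}(d,x)\rangle_{\mathcal{H}}$, where $\mu_{m}(d,x)=\int \phi(m)\,\mathrm{d}\mathbb{P}(m|d,x)=E_4^*\{\phi(d)\otimes\phi(x)\}$; and the estimator from Algorithm~\ref{algorithm:planning} is the empirical analogue $\hat{\omega}(1,d;x)=\langle \hat{\gamma},\ \phi(1)\otimes\phi(d)\otimes\phi(x)\otimes \hat{\mu}_{m}(d,x)\rangle_{\mathcal{H}}$, where $\hat{\gamma}$ is the kernel ridge regression of Lemma~\ref{theorem:regression} and $\hat{\mu}_{m}(d,x)$ is the conditional mean embedding of Lemma~\ref{theorem:conditional} applied with $\mathcal{A}_4=\mathcal{M}$, $\mathcal{B}_4=\mathcal{D}\times\mathcal{X}$ (so $\kappa_a=\kappa_m$, $\kappa_b=\kappa_d\kappa_x$, as in Remark~\ref{remark:3}).

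Next I would add and subtract $\mu_{m}(d,x)$ and $\gamma_0$ to obtain the three-term decomposition
\begin{align*}
\hat{\omega}(1,d;x)-\omega_0(1,d;x)
&=\langle \hat{\gamma}-\gamma_0,\ \phi(1)\otimes\phi(d)\otimes\phi(x)\otimes[\hat{\mu}_{m}(d,x)-\mu_{m}(d,x)]\rangle_{\mathcal{H}}\\
&\quad +\langle \gamma_0,\ \phi(1)\otimes\phi(d)\otimes\phi(x)\otimes[\hat{\mu}_{m}(d,x)-\mu_{m}(d,x)]\rangle_{\mathcal{H}}\\
&\quad +\langle \hat{\gamma}-\gamma_0,\ \phi(1)\otimes\phi(d)\otimes\phi(x)\otimes \mu_{m}(d,x)\rangle_{\mathcal{H}}.
\end{align*}
Applying Cauchy--Schwarz to each term, I bound the feature norms by $\kappa_s,\kappa_d,\kappa_x$ (Assumption~\ref{assumption:RKHS}), note $\|\gamma_0\|_{\mathcal{H}}<\infty$ (finite by Assumption~\ref{assumption:smooth_gamma}), and bound $\|\mu_{m}(d,x)\|_{\mathcal{H}_{\mathcal{M}}}\le\int\|\phi(m)\|_{\mathcal{H}_{\mathcal{M}}}\,\mathrm{d}\mathbb{P}(m|d,x)\le\kappa_m$ by Jensen. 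On the intersection of the events of Lemmas~\ref{theorem:regression} and~\ref{theorem:conditional} (each of probability at least $1-\delta$, hence at least $1-2\delta$ jointly), $\|\hat{\gamma}-\gamma_0\|_{\mathcal{H}}\le r_{\gamma}(n,\delta,c)$ and $\sup_{d,x}\|\hat{\mu}_{m}(d,x)-\mu_{m}(d,x)\|_{\mathcal{H}_{\mathcal{M}}}\le r^{ATE}_{\mu}(n,\delta,c_4)$; the latter is uniform in $(d,x)$ because the bound in Lemma~\ref{theorem:conditional} does not depend on the conditioning value. Summing the three bounds gives exactly $r^{ATE}_{\omega,\infty}(n,\delta,c,c_4)$, and since $r_{\gamma}=O(n^{-\frac12\frac{c-1}{c+1}})$ and $r^{ATE}_{\mu}=O(n^{-\frac12\frac{c_4-1}{c_4+1}})$ (so the cross term is higher order), the stated asymptotic rate follows.

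The step I expect to matter most is the observation that one does \emph{not} need a genuine sup-norm regression rate ($r_{\gamma,\infty}$) here---only the $\mathcal{H}$-norm rate $r_{\gamma}$---because the RKHS inner-product representation lets Cauchy--Schwarz convert the $\mathcal{H}$-norm error on $\hat{\gamma}$ and the $\mathcal{H}_{\mathcal{M}}$-norm error on $\hat{\mu}_{m}$ into a pointwise bound, with the supremum over $(d,x)$ handled entirely by the uniform feature-norm bounds of Assumption~\ref{assumption:RKHS} together with the $(d,x)$-independence of the conditional mean embedding rate. Everything else is routine bookkeeping with Cauchy--Schwarz and a union bound.
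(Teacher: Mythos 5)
Your proposal is correct and follows essentially the same argument as the paper: the same three-term add-and-subtract decomposition of $\hat{\omega}(1,d;x)-\omega_0(1,d;x)$, Cauchy--Schwarz with the feature-norm bounds $\kappa_s\kappa_d\kappa_x$ and $\|\mu_m(d,x)\|_{\mathcal{H}_{\mathcal{M}}}\le\kappa_m$, and a union bound over the events of Lemmas~\ref{theorem:regression} and~\ref{theorem:conditional}, with the supremum over $(d,x)$ absorbed exactly as you describe. Your identification of the embedding spaces as $\mathcal{A}_4=\mathcal{M}$, $\mathcal{B}_4=\mathcal{D}\times\mathcal{X}$ matches Remark~\ref{remark:3} (the statement's ``$\mathcal{A}_4=\mathcal{X}$'' appears to be a typographical slip in the paper), so there is no substantive discrepancy.
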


\begin{proof}
I adapt the techniques of \cite[Proposition 6]{singh2021workshop} to the dynamic sample selection problem.

Fix $(s,d,x)$. Then
    \begin{align*}
        &\hat{\omega}(s,d;x)-\omega_0(s,d;x)\\
        &=\langle\hat{\gamma},\phi(s)\otimes \phi(d) \otimes \phi(x)\otimes \hat{\mu}_{m}(d,x)\rangle_{\mathcal{H}}
        -\langle \gamma_0,\phi(s)\otimes \phi(d) \otimes \phi(x)\otimes \mu_{m}(d,x) \rangle_{\mathcal{H}}  \\
        &=\langle\hat{\gamma},\phi(s)\otimes \phi(d) \otimes \phi(x)\otimes \{\hat{\mu}_{m}(d,x)-\mu_{m}(d,x)\}\rangle_{\mathcal{H}}
        \\
        &\quad +\langle (\hat{\gamma}-\gamma_0),\phi(s)\otimes \phi(d) \otimes \phi(x)\otimes \mu_{m}(d,x) \rangle_{\mathcal{H}} \\
        &=\langle(\hat{\gamma}-\gamma_0),\phi(s)\otimes \phi(d) \otimes \phi(x)\otimes \{\hat{\mu}_{m}(d,x)-\mu_{m}(d,x)\}\rangle_{\mathcal{H}}\\
        &\quad +\langle \gamma_0,\phi(s)\otimes \phi(d) \otimes \phi(x)\otimes \{\hat{\mu}_{m}(d,x)-\mu_{m}(d,x)\}\rangle_{\mathcal{H}}\\
        &\quad +\langle (\hat{\gamma}-\gamma_0),\phi(s)\otimes \phi(d) \otimes \phi(x)\otimes \mu_{m}(d,x) \rangle_{\mathcal{H}} .
    \end{align*}
    Therefore by Lemmas~\ref{theorem:regression} and~\ref{theorem:conditional}, with probability $1-2\delta$
    \begin{align*}
        &|\hat{\omega}(s,d;x)-\omega_0(s,d;x)| \\
        &\leq \|\hat{\gamma}-\gamma_0\|_{\mathcal{H}}\|\phi(s)\|_{\mathcal{H}_{\mathcal{S}}}\|\phi(d)\|_{\mathcal{H}_{\mathcal{D}}} \|\phi(x)\|_{\mathcal{H}_{\mathcal{X}}} \|\hat{\mu}_{m}(d,x)-\mu_{m}(d,x)\|_{\mathcal{H}_{\mathcal{M}}} \\
        &\quad + \|\gamma_0\|_{\mathcal{H}} \|\phi(s)\|_{\mathcal{H}_{\mathcal{S}}}\|\phi(d)\|_{\mathcal{H}_{\mathcal{D}}}\|\phi(x)\|_{\mathcal{H}_{\mathcal{X}}} \|\hat{\mu}_{m}(d,x)-\mu_{m}(d,x)\|_{\mathcal{H}_{\mathcal{M}}}  \\
        &\quad + \|\hat{\gamma}-\gamma_0\|_{\mathcal{H}}\|\phi(s)\|_{\mathcal{H}_{\mathcal{S}}}\|\phi(d)\|_{\mathcal{H}_{\mathcal{D}}} \|\phi(x)\|_{\mathcal{H}_{\mathcal{X}}} \|\mu_{m}(d,x)\|_{\mathcal{H}_{\mathcal{M}}} \\
        &\leq \kappa_s\kappa_d\kappa_x \cdot r_{\gamma}(n,\delta,c) \cdot r^{ATE}_{\mu}(n,\delta,c_4)+\kappa_s\kappa_d\kappa_x \cdot \|\gamma_0\|_{\mathcal{H}}\cdot  r^{ATE}_{\mu}(n,\delta,c_4)+\kappa_s\kappa_d\kappa_x\kappa_m\cdot  r_{\gamma}(n,\delta,c) \\
        &=O\left(n^{-\frac{1}{2}\frac{c-1}{c+1}}+n^{-\frac{1}{2}\frac{c_4-1}{c_4+1}}\right).
    \end{align*}
\end{proof}

\subsubsection{Mean square rate}

Observe that $(\pi_0,\rho_0)$ can be estimated by nonparametric regressions. I quote an abstract result for kernel ridge regression then specialize it for these nonparametric regressions. For the abstract result, write
\begin{align*}
    \gamma_0&=\argmin_{\gamma\in\mathcal{H}}\mathcal{E}(\gamma),\quad \mathcal{E}(\gamma)=\mathbb{E}[\{Y-\gamma(W)\}^2], \\
    \hat{\gamma}&=\argmin_{\gamma\in\mathcal{H}}\hat{\mathcal{E}}(\gamma),\quad \hat{\mathcal{E}}(\gamma)=\frac{1}{n}\sum_{i=1}^n\{Y_i-\gamma(W_i)\}^2+\lambda\|\gamma\|^2_{\mathcal{H}}.
\end{align*}

\begin{lemma}[Regression mean square rate; Theorems 1 and 2 of \cite{caponnetto2007optimal}]\label{theorem:mse}
Suppose Assumptions~\ref{assumption:RKHS},~\ref{assumption:original}, and~\ref{assumption:smooth_gamma} hold. Suppose $\lambda_j(k_{\mathcal{W}})\asymp j^{-b}$. Calibrate the ridge regularization sequence such that $$
\lambda=\begin{cases} n^{-\frac{1}{2}} & \text{ if }\quad b=\infty; \\
n^{-\frac{b}{bc+1}} & \text{ if }\quad b\in(1,\infty),\; c\in(1,2]; \\
\ln^{\frac{b}{b+1}}(n)\cdot n^{-\frac{b}{b+1}} & \text{ if }\quad b\in(1,\infty),\; c=1.
\end{cases}
$$
Then the kernel ridge regression $\hat{\gamma}$ trained on $(W_i)_{i=1}^n$ satisfies
$$
\lim_{\tau\rightarrow \infty} \lim_{n\rightarrow\infty} \sup_{P\in\mathcal{P}(b,c)} \mathbb{P}_{(W_i)\sim P^{n}}(\|\hat{\gamma}-\gamma_0\|_2^2>\tau \cdot  r^2_{\gamma,2}(n,b,c))=0,
$$
where $\|\hat{\gamma}-\gamma_0\|_2^2=\mathbb{E}[\{\hat{\gamma}(W)-\gamma_0(W)\}^2|(W_i)_{i=1}^n]$ and
$$
r^2_{\gamma,2}(n,b,c)=\begin{cases} n^{-1} & \text{ if }\quad b=\infty; \\
n^{-\frac{bc}{bc+1}} & \text{ if }\quad b\in(1,\infty),\; c\in(1,2]; \\
\ln^{\frac{b}{b+1}}(n)\cdot n^{-\frac{b}{b+1}} & \text{ if }\quad b\in(1,\infty),\; c=1.
\end{cases}
$$
Moreover, the rate is optimal when $b\in(1,\infty)$ and $c\in(1,2]$. It is optimal up to a logarithmic factor when $b\in(1,\infty)$ and $c=1$.
\end{lemma}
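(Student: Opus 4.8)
The plan is to obtain this statement as a corollary of the upper bound of \cite{caponnetto2007optimal} (their Theorem~1) and the matching minimax lower bound (their Theorem~2), after checking that Assumptions~\ref{assumption:RKHS},~\ref{assumption:original}, and~\ref{assumption:smooth_gamma}, together with $\lambda_j(k_{\mathcal{W}})\asymp j^{-b}$, instantiate their abstract hypotheses. The estimator $\hat\gamma$ in the statement is literally the Tikhonov-regularized least-squares estimator they analyze, so no reformulation is needed; it suffices to match structural conditions. First I would record the dictionary: boundedness of $k_{\mathcal{W}}$ (Assumption~\ref{assumption:RKHS}) supplies their kernel boundedness and the required measurability and separability; $|Y|\le C$ (Assumption~\ref{assumption:original}) trivially verifies their Bennett-type moment condition on the conditional output law; the source condition $\gamma_0\in\mathcal{H}^c$ of~\eqref{eq:prior} is equivalent, via the eigenexpansion in Section~\ref{sec:rkhs_main}, to $\gamma_0\in\operatorname{Range}(L^{c/2})$ with controlled preimage norm, i.e. their prior class with smoothness parameter $c$; and $\lambda_j(k_{\mathcal{W}})\asymp j^{-b}$ is exactly their eigenvalue-decay hypothesis, which yields the effective-dimension bound $\mathcal{N}(\lambda):=\sum_j \lambda_j/(\lambda_j+\lambda)=O(\lambda^{-1/b})$. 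With these in place, balancing a squared bias of order $\lambda^{c}$ against a variance of order $\mathcal{N}(\lambda)/n$ along the stated sequence $\lambda$ reproduces the claimed rate $r_{\gamma,2}^2(n,b,c)$, and their Theorem~2 over $\mathcal{P}(b,c)$ gives optimality whenever $b\in(1,\infty)$ and $c\in(1,2]$.

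Second I would treat the two boundary regimes, which lie outside the ``generic'' case. For $c=1$ (no smoothness beyond $\gamma_0\in\mathcal{H}$), the bias estimate improves only by a logarithmic factor, which produces the extra $\ln^{b/(b+1)}(n)$; I would quote the corresponding endpoint case of \cite{caponnetto2007optimal}, where optimality holds up to the logarithm. For $b=\infty$ (eigenvalues decaying faster than any polynomial, as for a Gaussian kernel on a bounded set), the effective dimension $\mathcal{N}(\lambda)$ is bounded, up to logarithms, uniformly in $\lambda$, so the variance is $O(1/n)$ for essentially any admissible $\lambda$; choosing $\lambda\asymp n^{-1/2}$ makes the squared bias $\lambda^{c}\le n^{-1/2}$ negligible against the sampling fluctuation and delivers the parametric rate $n^{-1}$. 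This case cannot be read off by taking $b\to\infty$ in the polynomial bound, since that limit yields $\lambda\asymp n^{-1/c}$ rather than $n^{-1/2}$, so it needs the short direct argument just sketched.

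Finally I would record the specialization to the regressions used in the semiparametric analysis: applying the lemma with $(W,Y)=(X,D)$ gives the rate for $\hat\pi$ with constant $\kappa_x$, decay $b_6$, and smoothness $c_6$, while $(W,Y)=((D,X,M),S)$ gives the rate for $\hat\rho$ with constant $\kappa_d\kappa_x\kappa_m$, decay $b_7$, and smoothness $c_7$; because $D,S\in\{0,1\}$, the output-boundedness hypothesis of Assumption~\ref{assumption:original} is automatic, and a tensor-product kernel inherits boundedness and measurability from its factors, so all hypotheses hold. The main obstacle is not the generic rate, which is a verbatim citation once the dictionary is in place, but making the two boundary cases --- $c=1$ and especially $b=\infty$ --- fit cleanly, since the quoted theorems are phrased for finite $b$ and an interior source exponent; the $b=\infty$ parametric regime in particular must be argued by a direct effective-dimension computation rather than quoted.
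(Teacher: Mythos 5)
Your proposal is correct and matches the paper's treatment: the paper offers no proof of this lemma, quoting it verbatim as Theorems 1 and 2 of \cite{caponnetto2007optimal}, and your argument is exactly that citation plus the dictionary verifying that Assumptions~\ref{assumption:RKHS},~\ref{assumption:original},~\ref{assumption:smooth_gamma} and the eigenvalue decay instantiate their hypotheses (boundedness, the Bernstein/Bennett moment condition via $|Y|\leq C$, the source condition $\gamma_0\in\operatorname{Range}(L^{c/2})$, and the effective-dimension bound $\mathcal{N}(\lambda)=O(\lambda^{-1/b})$). Your extra care on the boundary regimes is fine, though the $b=\infty$ case is in fact already covered by the cited prior class $\mathcal{P}(b,c)$ with $b=\infty$, so your direct effective-dimension argument there is a harmless redundancy rather than a needed repair.
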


\begin{assumption}[Mean square continuity]\label{assumption:cont}
Assume that there exists $Q<\infty$ such that, for any function $\gamma \in \Gamma$, where $\Gamma$ is a subset of $L^2$,
$$
\int m(s,d,x;\gamma)^2 \mathrm{d}\mathbb{P}(s,d,x) \leq Q \int \gamma(s,d,x)^2 \mathrm{d}\mathbb{P}(d,x).
$$
For $\theta_0^{DS}$, I use $\tilde{\mathbb{P}}$. For $\theta_0^{CATE}$, I use $\gamma_0(s,d,v,x)=\mathbb{E}[SY|S=s,D=d,V=v,X=x]$.
\end{assumption}

\begin{proposition}[Verifying mean square continuity]\label{prop:cont}
Assumption~\ref{assumption:bounded_Riesz} implies Assumption~\ref{assumption:cont} for each functional in Theorem~\ref{theorem:dr_static}.
\end{proposition}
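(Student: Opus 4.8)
The plan is to verify the displayed inequality of Assumption~\ref{assumption:cont} directly for each of the four functionals $m$ catalogued in Theorem~\ref{theorem:dr_static}, using only part one of Assumption~\ref{assumption:bounded_Riesz} (the overlap bounds on the true propensities); part two, the uniform bound on the \emph{estimated} Riesz representer, is not needed here. In every case the functional is linear in $\gamma$ and has the form $m(1,D,X;\gamma)=\gamma(1,d^{\ast},X)\,\mathbbm{1}_{E}$ for a fixed treatment level $d^{\ast}$ and an event $E$ measurable with respect to the conditioning variables: $d^{\ast}=d$ and $E=\Omega$ for $\theta_0^{ATE}$ and $\theta_0^{DS}$; $d^{\ast}=d'$ and $E=\{D=d\}$ for $\theta_0^{ATT}$; $d^{\ast}=d$ and $E=\{V=v\}$ for $\theta_0^{CATE}$. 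Since $\mathbbm{1}_{E}\le 1$, it follows that $\int m(1,D,X;\gamma)^2\,\mathrm{d}\mathbb{P}\le \mathbb{E}[\gamma(1,d^{\ast},X)^2]$, so the task reduces to bounding $\mathbb{E}[\gamma(1,d^{\ast},X)^2]$ by a constant multiple of $\int\gamma(1,d,x)^2\,\mathrm{d}\mathbb{P}(d,x)$, the right-hand side of Assumption~\ref{assumption:cont}.

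The key step is an inverse-propensity reweighting identity. Under Assumption~\ref{assumption:selection_static} overlap guarantees $\mathbb{P}(S=1\mid d^{\ast},X)\,\mathbb{P}(d^{\ast}\mid X)>0$ almost surely and $\mathbb{E}\bigl[\tfrac{\mathbbm{1}_{S=1}\mathbbm{1}_{D=d^{\ast}}}{\mathbb{P}(S=1\mid d^{\ast},X)\,\mathbb{P}(d^{\ast}\mid X)}\,\big|\,X\bigr]=1$; combining this with the identity $\gamma(1,d^{\ast},X)=\gamma(1,D,X)$ on the event $\{D=d^{\ast}\}$ and with part one of Assumption~\ref{assumption:bounded_Riesz} to bound the denominator below by $\epsilon^{2}$ gives
\begin{align*}
\mathbb{E}[\gamma(1,d^{\ast},X)^2]
&=\mathbb{E}\!\left[\frac{\mathbbm{1}_{S=1}\mathbbm{1}_{D=d^{\ast}}}{\mathbb{P}(S=1\mid d^{\ast},X)\,\mathbb{P}(d^{\ast}\mid X)}\,\gamma(1,D,X)^2\right]
\le \frac{1}{\epsilon^{2}}\,\mathbb{E}[\gamma(1,D,X)^2]
=\frac{1}{\epsilon^{2}}\int\gamma(1,d,x)^2\,\mathrm{d}\mathbb{P}(d,x).
\end{align*}
This establishes Assumption~\ref{assumption:cont} with $Q=1/\epsilon^{2}$ for $\theta_0^{ATE}$, and for $\theta_0^{ATT}$ as well once one notes that the reweighting must be carried out at the level $d^{\ast}=d'$ at which $\gamma$ is evaluated (not at $d$), while the factor $\mathbb{P}(d\mid X)$ appearing in $\alpha_0^{ATT}$ is immaterial on the $m$-side. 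The case $\theta_0^{DS}$ is identical after replacing $\mathbb{P}$ by $\tilde{\mathbb{P}}$ throughout and invoking the ``likewise'' overlap bounds of Assumption~\ref{assumption:bounded_Riesz}, giving $Q=1/\tilde{\epsilon}^{2}$.

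The one point that requires care is $\theta_0^{CATE}$, where the indicator $\mathbbm{1}_{V=v}$ must be carried through the reweighting rather than discarded: conditioning on $\{V=v\}$ and $X$ one has $\mathbb{E}\bigl[\tfrac{\mathbbm{1}_{S=1}\mathbbm{1}_{D=d}}{\mathbb{P}(S=1\mid d,v,X)\,\mathbb{P}(d\mid v,X)}\,\big|\,V=v,X\bigr]=1$, so $\mathbb{E}[\gamma(1,d,v,X)^2\mathbbm{1}_{V=v}]\le\epsilon^{-2}\,\mathbb{E}[\gamma(1,D,V,X)^2]$ with \emph{no} lower bound required on the possibly small quantity $\mathbb{P}(V=v\mid X)$ --- the normalization by $\mathbb{P}(v)$ is deferred to the final estimand $\theta_0^{CATE}(d,v)=\theta_0^{(m)}/\mathbb{P}(v)$ rather than built into $Q$. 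I expect this bookkeeping --- matching the treatment (and subcovariate) level in the reweighting denominator to the point at which $\gamma$ is evaluated, and retaining the subpopulation indicator on both sides of the identity --- to be the only real subtlety; everything else is a one-line application of iterated expectations and the overlap lower bound, with $Q<\infty$ holding precisely because $\epsilon>0$ (resp. $\tilde{\epsilon}>0$) in Assumption~\ref{assumption:bounded_Riesz}.
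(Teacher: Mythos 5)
Your proposal is correct and matches the paper's own argument: the paper likewise applies the law of iterated expectations to rewrite $\mathbb{E}[\gamma(1,d,X)^2]$ as an inverse-propensity-weighted expectation of $\gamma(S,D,X)^2$ and bounds the denominator below by $\epsilon^2$ via part one of Assumption~\ref{assumption:bounded_Riesz}, yielding $Q=1/\epsilon^2$, with the remaining functionals handled as "similar extensions" of Theorem~\ref{theorem:dr_static}. Your explicit treatment of the ATT, DS, and CATE cases (in particular retaining the indicator $\mathbbm{1}_{V=v}$ so that no lower bound on $\mathbb{P}(v\mid X)$ is needed) spells out exactly what those extensions require.
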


\begin{proof}
The argument uses the same proof technique as Theorem~\ref{theorem:dr_static}. For $\theta_0^{ATE}$, observe that by law of iterated expectations
\begin{align*}
    \gamma(1,d,X)^2
&=\mathbb{E}[S\cdot \mathbbm{1}_{D=d}\cdot \gamma(S,D,X)^2|S=1,D=d,X] \\
&=\mathbb{E}\left[\frac{S\cdot \mathbbm{1}_{D=d}\cdot\gamma(S,D,X)^2}{\mathbb{P}(S=1,D=d|X)}\mid S=1,D=d,X\right]\mathbb{P}(S=1,D=d|X) \\
&=\mathbb{E}\left[\frac{S\cdot \mathbbm{1}_{D=d}\cdot \gamma(S,D,X)^2}{\mathbb{P}(S=1,D=d|X)}\mid X\right] \\
&=\mathbb{E}\left[\frac{S\cdot \mathbbm{1}_{D=d}\cdot \gamma(S,D,X)^2}{\mathbb{P}(S=1|d,X)\mathbb{P}(d|X)}\mid X\right]
\end{align*}
Hence by law of iterated expectation and Assumption~\ref{assumption:bounded_Riesz},
\begin{align*}
    \mathbb{E}[\gamma(1,d,X)^2 ] 
    &=\mathbb{E}\left[\mathbb{E}\left[\frac{S\cdot \mathbbm{1}_{D=d}\cdot \gamma(S,D,X)^2}{\mathbb{P}(S=1|d,X)\mathbb{P}(d|X)}\mid X\right]\right]\\
    &=\mathbb{E}\left[\frac{S\cdot \mathbbm{1}_{D=d}\cdot \gamma(S,D,X)^2}{\mathbb{P}(S=1|d,X)\mathbb{P}(d|X)}\right]\\
    &\leq \frac{1}{\epsilon^2}\mathbb{E}\left[\gamma(S,D,X)^2\right].
\end{align*}
The other arguments are similar extensions of Theorem~\ref{theorem:dr_static}.
\end{proof}

\begin{lemma}[Riesz representer mean square rate; Theorem 7.6 of \cite{singh2020kernel}]\label{theorem:mse_RR}
Suppose Assumptions~\ref{assumption:RKHS},~\ref{assumption:original},~\ref{assumption:smooth_Riesz},~\ref{assumption:spectral_Riesz}, and~\ref{assumption:cont} hold. Calibrate the ridge regularization sequence such that $$
\lambda_3=\begin{cases} n^{-\frac{1}{2}} & \text{ if }\quad b_3=\infty; \\
n^{-\frac{b_3}{b_3c_3+1}} & \text{ if }\quad b_3\in(1,\infty),\; c_3\in(1,2]; \\
\ln^{\frac{b_3}{b_3+1}}(n)\cdot n^{-\frac{b_3}{b_3+1}} & \text{ if }\quad b_3\in(1,\infty),\; c_3=1.
\end{cases}
$$
Then the kernel ridge Riesz representer $\hat{\alpha}^{(m)}$ trained on $(W_i)_{i=1}^n$ satisfies
$$
\lim_{\tau\rightarrow \infty} \lim_{n\rightarrow\infty} \sup_{P\in\mathcal{P}(b_3,c_3)} \mathbb{P}_{(W_i)\sim P^{n}}(\|\hat{\alpha}^{(m)}-\alpha^{(m)}_0\|_2^2>\tau \cdot  r^2_{\alpha,2}(n,b_3,c_3))=0,
$$
where $\|\hat{\alpha}^{(m)}-\alpha^{(m)}_0\|_2^2=\mathbb{E}[\{\hat{\alpha}^{(m)}(W)-\alpha^{(m)}_0(W)\}^2|(W_i)_{i=1}^n]$ and
$$
r^2_{\alpha,2}(n,b_3,c_3)=\begin{cases} n^{-1} & \text{ if }\quad b_3=\infty; \\
n^{-\frac{b_3c_3}{b_3c_3+1}} & \text{ if }\quad b_3\in(1,\infty),\; c_3\in(1,2]; \\
\ln^{\frac{b_3}{b_3+1}}(n)\cdot n^{-\frac{b_3}{b_3+1}} & \text{ if }\quad b_3\in(1,\infty),\; c_3=1.
\end{cases}
$$
\end{lemma}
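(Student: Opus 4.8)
The object $\alpha^{(m)}_0$ is the Riesz representer of the linear functional $\gamma \mapsto \mathbb{E}[m(1,D,X;\gamma)]$ with respect to the $L^2(\mathbb{P})$ inner product over complete cases; equivalently $\mathbb{E}[m(1,D,X;\gamma)] = \mathbb{E}[\alpha^{(m)}_0(S,D,X)\,\gamma(S,D,X)]$ for every $\gamma$, so that $\alpha^{(m)}_0$ is the population minimizer of $\alpha \mapsto \mathbb{E}[\alpha(S,D,X)^2 - 2\,m(1,D,X;\alpha)]$. The estimator $\hat{\alpha}^{(m)}$ produced by Algorithm~\ref{algorithm:Riesz} is exactly the kernel ridge minimizer of the empirical counterpart of this loss over the RKHS $\mathcal{H}$ with penalty $\lambda_3\|\alpha\|^2_{\mathcal{H}}$. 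The plan is to control $\|\hat{\alpha}^{(m)} - \alpha^{(m)}_0\|_2^2$ by the same bias--variance route used for kernel ridge regression in Lemma~\ref{theorem:mse}, with the cross term $m$ playing the role of the (unobserved) regression outcome.

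First I would split $\hat{\alpha}^{(m)} - \alpha^{(m)}_0 = (\hat{\alpha}^{(m)} - \alpha_{\lambda_3}) + (\alpha_{\lambda_3} - \alpha^{(m)}_0)$, where $\alpha_{\lambda_3}$ is the population ridge solution. The approximation error $\|\alpha_{\lambda_3} - \alpha^{(m)}_0\|_2$ is governed by the source condition $\alpha^{(m)}_0 \in \mathcal{H}^{c_3}$ from Assumption~\ref{assumption:smooth_Riesz}, giving a deterministic bias of order $\lambda_3^{c_3}$ in squared $L^2$ norm, exactly as in Lemma~\ref{prop:approx_error}. For the sampling error, the key difference from ordinary regression is that the empirical loss contains the random functionals $m(1,D_i,X_i;\cdot)$ rather than products with an observed outcome; here Assumption~\ref{assumption:cont} (mean square continuity), which Proposition~\ref{prop:cont} verifies for each functional in Theorem~\ref{theorem:dr_static} under the bounded-propensity Assumption~\ref{assumption:bounded_Riesz}, supplies the bound $\mathbb{E}[m(1,D,X;\alpha)^2] \le Q\,\mathbb{E}[\alpha^2]$, which is what lets one concentrate the empirical loss around its population value with variance proxy governed by the integral-operator eigenvalues $\lambda_j(k_\mathcal{S} k_\mathcal{D} k_\mathcal{X}) \asymp j^{-b_3}$ of Assumption~\ref{assumption:spectral_Riesz}; boundedness of kernels and $Y$ (Assumptions~\ref{assumption:RKHS},~\ref{assumption:original}) furnishes the almost sure bounds needed for a Bennett/Bernstein argument in $\mathcal{H}$ of the type behind Lemma~\ref{lemma:prob}.

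Combining, one gets $\|\hat{\alpha}^{(m)} - \alpha^{(m)}_0\|_2^2 \lesssim \lambda_3^{c_3} + n^{-1}\sum_{j}\min(1, \lambda_j/\lambda_3)$ up to logarithmic factors; evaluating the effective-dimension sum under $\lambda_j \asymp j^{-b_3}$ and then optimizing over $\lambda_3$ reproduces the three regimes in the statement: $n^{-1}$ when $b_3 = \infty$, $n^{-b_3 c_3/(b_3 c_3 + 1)}$ when $b_3 \in (1,\infty)$ and $c_3 \in (1,2]$, and the logarithmically corrected $n^{-b_3/(b_3+1)}$ when $c_3 = 1$. Minimax optimality follows since these match the optimal regression rates of Lemma~\ref{theorem:mse} and the Riesz representer problem is no easier than a corresponding regression problem. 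Equivalently, once Assumptions~\ref{assumption:RKHS},~\ref{assumption:original},~\ref{assumption:smooth_Riesz},~\ref{assumption:spectral_Riesz} and~\ref{assumption:cont} (the last via Proposition~\ref{prop:cont}) are in force, the conclusion is precisely Theorem 7.6 of \cite{singh2020kernel}, which may simply be quoted. I expect the main obstacle to be the sampling-error step: unlike in Lemma~\ref{theorem:mse} the target is defined implicitly through the functional $m$, so the concentration argument must be routed through mean square continuity, and obtaining the \emph{sharp} effective-dimension-dependent variance bound rather than the cruder $1/(n\lambda_3)$ bound requires the refined operator-norm analysis of Caponnetto--De Vito type rather than the elementary argument behind Lemmas~\ref{lemma:sampling}--\ref{prop:approx_error}.
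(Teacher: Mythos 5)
Your proposal is correct and matches the paper's treatment: the paper gives no independent proof of Lemma~\ref{theorem:mse_RR} but simply quotes Theorem 7.6 of \cite{singh2020kernel}, which is exactly your fallback. Your sketch (kernel ridge minimization of the empirical Riesz loss, source-condition bias of order $\lambda_3^{c_3}$, effective-dimension variance under Assumption~\ref{assumption:spectral_Riesz}, with Assumption~\ref{assumption:cont} ensuring the functional is bounded on $L^2$ so the representer exists and concentration goes through) is also the Caponnetto--De Vito style route used in that cited result, so nothing essential is missing.
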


\begin{remark}\label{remark:mse}
In various applications, $(b,c)$ vary.
\begin{enumerate}
  \item $\|\hat{\alpha}^{(m)}-\alpha^{(m)}_0\|_2\leq r^{(m)}_{\alpha,2}(n,b_3,c_3)$.
  \item $\|\hat{\pi}-\pi_0\|_2\leq r^{ATE}_{\pi,2}(n,b_6,c_6)$.
        \item $\|\hat{\rho}-\rho_0\|_2\leq r^{ATE}_{\rho,2}(n,b_7,c_7)$.
\end{enumerate}
\end{remark}

Recall that $b=\infty$ is the finite dimensional regime; $b\in(1,\infty),\; c\in(1,2]$ is the infinite dimensional regime with polynomial spectral decay and additional smoothness; $b\in(1,\infty),\; c=1$ is the infinite dimensional regime with polynomial spectral decay and no additional smoothness.

\subsection{Main results}

Appealing to Remark~\ref{remark:unif}, Proposition~\ref{prop:unif_omega}, and Remark~\ref{remark:mse}, I now prove inference for (i) static sample selection and (ii) dynamic sample selection. I build on the theory of debiased machine learning.

\subsubsection{Static sample selection}

Fix $d\in\{0,1\}$. Slightly abusing notation, write
\begin{align*}
    \|\hat{\gamma}-\gamma_0\|_2&=\left(\mathbb{E}[\{\hat{\gamma}_{\ell}(S,D,X)-\gamma_0(S,D,X)\}^2|(W_i)_{i\in I^c_{\ell}}]\right)^{\frac{1}{2}}, \\
    \|\hat{\alpha}^{(m)}-\alpha^{(m)}_0\|_2&=\left(\mathbb{E}[\{\hat{\alpha}^{(m)}_{\ell}(S,D,X)-\alpha^{(m)}_0(S,D,X)\}^2|(W_i)_{i\in I^c_{\ell}}]\right)^{\frac{1}{2}}.
\end{align*}

\begin{lemma}[Gaussian approximation; Theorem 5.1 of \cite{chernozhukov2021simple}]\label{thm:dml}Suppose Assumption~\ref{assumption:cont} holds,
$$
\int \{y-\gamma_0(w)\}^2\mathrm{d}\mathbb{P}(y|w) \leq \bar{\sigma}^2,\quad \|\alpha_0^{(m)}\|_{\infty}\leq\bar{\alpha}.
$$
Then with probability $1-\epsilon$, the DML meta algorithm for the abstract treatment effect in Algorithm~\ref{algorithm:static_dml} (with any choice of regression and Riesz representer estimators) has a finite sample Gaussian approximation given by
$$
\sup_{z\in\mathbb{R}} \left|\mathbb{P}\left(\frac{\sqrt{n}}{\sigma}(\hat{\theta}^{(m)}-\theta^{(m)}_0)\leq z\right)-\Phi(z)\right|\leq c^{BE}\left(\frac{\eta}{\sigma}\right)^3 n^{-\frac{1}{2}}+\frac{\Delta}{\sqrt{2\pi}}+\epsilon
$$
where $\Phi(z)$ is the standard Gaussian cumulative distribution function and
$$
\Delta=\frac{3 L}{\epsilon\cdot  \sigma}\left\{(\sqrt{Q}+\bar{\alpha})\|\hat{\gamma}_{\ell}-\gamma_0\|_2+\bar{\sigma}\|\hat{\alpha}^{(m)}-\alpha^{(m)}_0\|_2+\sqrt{n} \|\hat{\gamma}_{\ell}-\gamma_0\|_2\cdot \|\hat{\alpha}^{(m)}-\alpha^{(m)}_0\|_2\right\}.
$$
\end{lemma}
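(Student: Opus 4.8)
The plan is to recognize that this statement is the specialization of the abstract debiased-machine-learning Berry--Esseen theorem to the static sample-selection moment, so the work divides into verifying the structural hypotheses of the abstract theorem and then recalling the oracle-plus-remainder decomposition that produces the explicit bound. For the hypotheses: the doubly robust moment identity of Theorem~\ref{theorem:dr_static} supplies both the linearity of $\gamma\mapsto\mathbb{E}[m(1,D,X;\gamma)]$ and its Riesz representer $\alpha^{(m)}_0$, which together are exactly Neyman orthogonality of the score $\psi^{(m)}_0$; Proposition~\ref{prop:cont}, via Assumption~\ref{assumption:bounded_Riesz}, supplies mean-square continuity with constant $Q$ (Assumption~\ref{assumption:cont}); and the two displayed moment conditions furnish the bounded conditional residual variance $\bar{\sigma}^2$ and the bounded representer $\bar{\alpha}$. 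The cross-fitting required by the theorem is built into Algorithm~\ref{algorithm:static_dml}.

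With these in hand, fix $d\in\{0,1\}$ and write the studentized error as an oracle term plus a remainder, $\frac{\sqrt{n}}{\sigma}(\hat{\theta}^{(m)}-\theta^{(m)}_0)=Z_n+R_n$, where $Z_n:=(\sigma\sqrt{n})^{-1}\sum_i\psi^{(m)}_0(W_i)$. Using linearity of $m$ in $\gamma$ and the per-fold nuisance errors $\Delta\gamma_{\ell}=\hat{\gamma}_{\ell}-\gamma_0$ and $\Delta\alpha_{\ell}=\hat{\alpha}^{(m)}_{\ell}-\alpha^{(m)}_0$, each in-fold summand of $R_n$ splits into a term $m(1,D_i,X_i;\Delta\gamma_{\ell})-\alpha^{(m)}_0(W_i)\Delta\gamma_{\ell}(1,D_i,X_i)$ that is linear in $\Delta\gamma_{\ell}$, a term $\Delta\alpha_{\ell}(W_i)\{S_iY_i-\gamma_0(1,D_i,X_i)\}$ linear in $\Delta\alpha_{\ell}$, and the bilinear product $-\Delta\alpha_{\ell}(W_i)\Delta\gamma_{\ell}(1,D_i,X_i)$. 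Conditioning on the training fold $I_{\ell}^c$ freezes the nuisance estimates and leaves $\{W_i\}_{i\in I_{\ell}}$ i.i.d.\ and independent of them; the Riesz identity of Theorem~\ref{theorem:dr_static} forces the first term to have conditional mean zero, and the support of $\alpha^{(m)}_0$ on $\{S=1\}$, where $\mathbb{E}[SY-\gamma_0(1,D,X)\mid S=1,D,X]=0$, forces the second to have conditional mean zero as well.

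The quantitative bounds then follow term by term: the fold-$\ell$ average of the first term is conditionally centered with conditional variance at most $(\sqrt{Q}+\bar{\alpha})^2\|\Delta\gamma_{\ell}\|_2^2$ by mean-square continuity and $\|\alpha^{(m)}_0\|_{\infty}\le\bar{\alpha}$; the second has conditional variance at most $\bar{\sigma}^2\|\Delta\alpha_{\ell}\|_2^2$; and the bilinear term has conditional mean bounded by $\|\Delta\gamma_{\ell}\|_2\|\Delta\alpha_{\ell}\|_2$ via Cauchy--Schwarz, with negligible fluctuation. Rescaling by $\sqrt{n}/\sigma$, applying Chebyshev within each fold and a union bound over the $L$ folds and three components (each at level $\epsilon/L$) bounds $|R_n|$ by $\Delta$ on an event of probability at least $1-\epsilon$, which is precisely the origin of the prefactor $3L/(\epsilon\sigma)$ and the three summands in $\Delta$. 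Meanwhile $Z_n$ is a normalized i.i.d.\ sum of unit variance and third absolute moment $(\eta/\sigma)^3$, so the classical Berry--Esseen inequality gives $\sup_z|\mathbb{P}(Z_n\le z)-\Phi(z)|\le c^{BE}(\eta/\sigma)^3 n^{-1/2}$. Combining through the sandwich $\mathbb{P}(Z_n+R_n\le z)\le\mathbb{P}(Z_n\le z+t)+\mathbb{P}(R_n<-t)$ with $t$ of order $\Delta$, together with $\Phi(z+t)-\Phi(z)\le t/\sqrt{2\pi}$ and the symmetric lower bound, yields the stated Kolmogorov bound with summands $c^{BE}(\eta/\sigma)^3 n^{-1/2}$, $\Delta/\sqrt{2\pi}$, and $\epsilon$.

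The main obstacle is the orthogonality cancellation: establishing that \emph{both} first-order remainder terms are conditionally mean-zero. This leans on the Riesz-representation identity of Theorem~\ref{theorem:dr_static} and on the delicate point that the residual $SY-\gamma_0(1,D,X)$ vanishes in conditional mean only on the selected event $\{S=1\}$, so one must exploit that $\alpha^{(m)}_0$ and its estimator are supported there; getting this right for each of the four functionals (ATE, DS, ATT, CATE) is where the care lies. The remaining work is bookkeeping---propagating the high-probability $L^2$ nuisance rates into the conditional-variance bounds and managing the fold-wise conditioning so that the required independence is genuinely available.
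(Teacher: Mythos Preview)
The paper does not prove this lemma at all; it is quoted verbatim from \cite{chernozhukov2021simple} and invoked as a black box in the proof of Theorem~\ref{theorem:inference_static}, where the only work done is to verify the rate conditions that make $\Delta\to 0$. Your sketch therefore goes well beyond what the paper does, and it is a faithful reconstruction of the argument in the cited reference: the oracle-plus-remainder split $\sqrt{n}(\hat{\theta}^{(m)}-\theta^{(m)}_0)/\sigma=Z_n+R_n$, the three-way decomposition of each in-fold summand of $R_n$ into the piece linear in $\Delta\gamma_{\ell}$, the piece linear in $\Delta\alpha_{\ell}$, and the bilinear piece, conditional mean-zero of the two linear pieces from the Riesz identity and the regression equation, Markov/Chebyshev within folds together with a union bound over the $L$ folds yielding the $3L/(\epsilon\sigma)$ prefactor, Berry--Esseen on $Z_n$, and the Lipschitz bound $\Phi(z+t)-\Phi(z)\le t/\sqrt{2\pi}$ to glue the two pieces.

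One bookkeeping remark on your orthogonality step. You argue that $\Delta\alpha_{\ell}(W_i)\{S_iY_i-\gamma_0(1,D_i,X_i)\}$ is conditionally mean zero by invoking the support of $\alpha^{(m)}_0$ on $\{S=1\}$; that is true for the population representer but not for the kernel ridge estimator $\hat{\alpha}^{(m)}_{\ell}$ of Algorithm~\ref{algorithm:Riesz}, which need not vanish at $S=0$. The cleaner route, and the one the abstract theorem actually uses, is to treat $SY$ as the outcome and $W=(S,D,X)$ as the regressor, so that $\gamma_0(S,D,X)=\mathbb{E}[SY\mid S,D,X]$ with $\gamma_0(0,d,x)=0$; then $SY-\gamma_0(S,D,X)$ is the genuine regression residual and has conditional mean zero given $W$ for \emph{any} multiplier $\Delta\alpha_{\ell}(W)$, with no support argument needed. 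The discrepancy between $\gamma_0(1,D,X)$ in the displayed score and $\gamma_0(S,D,X)$ in this formulation is absorbed because $\gamma_0(0,d,x)=0$ makes them agree once multiplied by anything carrying a factor of $S$, and otherwise contributes only to the bilinear term.
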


\begin{lemma}[Variance estimation; Theorem 5.3 of \cite{chernozhukov2021simple}]\label{thm:var}
Suppose Assumption~\ref{assumption:cont} holds,
$$
\int \{y-\gamma_0(w)\}^2\mathrm{d}\mathbb{P}(y|w) \leq \bar{\sigma}^2,\quad \|\hat{\alpha}_{\ell}\|_{\infty}\leq\bar{\alpha}'.
$$
Then with probability $1-\epsilon'$, the DML meta algorithm for the variance in Algorithm~\ref{algorithm:static_dml} (with any choice of regression and Riesz representer estimators) has the finite sample bound given by
$$
|\hat{\sigma}^2-\sigma^2|\leq \Delta'+2\sqrt{\Delta'}(\sqrt{\Delta''}+\sigma)+\Delta''
$$
where
$$
    \Delta'=4(\hat{\theta}^{(m)}-\theta^{(m)}_0)^2+\frac{24 L}{\epsilon'}\{\{Q+(\bar{\alpha}')^2\}\|\hat{\gamma}-\gamma_0\|^2_2+\bar{\sigma}^2\|\hat{\alpha}^{(m)}-\alpha^{(m)}_0\|^2_2\},\quad 
    \Delta''=\sqrt{\frac{2}{\epsilon'}}\chi^2 n^{-\frac{1}{2}}.
$$
\end{lemma}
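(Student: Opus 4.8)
Because this lemma is quoted verbatim as Theorem 5.3 of \cite{chernozhukov2021simple}, a fully formal proof is deferred to that reference; here I sketch the self-contained argument. The plan is to view $\hat{\sigma}^2$ and $\sigma^2$ as squared $L^2$ norms and split the error into an \emph{oracle sampling fluctuation} and a \emph{nuisance estimation error}, which assemble into $\Delta''$ and $\Delta'$ respectively. Write the plug-in score $\hat{\psi}_i:=m(1,D_i,X_i;\hat{\gamma}_{\ell})+\hat{\alpha}^{(m)}_{\ell}(S_i,D_i,X_i)\{S_iY_i-\hat{\gamma}_{\ell}(1,D_i,X_i)\}-\hat{\theta}^{(m)}$ for $i\in I_{\ell}$, so that $\hat{\sigma}^2=\|\hat{\psi}\|_n^2$ in the empirical $L^2(\mathbb{P}_n)$ norm, and let $\tilde{\sigma}^2:=\|\psi_0\|_n^2$ be the oracle second moment obtained by substituting the true $(\gamma_0,\alpha^{(m)}_0,\theta^{(m)}_0)$ into the same expression. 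Abbreviating $A:=\|\hat{\psi}\|_n$ and $B:=\|\psi_0\|_n$, the chain $|\hat{\sigma}^2-\sigma^2|\le|A-B|(A+B)+|B^2-\sigma^2|$ together with $A+B\le 2B+|A-B|$ and $|A-B|\le\|\hat{\psi}-\psi_0\|_n$ reduces the claim to two bounds: $|B^2-\sigma^2|\le\Delta''$ and $\|\hat{\psi}-\psi_0\|_n^2\le\Delta'$. Indeed, feeding these into $|\hat{\sigma}^2-\sigma^2|\le 2\sqrt{\Delta'}\,B+\Delta'+\Delta''$ and $B\le\sqrt{\sigma^2+\Delta''}\le\sigma+\sqrt{\Delta''}$ produces exactly $\Delta'+2\sqrt{\Delta'}(\sqrt{\Delta''}+\sigma)+\Delta''$.

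For the oracle fluctuation, $\tilde{\sigma}^2=\tfrac1n\sum_i\psi_0(W_i)^2$ is a sample mean of nonnegative variables that depend on no estimated object, with mean $\sigma^2$ and variance at most $\mathbb{E}[\psi_0^4]=\chi^4$. Chebyshev's inequality then gives $\mathbb{P}(|\tilde{\sigma}^2-\sigma^2|>\Delta'')\le\chi^4/(n(\Delta'')^2)=\epsilon'/2$ for the stated $\Delta''=\sqrt{2/\epsilon'}\,\chi^2 n^{-\frac{1}{2}}$.

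The decisive step is the nuisance estimation bound. Conditioning on the out-of-fold data $(W_i)_{i\in I_{\ell}^c}$ freezes $(\hat{\gamma}_{\ell},\hat{\alpha}^{(m)}_{\ell})$, so the points of the evaluation fold $I_{\ell}$ act as fresh draws and the conditional mean of $\|\hat{\psi}-\psi_0\|_n^2$ coincides with the population $\|\cdot\|_2$ quantities on which the rates are stated---this is the role of cross-fitting. I would expand, via linearity of $\gamma\mapsto m(1,D,X;\gamma)$ and the algebra $\hat{\alpha}(SY-\hat{\gamma})-\alpha_0(SY-\gamma_0)=\hat{\alpha}(\gamma_0-\hat{\gamma})+(\hat{\alpha}-\alpha_0)(SY-\gamma_0)$, into four additive terms:
\begin{align*}
\hat{\psi}-\psi_0 &= m(1,D,X;\hat{\gamma}-\gamma_0)+\hat{\alpha}^{(m)}(\gamma_0-\hat{\gamma})+(\hat{\alpha}^{(m)}-\alpha^{(m)}_0)(SY-\gamma_0)-(\hat{\theta}^{(m)}-\theta^{(m)}_0).
\end{align*}
Bounding each in $\|\cdot\|_2$---the first by $\sqrt{Q}\,\|\hat{\gamma}-\gamma_0\|_2$ via mean square continuity (Assumption~\ref{assumption:cont}), the second by $\bar{\alpha}'\|\hat{\gamma}-\gamma_0\|_2$ via $\|\hat{\alpha}^{(m)}_{\ell}\|_{\infty}\le\bar{\alpha}'$, the third by $\bar{\sigma}\,\|\hat{\alpha}^{(m)}-\alpha^{(m)}_0\|_2$ via the conditional residual variance bound $\bar{\sigma}^2$ (again using that $\hat{\alpha}$ is frozen), and the last being the constant $\hat{\theta}^{(m)}-\theta^{(m)}_0$---and then applying $(\sum_{k=1}^4 t_k)^2\le 4\sum_k t_k^2$ gives a conditional mean at most $4\{(\hat{\theta}^{(m)}-\theta^{(m)}_0)^2+(Q+(\bar{\alpha}')^2)\|\hat{\gamma}-\gamma_0\|_2^2+\bar{\sigma}^2\|\hat{\alpha}^{(m)}-\alpha^{(m)}_0\|_2^2\}$. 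A conditional Markov inequality, a union bound over the $L$ folds, and collecting constants into $24L/\epsilon'$ then yield $\|\hat{\psi}-\psi_0\|_n^2\le\Delta'$ with probability at least $1-\epsilon'/2$; combining the two failure events gives the overall probability $1-\epsilon'$.

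The main obstacle is exactly this last step: one must legitimately trade the empirical evaluation-fold norm for the population $\|\cdot\|_2$ norm (which cross-fitting permits by conditioning), control the reweighted linear-functional term $m(1,D,X;\hat{\gamma}-\gamma_0)$ through mean square continuity rather than a naive supremum bound, and keep the coefficients $Q$, $(\bar{\alpha}')^2$, and $\bar{\sigma}^2$ attached to the correct nuisance error so that the final constants match $\Delta'$. The oracle fluctuation and the closing norm assembly are then routine.
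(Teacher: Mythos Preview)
The paper does not supply its own proof of this lemma; it is quoted verbatim as Theorem~5.3 of \cite{chernozhukov2021simple} and used as a black box in the proof of Theorem~\ref{theorem:inference_static}. There is therefore nothing in the paper to compare against, and you correctly identify the statement as a citation.

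Your reconstruction is structurally sound: the norm algebra $|A^2-B^2|\le|A-B|(2B+|A-B|)$ together with $B\le\sigma+\sqrt{\Delta''}$ does assemble the displayed bound from the two pieces $\|\hat\psi-\psi_0\|_n^2\le\Delta'$ and $|\tilde\sigma^2-\sigma^2|\le\Delta''$; Chebyshev with variance $\le\chi^4$ gives exactly $\Delta''=\sqrt{2/\epsilon'}\,\chi^2 n^{-1/2}$ at failure level $\epsilon'/2$; and the four-term expansion of $\hat\psi-\psi_0$ with cross-fitting to freeze the nuisances is the right mechanism for $\Delta'$. Two small points of care: first, the regression residual in the score is $SY-\hat\gamma(1,D,X)$, so the term you write as $\hat\alpha(\gamma_0-\hat\gamma)$ involves $\gamma$ evaluated at $s=1$ rather than at the random $S$, and passing to $\|\hat\gamma-\gamma_0\|_2$ (which is over $(S,D,X)$) strictly speaking requires another appeal to mean square continuity or the structure of $\hat\alpha$; second, your $(\sum_{k=1}^4 t_k)^2\le4\sum t_k^2$ followed by Markov at level $\epsilon'/(2L)$ delivers $8L/\epsilon'$ rather than $24L/\epsilon'$, so the precise constant comes from a slightly different grouping in the original reference. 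Neither affects the logic of the sketch.
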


\begin{proof}[Proof of Theorem~\ref{theorem:inference_static}]
I verify the conditions of Lemmas~\ref{thm:dml} and~\ref{thm:var} appealing to Lemmas~\ref{prop:unif} and~\ref{theorem:mse_RR}. In particular, I verify 
$$
(\sqrt{Q}+\bar{\alpha}/\sigma+\bar{\alpha}')\|\hat{\gamma}_{\ell}-\gamma_0\|_2\overset{p}{\rightarrow} 0,\quad \bar{\sigma}\|\hat{\alpha}^{(m)}-\alpha^{(m)}_0\|_2\overset{p}{\rightarrow} 0,\quad\sqrt{n}\|\hat{\gamma}_{\ell}-\gamma_0\|_2\cdot \|\hat{\alpha}^{(m)}-\alpha^{(m)}_0\|_2/\sigma \overset{p}{\rightarrow}0.
$$
\begin{enumerate}
    \item $\bar{\sigma}$. By Assumption~\ref{assumption:original}
    $$
\int \{y-\gamma_0(w)\}^2\mathrm{d}\mathbb{P}(y|w) \leq 2\int y^2 \mathrm{d}\mathbb{P}(y|w) +2\|\gamma_0\|_{\infty}^2\leq 2(C^2+\kappa^2_w\|\gamma_0\|^2_{\mathcal{H}}).
$$
    \item $\bar{\alpha}$. Since $\alpha_0^{(m)}\in\mathcal{H}$ by hypothesis,
    $$
    |\alpha^{(m)}_0(w)|=|\langle\alpha^{(m)}_0,\phi(w)\rangle_{\mathcal{H}}|\leq \kappa_w \|\alpha^{(m)}_0\|_{\mathcal{H}}.
    $$
    \item $\bar{\alpha}'$.  If I censor extreme evaluations of the kernel ridge Riesz representer at, say, $\bar{\alpha}$ as prescribed in Assumption~\ref{assumption:bounded_Riesz}, then $\|\hat{\alpha}^{(m)}_{\ell}\|_{\infty}\leq \bar{\alpha}'=\bar{\alpha}$.
    \item Relative rates. By Proposition~\ref{prop:cont}, $\sqrt{Q}<\infty$ is a constant. By arguments above, $\bar{\alpha}'=\bar{\alpha}<\infty$ and $\bar{\sigma}<\infty$ are also constants. Therefore the remaining conditions to verify are 
$$
\|\hat{\gamma}_{\ell}-\gamma_0\|_2\overset{p}{\rightarrow} 0,\quad \|\hat{\alpha}^{(m)}-\alpha^{(m)}_0\|_2\overset{p}{\rightarrow} 0,\quad\sqrt{n}\|\hat{\gamma}_{\ell}-\gamma_0\|_2\cdot \|\hat{\alpha}^{(m)}-\alpha^{(m)}_0\|_2 \overset{p}{\rightarrow}0.
$$ 
The first two conditions follow immediately from Lemmas~\ref{prop:unif} and~\ref{theorem:mse_RR}.

Finally I turn to the product rate condition. Let $b_3\in(1,\infty)$ and $c_3\in(1,2]$. In this case,
\begin{align*}
    \sqrt{n}\|\hat{\gamma}_{\ell}-\gamma_0\|_2\cdot \|\hat{\alpha}^{(m)}-\alpha^{(m)}_0\|_2 
    &\leq  \sqrt{n}\|\hat{\gamma}_{\ell}-\gamma_0\|_{\infty}\cdot \|\hat{\alpha}^{(m)}-\alpha^{(m)}_0\|_2  \\
    &= n^{\frac{1}{2}-\frac{1}{2}\frac{c-1}{c+1}-\frac{1}{2}\frac{b_3c_3}{b_3c_3+1}}.
\end{align*}
    Observe that
    $$
    \frac{1}{2}-\frac{1}{2}\frac{c-1}{c+1}-\frac{1}{2}\frac{b_3c_3}{b_3c_3+1}<0\iff \frac{c-1}{c+1}+\frac{b_3c_3}{b_3c_3+1}>1.
    $$
\end{enumerate}
\end{proof}

\begin{remark}[Double spectral robustness]
Alternatively, one could verify the rate conditions reversing the norms. For example,
 \begin{align*}
            \sqrt{n} \|\hat{\gamma}-\gamma_0\|_2 \cdot \|\hat{\alpha}^{(m)}-\alpha^{(m)}_0\|_2 
            &\leq \sqrt{n} \|\hat{\gamma}-\gamma_0\|_2 \cdot \|\hat{\alpha}^{(m)}-\alpha^{(m)}_0\|_{\infty}  \\
            &=\sqrt{n} \cdot r_{\gamma,2}(n,b,c)\cdot r^{(m)}_{\alpha,\infty}(n,c_3) \\
            &=n^{\frac{1}{2}-\frac{1}{2}\frac{bc}{bc+1}-\frac{1}{2}\frac{c_3-1}{c_3+1}},
        \end{align*}
        so I require
        $$
        \frac{1}{2}-\frac{1}{2}\frac{c_3-1}{c_3+1}-\frac{1}{2}\frac{bc}{bc+1}<0\iff \frac{c_3-1}{c_3+1}+\frac{bc}{bc+1}>1.
        $$
In such case, one could place the polynomial spectral decay assumption on the kernel for $\gamma_0$ rather than $\alpha^{(m)}_0$. Since only one of the two kernels in the product rate condition needs to have polynomial spectral decay, this amounts to double spectral robustness. Reversing the norms require a uniform rate on $\alpha^{(m)}_0$, which I conjecture to be $n^{-\frac{1}{2}\frac{c_3-1}{c_3+1}}$ and pose as a question for future work. 
\end{remark}

\subsubsection{Dynamic sample selection}

Fix $d\in\{0,1\}$. Slightly abusing notation, write
\begin{align*}
    \|\hat{\gamma}-\gamma_0\|_2&=\left(\mathbb{E}[\{\hat{\gamma}_{\ell}(S,D,X,M)-\gamma_0(S,D,X,M)\}^2|(W_i)_{i\in I^c_{\ell}}]\right)^{\frac{1}{2}}, \\
    \|\hat{\omega}-\omega_0\|_2&=\left(\mathbb{E}[\{\hat{\omega}_{\ell}(S,D;X)-\omega_0(S,D;X)\}^2|(W_i)_{i\in I^c_{\ell}}]\right)^{\frac{1}{2}}, \\
    \|\hat{\pi}-\pi_0\|_2&=\left(\mathbb{E}[\{\hat{\pi}_{\ell}(X)-\pi_0(X)\}^2|(W_i)_{i\in I^c_{\ell}}]\right)^{\frac{1}{2}}, \\
    \|\hat{\rho}-\rho_0\|_2&=\left(\mathbb{E}[\{\hat{\rho}_{\ell}(D,X,M)-\rho_0(D,X,M)\}^2|(W_i)_{i\in I^c_{\ell}}]\right)^{\frac{1}{2}}.
\end{align*}

\begin{lemma}[Gaussian approximation for dynamic sample selection; Theorem 4 of \cite{bia2020double}]\label{thm:dml_planning}
Suppose Assumptions~\ref{assumption:selection_planning},~\ref{assumption:original},~\ref{assumption:resid_planning}, and~\ref{assumption:bounded_propensity_planning} hold. Further suppose there exists some $q>2$, $\epsilon_n\rightarrow 0$, and $\delta_n\rightarrow 0$ such that, with probability $1-\delta_n$,
\begin{enumerate}
    \item Nuisances have bounded moments: 
    $$\|\hat{\gamma}-\gamma_0\|_q ,\|\hat{\omega}-\omega_0\|_q,\|\hat{\pi}-\pi_0\|_q,\|\hat{\rho}-\rho_0\|_q\leq C.$$
    \item Nuisances have learning rates:
    $$\|\hat{\gamma}-\gamma_0\|_2 ,\|\hat{\omega}-\omega_0\|_2,\|\hat{\pi}-\pi_0\|_2,\|\hat{\rho}-\rho_0\|_2\leq \epsilon_n.$$
    \item These learning rates satisfy product rate conditions
    \begin{align*}
     \sqrt{n} \|\hat{\gamma}-\gamma_0\|_2 \cdot \|\hat{\pi}-\pi_0\|_2 &\leq \epsilon_n,  \\
            \sqrt{n} \|\hat{\gamma}-\gamma_0\|_2 \cdot\|\hat{\rho}-\rho_0\|_2 &\leq \epsilon_n, \\
           \sqrt{n} \|\hat{\omega}-\omega_0\|_2 \cdot \|\hat{\pi}-\pi_0\|_2  &\leq \epsilon_n.
        \end{align*}
\end{enumerate}
Then the meta-algorithm (with black-box nuisances) in Algorithm~\ref{algorithm:planning_dml} satisfies $\sqrt{n}\{\hat{\theta}^{ATE}(d)-\theta^{ATE}_0(d)\}\overset{d}{\rightarrow}\mathcal{N}\{0,\sigma^2(d)\}$ where
\begin{align*}
\sigma^2(d)
    &=\mathbb{E}\bigg(\bigg[
    \omega_0(1,d;X) \\
    &\quad + \frac{\1_{D=d}S}{\pi_0(d;X)\rho_0(1;d,X,M)}\{Y-\gamma_0(1,d,X,M)\} \\
    &\quad + \frac{\1_{D=d}}{\pi_0(d;X)}\left\{\gamma_0(1,d,X,M)-\omega_0(1,d;X)\right\}-\theta_0^{ATE}(d)\bigg]^2
\bigg).
\end{align*}
\end{lemma}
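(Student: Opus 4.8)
The plan is to run the standard debiased machine learning argument \cite{chernozhukov2018original,bia2020double,chernozhukov2021simple} on the multiply robust score of Lemma~\ref{theorem:dr_planning}, since the statement is precisely Theorem~4 of \cite{bia2020double} specialized to that score. Write $g(W;\eta)$ for the bracketed summand in Algorithm~\ref{algorithm:planning_dml} with nuisance tuple $\eta=(\gamma,\omega,\pi,\rho)$, so that $\theta_0^{ATE}(d)=\mathbb{E}[g(W;\eta_0)]$ and $\psi_0(W)=g(W;\eta_0)-\theta_0^{ATE}(d)$ is the influence function whose second moment is $\sigma^2(d)$; because $S^2=S$, the factor $S\{SY-\gamma_0\}$ equals $S\{Y-\gamma_0\}$, reconciling the score with the displayed $\sigma^2(d)$. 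First I would write $\hat{\theta}^{ATE}(d)=\frac{1}{n}\sum_{\ell}\sum_{i\in I_\ell}g(W_i;\hat{\eta}_\ell)$ and decompose
\[
\sqrt{n}\{\hat{\theta}^{ATE}(d)-\theta_0^{ATE}(d)\}=\mathcal{A}+\mathcal{B}+\mathcal{C},
\]
where $\mathcal{A}=n^{-1/2}\sum_{i}\psi_0(W_i)$ is the oracle term, $\mathcal{B}$ gathers the fold-wise centered empirical-process terms $g(W_i;\hat{\eta}_\ell)-g(W_i;\eta_0)-\mathbb{E}_\ell[g(W;\hat{\eta}_\ell)-g(W;\eta_0)]$ with $\mathbb{E}_\ell$ denoting expectation conditional on the complementary fold $I_\ell^c$, and $\mathcal{C}=\sqrt{n}\sum_\ell\frac{|I_\ell|}{n}\mathbb{E}_\ell[g(W;\hat{\eta}_\ell)-g(W;\eta_0)]$ is the bias term.

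For the oracle term $\mathcal{A}$ I would invoke a Lyapunov central limit theorem: Assumption~\ref{assumption:resid_planning} keeps $\sigma^2(d)$ bounded away from zero, while boundedness of $Y$ (Assumption~\ref{assumption:original}) and of the inverse propensities (Assumption~\ref{assumption:bounded_propensity_planning}) give a finite third absolute moment of $\psi_0$, so $\mathcal{A}\overset{d}{\rightarrow}\mathcal{N}\{0,\sigma^2(d)\}$. For $\mathcal{B}$, cross-fitting is the essential device: conditional on $I_\ell^c$ the summands are i.i.d.\ and mean zero, so the conditional second moment of $\mathcal{B}$ is bounded by $\mathbb{E}_\ell[\{g(W;\hat{\eta}_\ell)-g(W;\eta_0)\}^2]$, which the bounded-propensity condition controls by the aggregate squared error $\|\hat{\gamma}-\gamma_0\|_2^2+\|\hat{\omega}-\omega_0\|_2^2+\|\hat{\pi}-\pi_0\|_2^2+\|\hat{\rho}-\rho_0\|_2^2\rightarrow0$ supplied by the learning-rate hypothesis; hence $\mathcal{B}=o_p(1)$, with the bounded $L_q$ moments ($q>2$) converting conditional into unconditional convergence.

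The crux is the bias term $\mathcal{C}$. Here I would expand $t\mapsto\mathbb{E}_\ell[g(W;\eta_0+t(\hat{\eta}_\ell-\eta_0))]$ to second order about $t=0$. The first-order Gateaux derivative vanishes by Neyman orthogonality, which is exactly the multiple robustness established in Lemma~\ref{theorem:dr_planning}; this is the step that cancels the first-order plug-in bias and is the whole point of using the debiased score. The hard part will be verifying that the remaining bilinear remainder in the four nuisance perturbations collapses, for this particular score, to precisely the three cross-products $\|\hat{\gamma}-\gamma_0\|_2\|\hat{\pi}-\pi_0\|_2$, $\|\hat{\gamma}-\gamma_0\|_2\|\hat{\rho}-\rho_0\|_2$, and $\|\hat{\omega}-\omega_0\|_2\|\hat{\pi}-\pi_0\|_2$, with coefficients kept finite by the $\epsilon^{-1}$ bounds of Assumption~\ref{assumption:bounded_propensity_planning}; one must show in particular that a product such as $\|\hat{\gamma}-\gamma_0\|_2\|\hat{\omega}-\omega_0\|_2$ does not appear. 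Granting this, the three product-rate hypotheses force $\mathcal{C}=o_p(1)$, and Slutsky's theorem then delivers $\sqrt{n}\{\hat{\theta}^{ATE}(d)-\theta_0^{ATE}(d)\}=\mathcal{A}+o_p(1)\overset{d}{\rightarrow}\mathcal{N}\{0,\sigma^2(d)\}$. Since the statement is quoted verbatim from \cite{bia2020double}, the detailed second-order remainder bookkeeping can be imported from there.
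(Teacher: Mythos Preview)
The paper does not prove this lemma at all: it is stated as a direct quotation of Theorem~4 of \cite{bia2020double} and is then used as a black box in the proof of Theorem~\ref{theorem:inference_planning}. Your proposal is therefore not so much a different route as an attempt to supply a proof where the paper supplies none; the sketch you give is the standard DML decomposition into oracle, empirical-process, and bias terms, and this is indeed the argument that underlies the cited result. Your outline is sound and your identification of the key step---that the second-order remainder of the multiply robust score collapses to exactly the three cross-products in the hypothesis, with no $\|\hat{\gamma}-\gamma_0\|_2\|\hat{\omega}-\omega_0\|_2$ or $\|\hat{\omega}-\omega_0\|_2\|\hat{\rho}-\rho_0\|_2$ term---is correct and is precisely the computation carried out in \cite{bia2020double}. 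Since the paper simply imports the result, there is nothing further to compare.
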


\begin{proof}[Proof of Theorem~\ref{theorem:inference_planning}]
I verify the conditions of Lemma~\ref{thm:dml_planning}.
\begin{enumerate}
    \item Learning rates and bounded moments. By Jensen's inequality as well as Lemma~\ref{prop:unif} and Proposition~\ref{prop:unif_omega}, 
    \begin{enumerate}
        \item $\|\hat{\gamma}-\gamma_0\|_2 \leq \|\hat{\gamma}-\gamma_0\|_q \leq \|\hat{\gamma}-\gamma_0\|_{\infty}\leq r^{ATE}_{\gamma,\infty}(n,c)$;
        \item $\|\hat{\omega}-\omega_0\|_2\leq\|\hat{\omega}-\omega_0\|_q\leq \|\hat{\omega}-\omega_0\|_{\infty}\leq r^{ATE}_{\omega,\infty}(n,c,c_4)$;
        \item $\|\hat{\pi}-\pi_0\|_2\leq \|\hat{\pi}-\pi_0\|_q \leq \|\hat{\pi}-\pi_0\|_{\infty}\leq r^{ATE}_{\pi,\infty}(n,c_6)$;
        \item $\|\hat{\rho}-\rho_0\|_2 \leq \|\hat{\rho}-\rho_0\|_q \leq \|\hat{\rho}-\rho_0\|_{\infty}\leq r^{ATE}_{\rho,\infty}(n,c_7)$.
    \end{enumerate}
    \item Product rate conditions
     \begin{enumerate}
        \item $\sqrt{n} \|\hat{\gamma}-\gamma_0\|_2 \cdot \|\hat{\pi}-\pi_0\|_2$. By Lemmas~\ref{prop:unif} and~\ref{theorem:mse}, in the case that $b_6\in(1,\infty)$ and $c_6\in(1,2]$
        \begin{align*}
            \sqrt{n} \|\hat{\gamma}-\gamma_0\|_2 \cdot \|\hat{\pi}-\pi_0\|_2
            &\leq \sqrt{n} \|\hat{\gamma}-\gamma_0\|_{\infty} \cdot \|\hat{\pi}-\pi_0\|_2 \\
            &=\sqrt{n} \cdot r^{ATE}_{\gamma,\infty}(n,c)\cdot r^{ATE}_{\pi,2}(n,b_6,c_6) \\
            &=n^{\frac{1}{2}-\frac{1}{2}\frac{c-1}{c+1}-\frac{1}{2}\frac{b_6c_6}{b_6c_6+1}},
        \end{align*}
        so I require
        $$
        \frac{1}{2}-\frac{1}{2}\frac{c-1}{c+1}-\frac{1}{2}\frac{b_6c_6}{b_6c_6+1}<0\iff \frac{c-1}{c+1}+\frac{b_6c_6}{b_6c_6+1}>1.
        $$
        \item $\sqrt{n} \|\hat{\gamma}-\gamma_0\|_2 \cdot\|\hat{\rho}-\rho_0\|_2 $.  By Lemmas~\ref{prop:unif} and~\ref{theorem:mse}, in the case that $b_7\in(1,\infty)$ and $c_7\in(1,2]$
        \begin{align*}
            \sqrt{n} \|\hat{\gamma}-\gamma_0\|_2 \cdot \|\hat{\rho}-\rho_0\|_2
            &\leq \sqrt{n} \|\hat{\gamma}-\gamma_0\|_{\infty} \cdot \|\hat{\rho}-\rho_0\|_2 \\
            &=\sqrt{n} \cdot r^{ATE}_{\gamma,\infty}(n,c)\cdot r^{ATE}_{\rho,2}(n,b_7,c_7) \\
            &=n^{\frac{1}{2}-\frac{1}{2}\frac{c-1}{c+1}-\frac{1}{2}\frac{b_7c_7}{b_7c_7+1}},
        \end{align*}
        so I require
        $$
        \frac{1}{2}-\frac{1}{2}\frac{c-1}{c+1}-\frac{1}{2}\frac{b_7c_7}{b_7c_7+1}<0\iff \frac{c-1}{c+1}+\frac{b_7c_7}{b_7c_7+1}>1.
        $$
        \item $\sqrt{n} \|\hat{\omega}-\omega_0\|_2 \cdot \|\hat{\pi}-\pi_0\|_2 $. By Proposition~\ref{prop:unif_omega} and Lemma~\ref{theorem:mse}, in the case that $b_6\in(1,\infty)$ and $c_6\in(1,2]$
        \begin{align*}
            \sqrt{n} \|\hat{\omega}-\omega_0\|_2 \cdot \|\hat{\pi}-\pi_0\|_2
            &\leq \sqrt{n} \|\hat{\omega}-\omega_0\|_{\infty} \cdot \|\hat{\pi}-\pi_0\|_2 \\
            &=\sqrt{n} \cdot r^{ATE}_{\omega,\infty}(n,c,c_4)\cdot r^{ATE}_{\pi,2}(n,b_6,c_6) \\
            &=n^{\frac{1}{2}-\frac{1}{2}\frac{c_{\omega}-1}{c_{\omega}+1}-\frac{1}{2}\frac{b_6c_6}{b_6c_6+1}},
        \end{align*}
        where $c_{\omega}=\min(c,c_4)$ so I require
        $$
        \frac{1}{2}-\frac{1}{2}\frac{c_{\omega}-1}{c_{\omega}+1}-\frac{1}{2}\frac{b_6c_6}{b_6c_6+1}<0\iff \frac{c_{\omega}-1}{c_{\omega}+1}+\frac{b_6c_6}{b_6c_6+1}>1.
        $$
    \end{enumerate}
\end{enumerate}
Consistency of the variance estimator follows from \cite[Theorem 3.2]{chernozhukov2018original}.
\end{proof}

\begin{remark}[Double spectral robustness]
Alternatively, one could verify the rate conditions reversing the norms. For example,
 \begin{align*}
            \sqrt{n} \|\hat{\gamma}-\gamma_0\|_2 \cdot \|\hat{\pi}-\pi_0\|_2
            &\leq \sqrt{n} \|\hat{\gamma}-\gamma_0\|_2 \cdot \|\hat{\pi}-\pi_0\|_{\infty} \\
            &=\sqrt{n} \cdot r^{ATE}_{\gamma,2}(n,b,c)\cdot r^{ATE}_{\pi,\infty}(n,c_6) \\
            &=n^{\frac{1}{2}-\frac{1}{2}\frac{bc}{bc+1}-\frac{1}{2}\frac{c_6-1}{c_6+1}},
        \end{align*}
        so I require
        $$
        \frac{1}{2}-\frac{1}{2}\frac{c_6-1}{c_6+1}-\frac{1}{2}\frac{bc}{bc+1}<0\iff \frac{c_6-1}{c_6+1}+\frac{bc}{bc+1}>1.
        $$
In such case, one could place the polynomial spectral decay assumption on the kernel for $\gamma_0$ rather than $\pi_0$. Since only one of the two kernels in a product rate condition needs to have polynomial spectral decay, this amounts to double spectral robustness. Reversing the norms of the other product rate conditions would require a mean square rate on $\omega_0$, which I pose as a question for future work. 
\end{remark}
\section{Tuning}\label{section:tuning}

\subsection{Simplified setting}

I propose a family of novel estimators that are combinations of kernel ridge regressions. As such, the same two kinds of hyperparameters that arise in kernel ridge regressions arise in the estimators: ridge regression penalties and kernel hyperparameters. In this section, I recall practical tuning procedures for such hyperparameters. To simplify the discussion, I focus on the regression of $Y$ on $W$. Recall that the closed form solution of the regression estimator using all observations is
$
\hat{f}(w)=K_{wW}(K_{WW}+n\lambda I)^{-1}\mathbf{Y}.
$

\subsection{Ridge penalty}

It is convenient to tune $\lambda$ by leave-one-out cross validation (LOOCV), since the validation loss has a closed form solution.

\begin{algorithm}[Ridge penalty tuning; Algorithm F.1 of \cite{singh2020kernel}]
Construct the matrices
$$
H_{\lambda}:=I-K_{WW}(K_{WW}+n\lambda I)^{-1}\in\mathbb{R}^{n\times n},\quad \tilde{H}_{\lambda}:=diag(H_{\lambda})\in\mathbb{R}^{n\times n},
$$
where $\tilde{H}_{\lambda}$ has the same diagonal entries as $H_{\lambda}$ and off diagonal entries of 0. Then set
$$
\lambda^*=\argmin_{\lambda \in\Lambda} \frac{1}{n}\|\tilde{H}_{\lambda}^{-1}H_{\lambda} Y\|_2^2,\quad \Lambda\subset\mathbb{R}.
$$
\end{algorithm}

\subsection{Kernel}

The Gaussian kernel is the most popular kernel among machine learning practitioners:
$
k(w,w')=\exp\left(-\frac{1}{2}\frac{\|w-w'\|^2_{\mathcal{W}}}{\sigma^2}\right)
$.
Importantly, this kernel satisfies the required properties: it is continuous, bounded, and characteristic.

Observe that the Gaussian kernel has a hyperparameter: the \textit{lengthscale} $\sigma$. A convenient heuristic is to set the lengthscale equal to the median interpoint distance of $(W_i)^n_{i=1}$, where the interpoint distance between observations $i$ and $j$ is $\|W_i-W_j\|_{\mathcal{W}}$. When the input $W$ is multidimensional, I use the kernel obtained as the product of scalar kernels for each input dimension. For example, if $\mathcal{W}\subset \mathbb{R}^d$ then
$
k(w,w')=\prod_{j=1}^d \exp\left\{-\frac{1}{2}\frac{(w_j-w_j')^2}{\sigma_j^2}\right\}.
$
Each lengthscale $\sigma_j$ is set according to the median interpoint distance for that input dimension.

In principle, one could instead use LOOCV to tune kernel hyperparameters in the same way that I use LOOCV to tune ridge penalties. However, given the choice of product kernel, this approach becomes impractical in high dimensions. For example, $D\in\mathbb{R}$ and $X\in\mathbb{R}^{100}$ leads to a total of 101 lengthscales $(\sigma_j)$. Even with a closed form solution for LOOCV, searching over this high dimensional grid becomes cumbersome.

\end{document}